\newcolumntype{L}[1]{>{\raggedright\let\newline\\arraybackslash\hspace{0pt}}m{#1}}
\newcolumntype{C}[1]{>{\centering\let\newline\\arraybackslash\hspace{0pt}}m{#1}}
\newcolumntype{R}[1]{>{\raggedleft\let\newline\\arraybackslash\hspace{0pt}}m{#1}}
\newtheorem{lemma}{Lemma}
\newtheorem{proposition}{Proposition}
\newtheorem{assumption}{Assumption}
\newtheorem{algorithm}{Algorithm}
\newtheorem{example}{Example}
\newtheorem{simulation}{Simulation}
\providecommand{\keywords}[1]
{
  \small	
  \textbf{KEYWORDS:} #1
}
\title{\vspace{-1cm}{\Large  Identification–aware Markov Chain Monte Carlo}\footnote{We would like to thank Donald Andrews, Ben Deaner, Raffaella Giacomini, Frank Kleibergen, Yuan Liao, José Luis Montiel Olea, Kristoffer Nimark, Andriy Norets, Alexei Onatski, Mikkel Plagborg-Møller, Anna Simoni, Tomasz Woźniak, and Andrei Zeleneev for useful discussions. We also thank seminar participants at the University of Cambridge and conference participants at CFE-CMStatistics, Aarhus Econometrics Workshop, and SETA for their valuable feedback. All remaining errors are our own.}}
\author[1]{Toru Kitagawa}
\author[2]{Yizhou Kuang}
\affil[1]{Brown University}
\affil[2]{University of Manchester}
\date{\small This Version: December 18th, 2025}
\begin{document}
\maketitle
%\setstretch{1.5}
%\section*{Abstract}  
\vspace{-1cm} \begin{abstract}  Leaving posterior sensitivity concerns aside,  non-identifiability of the parameters does not raise a difficulty for Bayesian inference as far as the posterior is proper, but multi-modality or flat regions of the posterior induced by the lack of identification leaves a challenge for modern Bayesian computation. Sampling methods often struggle with slow or non-convergence when dealing with multiple modes or flat regions of the target distributions. This paper develops a novel Markov chain Monte Carlo (MCMC) approach for non-identified models, leveraging the knowledge of observationally equivalent sets of parameters, and highlights an important role that identification plays in modern Bayesian analysis.We show that our identification-aware proposal eliminates mode entrapment, achieving a convergence rate uniformly bounded away from zero, in sharp contrast to the exponentially decaying rates characterizing standard Random Walk Metropolis and Hamiltonian Monte Carlo. Simulation studies show its superior performance compared to other popular computational methods including Hamiltonian Monte Carlo and sequential Monte Carlo. We also demonstrate that our method uncovers non-trivial modes in the target distribution in a structural vector moving-average (SVMA) application.

\end{abstract}

\noindent\keywords{Markov chain Monte Carlo,  Bayesian inference, identified set, observational equivalence}
\clearpage
\begin{spacing}{1.5}
\section{Introduction} 
%%Background
Many Bayesian economists and statisticians are inclined to adopt  \cite{lindley1972bayesian}’s assertion that ``unidentifiability causes no real difficulty in the Bayesian approach,'' a view further supported by \cite{gelman2014identifiability}, who argues in his blog that ``the concept of identification is less important in the Bayesian world than elsewhere.'' In theory, if a well‐defined prior and likelihood yield a proper posterior, Bayesian inference can proceed without fundamental obstruction. However, in practice, modern applications often involve structural models with large parameter spaces, incomplete data, or weak identifying assumptions, leading to multi‐modal or flat posterior regions. 
Applied economists often proceed under the assumption that a chosen sampler, such as Metropolis–Hastings (MH), Hamiltonian Monte Carlo (HMC) or sequential Monte Carlo (SMC), will eventually reveal the relevant features of the posterior, as is guaranteed asymptotically under certain strong regularity conditions. 
However, real data sets are finite, computational budgets are limited, and posteriors are often analytically intractable, so the conditions required for fast convergence are difficult to verify and frequently fail.  As a result, even a ``proper'' posterior can pose significant practical challenges for effective sampling.

%%Issues with existing methods
Despite their widespread use in econometrics, popular algorithms such as MH, HMC, and SMC\footnote{Although SMC is not technically an MCMC method, the implementation by \cite{herbst2014sequential} applies a local MCMC step to each particle, integrating MCMC techniques into the SMC framework.} often struggle to explore complex posterior surfaces thoroughly.
These methods can fail to traverse low-density regions separating multiple modes, can suffer from slow mixing in high-dimensional spaces,  provide little guidance in flat or weakly identified regions. Several strategies have been proposed to alleviate these issues, including \textit{tempering} methods that flatten the target distribution to facilitate movement between modes, and \textit{mode-jumping} techniques that rely on either prior knowledge of mode locations or extensive presampling. Both approaches come with substantial costs: tempering often requires careful tuning of intermediate distributions and can remain computationally expensive in large-scale problems, while mode-jumping typically depends on substantial a priori information. Moreover, the challenges of multi-modality, high dimensionality, and flat or ill-conditioned likelihood regions compound one another, making it even harder to design robust, general-purpose samplers. As a result, determining how best to navigate unknown distributions that combine these features remains an open and active area of research.

%%Sellpoints and what we do
This paper argues that identification analysis is crucial even for Bayesians, and shows that explicit knowledge of observationally equivalent parameter values can be used to overcome the challenges of sampling from complex multi-modal posteriors. We propose a new ``identification–aware'' MCMC approach that exploits the knowledge or computability of the set of observationally equivalent parameters and can attain faster convergence than widely used MCMC algorithms. The key innovation is a \textit{teleportation} step: the chain moves directly to an observationally equivalent point in the state space, bypassing low-probability regions or distant valleys that incremental updates struggle to cross. By construction, this teleportation step preserves the target posterior while ensuring efficient exploration of flat regions, multiple modes, or disconnected neighborhoods. Our approach treats teleportation as a modular component that can be integrated into any Markov chain–based method, including MH, HMC, or the mutation steps in SMC. Our method combines teleportation with a standard MCMC-type transition (e.g., MH or HMC updates) to refine the position locally. This scheme leverages both directed global jumps and fine-tuned local adjustments, allowing the algorithm to balance global exploration with local optimization and to sample effectively from high-dimensional, multi-modal posteriors without requiring an exponentially increasing number of evaluations as dimension grows. The specific implementation depends on the structure of the observationally equivalent set, which may be finite, of varying cardinality, or a low-dimensional manifold. We provide theoretical guarantees that, in such settings, our identification–aware sampler improves convergence rates relative to standard MCMC approaches.

Simulation studies validate the superior performance of our proposal against popular computational methods. In multimodal mixture Gaussian settings, our algorithm successfully teleports across isolated modes where standard Random Walk Metropolis (RWM) consistently becomes trapped. Similarly, in set-identified scenarios, it achieves a more diffuse and representative exploration of the parameter space compared to SMC, which can exhibit clustering around initialization points. We further demonstrate the method's practical utility in Bayesian estimation for SVMA models as considered in \cite{plagborg2019bayesian}. As shown in \cite{lippi1994var}, SVMA models are not globally identified, with no closed-form expression for the impulse response identified-set. It makes the standard MCMC difficult to capture all the modes of the posterior distribution. In the complex environment with the number of parameters as many as 153, we demonstrate implementation of identification aware HMC algorithms and show that they can recover the dominant marginal modes found by the standard HMC, but also uncovers non-trivial high-density regions that remain elusive to the standard HMC. This suggests that incorporating teleportation step exploiting analytical characterization of identification of the model can help characterize the posterior landscape and provide a more robust and reliable numerical approximation for it.

\bigskip

\textbf{Related Literature:} There is a well-established literature in economics applying  MCMC methods for Bayesian estimation and inference, including \cite{chib1996markov}, \cite{an2007bayesian}, \cite{flury2011bayesian}, \cite{herbst2014sequential}, \cite{kline2016bayesian}, \cite{plagborg2019bayesian}, \cite{meager2022aggregating}, and \cite{muller2024locally}. Beyond the Bayesian framework, these methods are also employed in frequentist settings. For example, \cite{chernozhukov2003mcmc} uses standard MCMC to obtain minimizers of criterion functions for point-identified models, and \cite{chen2018monte} applies SMC to construct confidence sets for identified sets defined by either likelihood-based conditions or moment equalities and inequalities, following approaches similar to \cite{herbst2014sequential} and \cite{durham2014adaptive}.

Most of the literature mentioned above treats these computational methods primarily as tools for estimation and inference rather than focusing on their theoretical properties such as convergence rates, central limit theorems (CLT), or laws of large numbers (LLN). One notable exception is \cite{herbst2014sequential}, which provides some formal results, demonstrating that LLN and CLT can hold under certain conditions in their proposed SMC.

Outside of economics, a substantial literature has established theoretical guarantees for MCMC algorithms, but these results often rest on restrictive conditions that rarely hold in the complex settings typical of economic models. For example, classical mixing and convergence analyses for RWM (e.g. \cite{mengersen1996rates, roberts1996geometric, roberts2004general}) typically assume log-concave or strongly unimodal targets, compact support, and uniform minorization, conditions that do not capture high-dimensional, multi-modal, or weakly identified likelihoods. Although a few exceptions can tackle more complicated cases, including multimodality (e.g., \cite{guan2007small, mangoubi2018does}), the general applicability of these results to empirical economic problems remains unknown.  

Similar issues arise in the theoretical treatment of more sophisticated methods, such as SMC and HMC. While several works \citep{chopin2004central, del2012adaptive, huggins2015convergence, durmus2017convergence, durmus2020irreducibility, mangoubi2021mixing} offer insights into convergence rates, central limit theorems, and stability, they too rely on assumptions such as smoothness (e.g. Lipschitz-continuous gradients or bounded higher-order derivatives) and well-behaved high-dimensional scaling (e.g. controlled deterioration of error rates with dimension), assumptions that often fail in the heavily parameterized, data-driven models typical in econometrics.

For advanced algorithms explicitly designed to tackle multimodality, such as tempering-based samplers \citep{woodard2009sufficient} and mode-jumping methods \citep{zhou2011multi, pompe2020framework}, the theoretical underpinnings are even sparser. \cite{woodard2009sufficient} demonstrates that tempering can still mix slowly in certain multi-modal regimes. Meanwhile, mode-jumping approaches often rely on heuristics or prior knowledge, or on computationally expensive approximations of mode locations, providing little theoretical guarantee beyond basic ergodicity. As a result, despite significant progress in algorithm development and analysis, sampling efficiently and reliably from complex, unknown distributions, especially those with multiple modes, high dimensionality, or flat identified sets, remains a major open challenge.

To the best of our knowledge, only a handful of studies exploit identification information directly during sampling. 
For Bayesian estimation of finite mixture models, \cite{fruhwirth2006finite} and \cite{geweke2007interpretation} handle multimodality of the posterior by permuting or augmenting the observationally equivalent parameter values in a standard MCMC run. 
Our  identification–aware MCMC proposal includes their augmentation trick as a special case, while ours can cover more general cases in which cardinality of the observationally equivalent parameter values is infinite or varies over the parameter space.

Bayesian estimation for non-identified models has been studied extensively in the literature of structural vector autoregressions (SVAR). \cite{uhlig2005effects} and \cite{rubio2010structural} propose a Gibbs sampling algorithm with uniform draws of a non-identified orthonormal matrix. The step of drawing an orthonormal matrix can be viewed as a certain teleportation step in our framework. 
In the SVAR context, our proposal of  identification–aware MCMC can accommodate a non-uniform prior for the orthonormal matrix implied by a prior on the structural parameters along the proposal of \cite{baumeister2015sign}. 
Our analytical results on fast convergence of the  identification–aware MCMC to the target posterior highlight a benefit of having a step of drawing the orthonormal matrix instead of performing standard MCMC procedures directly on the structural parameter space.
For locally-identified SVAR, \cite{bacchiocchi2025locally} considers a transportation step in their Bayesian approach, while they do not investigate the convergence rate for approximating the posterior distribution. 

The  identification–aware proposal is designed to improve sampling efficiency in Bayesian inference, and should be viewed as complementary to the Bayesian sensitivity analysis to set-identified models.
Relying on computability of the mapping between reduced-form and structural parameters,  \cite{kline2016bayesian} and \cite{giacomini2021robust} study how to draw Bayesian inference for the identified set. 
\cite{chen2018monte} develops criterion-based quasi-Bayesian procedures for frequentist-valid inference in partially identified models. 
Their implementation builds on MCMC draws from the quasi-posterior that can have multiple modes or flat regions due to set-identification.
Our  identification–aware MCMC can contribute to their inference approach by better approximating the flat quasi-posterior. 
The method discussed in this paper also shares key structural similarities with Algorithm 1 in \cite{kuang2025robust}, while its focus is on Bayesian robustness and inference on the identified set.

\bigskip

\paragraph{Roadmap} The rest of the paper is organized as follows: In Section \ref{sec: MCMCandMixing}, we review standard MCMC methods (including MH and HMC) and SMC. We then introduce our algorithm and compare it against RWM and HMC in both multimodal and set-identified settings. Section \ref{sec:simulation} presents two main simulation exercises comparing the sampling efficiency of our method with standard RWM, HMC and SMC. Section \ref{sec:bayesian_analysis} presents Bayesian analyses of an MA(1) process, and a 3-variable, 17-lag structural vector moving average model (SVMA), showing that our method delivers superior performance in large models.

\section{Markov Chain Monte Carlo and Mixing Time}
\label{sec: MCMCandMixing}

\subsection{MCMC Methods}\
\label{subsec: MCMC_review}
Markov chain Monte Carlo methods are a class of algorithms used to sample from a target probability distribution \(\pi\) over a state space \(\Omega\), which can be either finite or general (e.g., \(\mathbb{R}^d\)). The goal of MCMC is to construct a Markov chain with a stationary distribution equal to \(\pi\), ensuring that, as the chain evolves, its samples approximate the desired target distribution.

When \(\Omega\) is finite, the Markov chain can be described by a transition matrix \(P(x, y)\), where \(P(x, y)\) represents the probability of transitioning from state \(x\) to state \(y\) conditional on that the current state is at \(x\). In the case of a general state space, \(\Omega\) is often a subset of \(\mathbb{R}^d\), and the Markov chain is defined through a transition kernel \(P(x, \cdot)\), which specifies a conditional probability measure given the current state \(x\). Popular MCMC algorithms, such as MH and HMC, operate as Markov chains with transition kernels constructed to preserve the target distribution and satisfy the necessary conditions for convergence.

\paragraph{Metropolis–Hastings.}  
In MH, one proposes candidate states from a distribution \(q(x'|x)\) and either accepts or rejects these proposals based on an acceptance probability that ensures detailed balance. A simple version of the MH procedure is shown in Algorithm \ref{alg: MHMC}. MH is appealing for its generality: any proposal distribution \(q(\theta'|\theta)\) that adequately explores the parameter space produces a Markov chain with \(\pi(\theta)\) as its stationary distribution. Common proposals include $\delta$-ball random walks, Gaussian random walks, and adaptive schemes that adjust proposal scales to target appropriate acceptance rates. The efficiency of MH depends on balancing exploration against excessive rejections, with optimal asymptotic acceptance rates around 0.234 \citep{gelman1997weak}.

\begin{algorithm}[MH]
    \label{alg: MHMC}
\begin{enumerate}
    \item Given target distribution $\pi(\theta)$, proposal distribution $q(\theta'|\theta)$, set initial state $\theta_0$ and number of iterations $N$.
\item For $t = 1$ to $N$,
\begin{enumerate}
     \item Sample $\theta' \sim q(\cdot|\theta_{t-1})$.
     \item Compute the acceptance probability:
    \[
    \alpha := \min\left\{1, \frac{\pi(\theta') q(\theta_{t-1}|\theta')}{\pi(\theta_{t-1}) q(\theta'|\theta_{t-1})}\right\}.
    \]
     \item Accept $\theta_t=\theta'$ with probability $\alpha$, otherwise, $\theta_t=\theta_{t-1}$.
\end{enumerate}
    \item Return $\{\theta_t\}_{t=1}^N$.
\end{enumerate}
\end{algorithm}

An important special case of the MH framework is the Gibbs sampler, which samples directly from conditional distributions when available, thereby avoiding acceptance steps. Gibbs can be efficient in hierarchical or conjugate settings but may be infeasible when conditionals lack convenient closed forms.

\paragraph{Hamiltonian Monte Carlo.}  
Despite its generality, MH often suffers from slow random-walk behavior in high-dimensional spaces. HMC addresses this limitation by incorporating gradient information to make longer, directed moves through the parameter space. By introducing auxiliary momentum variables and simulating Hamiltonian dynamics, HMC proposals reduce random walks and achieve better mixing, particularly for correlated or high-dimensional targets \citep{neal2011mcmc}. A standard version of HMC is given in Algorithm \ref{alg: HMC}.
\begin{algorithm}[HMC]
\label{alg: HMC}
\begin{enumerate}
    \item Input: target density $\pi(\theta) \propto e^{-U(\theta)}$, step size $\epsilon$, number of leapfrog steps $L$, initial state $\theta_0$, number of iterations $N$.
    \item For $t = 1,\dots,N$:
    \begin{enumerate}
        \item Sample momentum $p^{(0)} \sim \mathcal{N}(0,I)$.
        \item Initialize $(\theta^{(0)}, p^{(0)}) = (\theta_{t-1}, p^{(0)})$.
        \item Perform $L$ leapfrog steps:
        \begin{enumerate}
            \item Half-step momentum update:
            \[
            p^{(0)} \leftarrow p^{(0)} - \tfrac{\epsilon}{2} \nabla_\theta U(\theta^{(0)}).
            \]
            \item For $i = 1,\dots,L$:
            \begin{enumerate}
                \item Position update:
                \[
                \theta^{(i)} = \theta^{(i-1)} + \epsilon p^{(i-1)}.
                \]
                \item If $i < L$, full momentum update:
                \[
                p^{(i)} = p^{(i-1)} - \epsilon \nabla_\theta U(\theta^{(i)}).
                \]
            \end{enumerate}
            \item Final half-step momentum update:
            \[
            p^{(L)} = p^{(L-1)} - \tfrac{\epsilon}{2} \nabla_\theta U(\theta^{(L)}).
            \]
        \end{enumerate}
        \item Denote the proposal as $(\theta^\ast, p^\ast) = (\theta^{(L)}, p^{(L)})$, compute the acceptance probability:
        \[
        \alpha = \min\Bigl\{1, 
        \exp\bigl(-U(\theta^\ast) - \tfrac{1}{2}\|p^\ast\|^2 
                 + U(\theta_{t-1}) + \tfrac{1}{2}\|p^{(0)}\|^2\bigr)\Bigr\}.
        \] Set $\theta_t = \theta^\ast$ with probability $\alpha$, and $\theta_{t-1}$ otherwise.

    \end{enumerate}
    \item Return $\{\theta_t\}_{t=1}^N$.
\end{enumerate}
\end{algorithm}

In practice, performance depends on the choice of step size \(\epsilon\) and the number of leapfrog steps \(L\), often tuned adaptively or with algorithms such as NUTS \citep{hoffman2014no}. Well-tuned HMC typically achieves acceptance rates of 60–80\% \citep{betancourt2017conceptual}, offering substantial efficiency gains relative to random-walk MH.

Although both MH and HMC are powerful, they remain local in nature, relying on proposals that stay reasonably close to the current state. This can be problematic in multi-modal settings where local moves risk becoming trapped in a single mode. SMC methods, discussed in Appendix \ref{appendix: SMC}, take a more global approach using populations of particles to explore the state space in parallel, thereby providing a complementary alternative to standard MCMC.

\subsection{ Identification–aware MCMC}
\label{subsec: identification_driven}

Denote the target distribution $\pi(\theta)$, such as a posterior distribution of the form $p(\theta | Y)$ in Bayesian analysis. We define an associated partition $K: \Omega \rightarrow 2^\Omega$ which maps each parameter $\theta\in \Omega$ to the set of parameters $\theta'$ that are observationally equivalent to $\theta$. Specifically, this equivalence relationship is characterized by
$
L(\cdot | \theta)=L(\cdot | \theta')
$
for all $\theta' \in K(\theta)$ and possible data outcomes \(y\), where \(L(\cdot |  \theta)\) denotes the likelihood function of the observed data given \(\theta\). %It is important to note that, in practice we only need the parameters within the same $K(\theta)$ to produce identical likelihood for the realized \(Y\). However, for analytical simplicity, we assume observational equivalence (which is stronger) within $K(\theta)$ for now unless otherwise stated. 
In the case of a mixture of two Gaussian distributions, with parameters $\theta=\left(p, (1-p), \mu_1, \mu_2, \sigma^2_1, \sigma^2_2\right)$, each equivalence class $K(\theta)$ captures the inherent label-switching problem: swapping the component labels does not affect the likelihood. Thus, for this mixture Gaussian case, each $K(\theta)$ comprises exactly two elements: $\left(p, (1-p), \mu_1, \mu_2, \sigma^2_1, \sigma^2_2\right)$ and $\left((1-p), p, \mu_2, \mu_1, \sigma^2_2, \sigma^2_1\right)$. However, unlike the setting in \cite{geweke2007interpretation}, our $K(\theta)$ is not necessarily a finite set, nor does it need to have a fixed number of elements for all $\theta$. This flexibility allows for more general equivalence relationships.

When the parameters $\theta$ are not identified, the shape of the posterior for $\theta$ over $K(\theta)$ is fully determined by the prior for $\theta$ along $K(\theta)$ since the likelihood is flat on $K(\theta)$. A uniform prior on $K(\theta)$ can be a benchmark analysis if the researcher wishes to summarize the shape of the likelihood or put an unbiased belief for $\theta$. An alternative scenario is to specify a non-uniform prior on $K(\theta)$ to reflect an available prior knowledge for $\theta$. Our  identification–aware MCMC algorithms can accommodate either cases so that it is useful no matter whether the purpose of posterior analysis is to perform subjective Bayesian inference or summarize the shape of likelihood.

%Throughout, we analyze the composed chain using the posterior-conditional teleport kernel in Equation \eqref{eqn:teleport_kernel}. Since the likelihood $L(y |\cdot)$ is constant on $K(\theta)$, this conditional reduces to a prior-weighted restriction to $K(\theta)$. It is uniform only if the prior density is constant along $K(\theta)$. Thus, points in the same observationally equivalent class share the same likelihood value, and any posterior heterogeneity along the class is induced by the prior.  

The \textit{teleport kernel} is the posterior conditional law restricted to the observationally equivalent set $K(\theta)$: 
\begin{equation}
\label{eqn:teleport_kernel}
    T(\theta,A) = \frac{\int_{A \cap K(\theta)} \pi(u) \nu(du)}{\int_{K(\theta)} \pi(u) \nu(du)},  \qquad 0<\int_{K(\theta)} \pi(u) \nu(du)<\infty, \pi - a.e.
\end{equation}
The reference measure $\nu$ is the natural one for the geometry of $K(\theta)$: counting measure if $K(\theta)$ is finite, Lebesgue measure if it has full dimension, and $r$-dimensional Hausdorff measure if it is an $r$-dimensional manifold. 
We can interpret the posterior $\pi$ restricted to $K(\theta)$ as the conditional prior over observationally equivalent parameters given they belongs to identified set $K(\theta)$. Since the likelihood $L(y |\cdot)$ is constant on $K(\theta)$, it is uniform only if this conditional prior density is constant along $K(\theta)$. 
When $\pi$ is constant on each $K(\theta)$, $T(\theta,\cdot)$ reduces to: \begin{equation}
\label{eqn:teleport_kernel2}
T(\theta, A) = \frac{\nu(A \cap K(\theta))}{\nu(K(\theta))},  \qquad 0<\nu(K(\theta))<\infty, \pi - a.e.
\end{equation}
Note that for the teleport kernel to preserve the target distribution, $K(\theta)$ in (\ref{eqn:teleport_kernel}) does not strictly need to be the identified set for $\theta$ and suffices to be any subset of the identified set of $\theta$. That is, the teleportation kernel $T$ need not be supported on the full observationally equivalent set $K(\theta)$ and one can  restrict the teleport move to a known or computationally convenient subset of the identified set.

All invariance and reversibility statements given below hold for any target $\pi$. For quantitative mixing bounds we will later impose a mild regularity condition on the restriction of $\pi$ to each $K(\theta)$. See Assumption~\ref{assump:local_nonid}\ref{ass:pi-regular} in Section \ref{subsubsec: setID}.

To motivate and illustrate our proposal, consider the following toy example with 4 states.
\begin{example}
\label{ex:toy}
Consider two parameters \(\theta_1\) and \(\theta_2\), each taking binary values 0 and 1. The parameters have the following joint posterior distribution (we obscure its dependence on data $Y$ for simplicity, and write it as $p(\theta_1, \theta_2)$):
\[
p(\theta_1, \theta_2) = 
\begin{cases} 
p_{00}, & (\theta_1, \theta_2) = (0, 0), \\
p_{01}, & (\theta_1, \theta_2) = (0, 1), \\
p_{10}, & (\theta_1, \theta_2) = (1, 0), \\
p_{11}, & (\theta_1, \theta_2) = (1, 1).
\end{cases}
\]

In this setup, assume the target distribution depends only on $\theta_1$ and $\theta_2$ through their difference $\theta_1-\theta_2$. Therefore, \((\theta_1, \theta_2) = (1, 1)\) is observationally equivalent to \((\theta_1, \theta_2) = (0, 0)\), that is, under uniform prior,

$$p_{00}  = p_{11} \quad \text{for any realization of data.}$$

We construct a Markov transition matrix \(P\) based on Gibbs sampler over the four possible states \(\{(0,0), (0,1), (1,0), (1,1)\}\), listed in that order. Starting from \((0,0)\), we run 100,000 Gibbs iterations under the setting \(p_{11} = p_{00} = 0.49999\) and \(p_{10} = p_{01} = 0.00001\)  to simulate an extremely bimodal scenario.\footnote{The equality 
$p_{11} = p_{00}$
  follows from recognized observational equivalences and is assumed to be known to the researcher. In contrast, the choice 
$p_{10} = p_{01}$
  is introduced purely for computational simplicity and does not stem from any observational equivalence. Breaking this equality will not affect our results.} As shown in Figure~\ref{fig:toy_ex1}, the sampler is stuck at \((0,0)\) for the entire run and never visits \((1,1)\).\footnote{Extending the chain length to 1,000,000 increases the ratio of visits to $(1,1)$ to 0.53, but it still fails to accurately reflect the true ratio implied by the target distribution. Additionally, this resulting ratio varies significantly across different simulations, indicating instability in Gibbs's performance.}

\end{example}
\begin{figure}[htbp]
  \centering
  \begin{subfigure}[t]{0.48\textwidth}
    \centering
    \includegraphics[width=\linewidth]{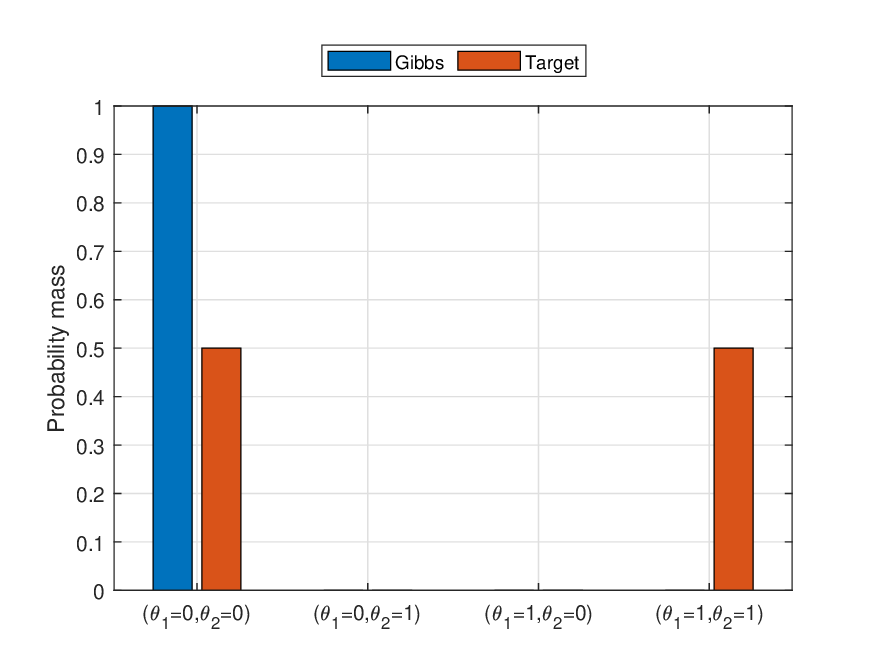}
    \caption{Gibbs-sampled distribution.}
    \label{fig:toy_ex1}
  \end{subfigure}\hfill
  \begin{subfigure}[t]{0.48\textwidth}
    \centering
    \includegraphics[width=\linewidth]{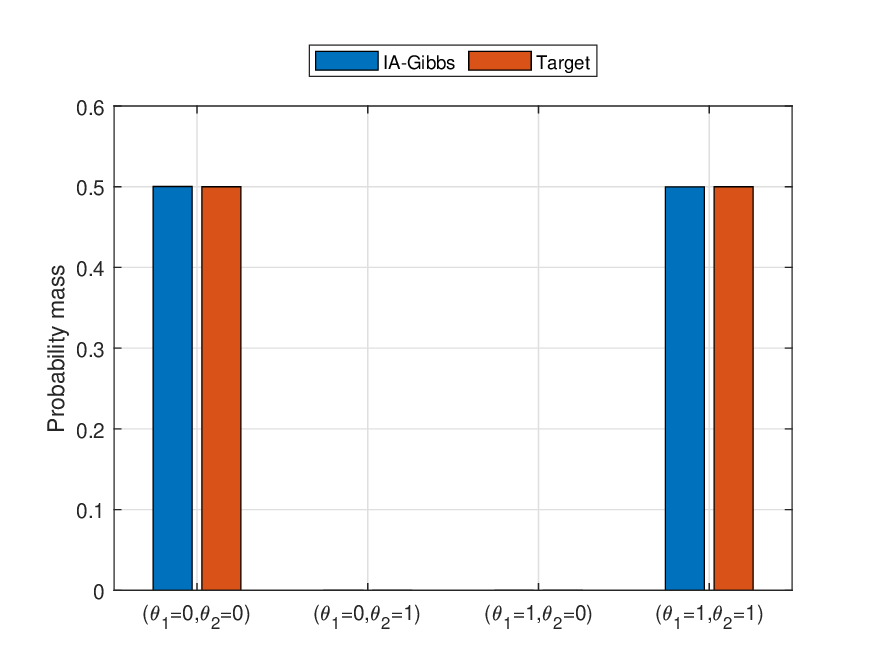}
    \caption{ identification–aware Gibbs.}
    \label{fig:toy_ex1_modified}
  \end{subfigure}
  \caption{Comparison of Gibbs vs.  identification–aware Gibbs, 100,000 samples.}
  \label{fig:toy_ex1_compare}
\end{figure}

A natural remedy is to make the Gibbs sampler \textit{identification–aware}. 
Before each round of update, we first teleport to an observationally equivalent point, 
chosen uniformly from $K(\theta)$. 
This ensures that, in Example~\ref{ex:toy}, the chain moves freely between $(0,0)$ and $(1,1)$ 
according to their posterior probabilities, rather than remaining stuck in whichever state it started.

The same idea extends beyond Gibbs to any MCMC method. 
Let $P(\theta,\cdot)$ denote the transition kernel of a standard algorithm such as MH, HMC, or Gibbs. 
We compose the teleportation kernel \(T(\theta,\cdot)\) in \eqref{eqn:teleport_kernel} with \(P\) to obtain an identification–aware kernel \(\widetilde P = PT\).\footnote{We write kernel composition by  
\[
(PT)(x,A):=\int P(y,A)\,T(x,dy),
\]
so \(PT=P \circ T \) and \(PTf=P(Tf)\). Order can also matter: in general \(PT\) and \(TP\) need not have the same mixing rate unless the kernels commute.} 

 Concretely, here is one approach we recommend for implementing  identification–aware MH:

\begin{algorithm}[Identification–aware MH]
    \label{alg:  identification–aware MHMC}
\begin{enumerate}
     \item Given target distribution \(\pi(\theta)\), identification structure \(K(\cdot)\), symmetric proposal distribution \(q(\theta'|\theta)\), set initial state \(\theta_0\) and number of iterations \(N\).
    \item For \(t = 1\) to \(N\):
    \begin{enumerate}
        \item Draw $\theta' \sim T(\theta_{t-1},\cdot)$.
        \label{alg: step2a}
        \item Sample \(\theta'' \sim q(\cdot|\theta')\).
        \label{alg: step2b}
        \item The acceptance probability should be defined as:

$$
\alpha\left(\theta',\theta^{\prime \prime}\right)=\min \Bigg\{1,  \frac{\pi(\theta'') }{\pi(\theta')  }\Bigg\}
$$

where:
        \item Accept \(\theta_t = \theta''\) with probability \(\alpha\), otherwise \(\theta_t = \theta'\).
    \end{enumerate}
    \item Return \(\{\theta_t\}_{t=1}^N\).
\end{enumerate}
\end{algorithm}

Our algorithm blends two elements: a global move that draws a point from the observationally-equivalent class $K(\theta)$, step 2 (a), and a standard local transition, steps 2 (b) - (d). 
The teleport step in Algorithm \ref{alg:  identification–aware MHMC} can be introduced also to the HMC algorithms shown in Algorithm \ref{alg: HMC}. 
Specifically, before step 2 in Algorithm \ref{alg: HMC}, we teleport $\theta_{t-1}$ within $K(\theta_{t-1})$ to obtain $\theta^{\prime}$ as done in step 2 of Algorithm \ref{alg:  identification–aware MHMC}, and run Algorithm \ref{alg: HMC} step 2 (a) onward with $\theta^{\prime}$ in place of $\theta_{t-1}$.  

By construction, $ \widetilde{P}$ inherits the advantages of the original MCMC method, and it preserves the stationary distribution, formalized below.
\begin{proposition}[Stationary Distribution of the Composite Kernel]
\label{prop:stationary}
Let $\pi$ be the target probability distribution. If the transition kernel $P(\theta, A)$ is $\pi$--invariant, then the teleportation kernel $T(\theta, A)$ is $\pi$-preserving (i.e., $\int T(\theta, A) \pi(d\theta) = \pi(A)$), and the composite kernel $\widetilde{P} = PT $ is also $\pi$--invariant.
\end{proposition}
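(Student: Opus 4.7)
The plan is to split the claim into three pieces: (i) $T$ is $\pi$-preserving, (ii) $P$ is $\pi$-preserving (given), and (iii) the composition $PT$ preserves $\pi$. Piece (iii) is a one-line Fubini argument once (i) is in hand, so the real content lies in (i), and I would spend most of the proof on that.

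For (i), the cleanest route is to recognize $T(\theta,\cdot)$ as a regular conditional distribution. The map $\theta\mapsto T(\theta,\cdot)$ is constant on each equivalence class $K(\theta)$ and, on that class, assigns to every measurable $A$ the mass
\begin{equation*}
T_K(A)\;=\;\frac{\int_{A\cap K}\pi(u)\,\nu(du)}{\int_K \pi(u)\,\nu(du)}.
\end{equation*}
With the reference measure $\nu$ chosen to match the geometry of $K$ (counting for finite, Lebesgue for full-dimensional, $r$-Hausdorff for an $r$-manifold), this $T_K$ is exactly the conditional law of $\vartheta\sim\pi$ given $K(\vartheta)=K$ appearing in the disintegration $\pi(d\theta)=T_K(d\theta)\,\bar\pi(dK)$, where $\bar\pi$ is the pushforward of $\pi$ to the quotient. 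Granted this, the tower property yields
\begin{equation*}
\int T(\theta,A)\,\pi(d\theta)
=\int_{\mathcal Q}\!\int_K T_K(A)\,T_K(d\theta)\,\bar\pi(dK)
=\int_{\mathcal Q} T_K(A)\,\bar\pi(dK)=\pi(A),
\end{equation*}
the last equality coming from the disintegration applied to $\mathbf 1_A$. As a sanity check I would verify the discrete case (Example~\ref{ex:toy}) by direct summation over classes, which makes the identity transparent without invoking disintegration.

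For (iii) I would unfold the composition and apply Fubini:
\begin{equation*}
(\pi\widetilde P)(A)
=\int\!\!\int P(y,A)\,T(\theta,dy)\,\pi(d\theta)
=\int P(y,A)\,(\pi T)(dy)
=\int P(y,A)\,\pi(dy)=\pi(A),
\end{equation*}
using $\pi T=\pi$ from (i) and the hypothesized $\pi$-invariance of $P$ in the last step. No further structural assumption on $P$ is needed.

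The main obstacle I anticipate is not the algebra but the measure-theoretic bookkeeping behind the claim that $T_K$ truly is the disintegration of $\pi$ along $K(\cdot)$. For finite classes (counting $\nu$) and full-dimensional classes (Lebesgue $\nu$) this is immediate. For the $r$-dimensional manifold case one needs a coarea-type argument so that the fiber density with respect to Hausdorff measure is proportional to $\pi(u)$; implicitly this requires that $\pi$ have no Jacobian mismatch with $\nu$, which is precisely what the condition $0<\int_{K(\theta)}\pi(u)\nu(du)<\infty$ together with the paper's convention on $\nu$ is designed to deliver. I would flag measurability of the partition map $\theta\mapsto K(\theta)$ (so that $T(\cdot,A)$ is measurable and Fubini applies) as the only additional regularity worth stating explicitly, and then the rest of the proof is routine.
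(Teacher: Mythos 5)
Your proof is correct and rests on the same core fact as the paper's: that $T(\theta,\cdot)$ is the conditional law of $\pi$ given the partition into observational-equivalence classes, so $\pi T=\pi$ by the tower property, after which the invariance of $PT$ is the one-line Fubini step you give (identical to the paper's). The difference is one of packaging. The paper proves $\pi T=\pi$ by a bare-hands computation: it writes out $\int T(\theta,A)\,\pi(d\theta)$, swaps the order of integration, and uses the fact that $u\in K(\theta)$ implies $K(u)=K(\theta)$ to show the inner integral collapses to $1$. You instead invoke the disintegration $\pi(d\theta)=T_K(d\theta)\,\bar\pi(dK)$ and apply the tower property abstractly. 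Both arguments are the same identity read in two directions, but your framing has one concrete advantage: it forces you to ask whether the kernel in equation~\eqref{eqn:teleport_kernel}, with density $\pi(u)$ against $r$-dimensional Hausdorff measure on a lower-dimensional fiber, really is the disintegration of $\pi$. As you correctly flag, in the manifold case this requires a coarea-type identification (no Jacobian mismatch between the ambient density and the fiber measure); the paper's proof elides this by writing ``$\pi(d\theta)=\pi(\theta)\,\nu(d\theta)$ on $K(u)$,'' which is literal only for the counting-measure and full-dimensional cases, and is handled properly only later via Assumption~\ref{assump:local_nonid}\ref{ass:pi-regular} and the $w_u$-weighted teleport of Proposition~\ref{prop:manifold}. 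Your explicit mention of measurability of $\theta\mapsto K(\theta)$ (needed for $T(\cdot,A)$ to be measurable and for Fubini to apply) is likewise a regularity point the paper leaves implicit. So: same approach at heart, with your version slightly more careful about the measure-theoretic preconditions.
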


However, the $\pi$-invariance in Proposition~\ref{prop:stationary} does not, by itself, imply detailed balance\footnote{A Markov kernel $P$ is reversible (or satisfies detailed balance) with respect to a measure $\pi$ if it fulfills the condition $\pi(d\theta) P(\theta, d\theta') = \pi(d\theta') P(\theta', d\theta)$.} for the composed kernel $\widetilde P = PT $. In this paper, reversibility is needed only as a technical device for deriving mixing-rate bounds. When reversibility is required, we replace $\widetilde P$ by the order-randomized envelope $\bar P=\tfrac12(PT )+\tfrac12(TP)$, which is $\pi$-reversible. See Proposition~\ref{prop:reversibility} in Appendix~\ref{append:proof}. 

The same principle generates a family of identification-aware samplers. One may replace $T$ by a finite-group label-switch operator, embed $T$ in a single Metropolis step via an exact or pseudo-marginal mixture proposal, or post-process a stored chain with conditional draws on each $K(\theta)$ (as in \cite{geweke2007interpretation}), which is particularly helpful in high dimensions when likelihood evaluation is costly. Another option is the convex mixture $\widetilde P_H=(1-\varepsilon)P+\varepsilon T$. Implementation details, advantages, and limitations of these variants are collected in Appendix~\ref{append: ia_variants}. Empirically, no single variant dominates across settings. In our simulations and applications we select among $\widetilde P$, the order–randomized $\bar P$, batch augmentation and the convex mixture $\widetilde P_H$ according to computational cost and problem structure. For theoretical mixing bounds we analyze $\bar P$, which is $\pi$–reversible by construction. The convergence arguments in Section~\ref{subsec: convergence} extend to all reversible variants with at most constant–factor changes in the conductance and hence the spectral–gap bounds. See Proposition \ref{prop:IA_master_bound} in Appendix~\ref{append:proof} for details.

From a computational perspective, the  identification–aware samplers add a fixed cost.  Each sample from $T(\theta,\cdot)$ must (i) identify the observationally-equivalent set \(K(\theta)\) and (ii) draw from the conditional distribution \(\pi(\cdot | K(\theta))\).  When \(K(\theta)\) is obtained by simple algebra, such as label permutations, sign flips, or orthogonal rotations, the overhead is trivial relative to one likelihood call, yet it eliminates the spectral bottlenecks that stall purely local chains, a gain most visible for highly multi-modal or high-dimensional posteriors.

If $K(\theta)$ can be computed only through expensive numerics (e.g., solving a high–dimensional nonlinear system), that gain may be partly offset by run time. Two compromises are then useful. One is to run several inexpensive local updates per teleport: perform one $T$ move then apply $P$ for $m$ consecutive iterations. This periodic schedule is time–inhomogeneous but remains $\pi$–stationary because both $P$ and $T$ leave $\pi$ invariant.\footnote{When $T$ is much cheaper than $P$, one may instead apply $m$ successive teleports followed by one local update, i.e.\ $PT ^{m}$, optionally retaining the $m$ intermediate teleport states. This also preserves $\pi$. See the batch augmentation in Appendix~\ref{alg:batch_aug}.}
Alternatively, the convex combination $\widetilde P_H  =  (1-\varepsilon)P  +  \varepsilon T$ keeps the kernel time–homogeneous. 
If we further let $\varepsilon \approx \frac{1}{m+1}$, this matches the average teleport frequency of the periodic schedule while keeping a fixed kernel each iteration. In either case, they preserve improvements on mixing rate while holding down the cost of computing \(K(\theta)\).

The practical rule is therefore to engage more  identification–aware moves whenever \(K(\theta)\) is algebraic and the likelihood dominates computation. Otherwise, adopt a sparse or hybrid schedule that balances the mixing benefit against the price of constructing \(K(\theta)\).

 \addtocounter{example}{-1}
 \begin{example}[Continued]
     Define an  identification–aware transition kernel \(\widetilde{P} = T \cdot P\), where 

$$T = \begin{pmatrix}
1/2 & 0   & 0   & 1/2\\
0   & 1   & 0   & 0\\
0   & 0   & 1   & 0\\
1/2 & 0   & 0   & 1/2
\end{pmatrix},$$

and \(P\) is the original Gibbs transition.  As shown in Figure \ref{fig:toy_ex1_modified}, the  identification–aware Markov chain mixes much better.
\end{example}

\subsection{Convergence and Mixing rate}
\label{subsec: convergence}
In general, we are interested in multiple aspects of the performance of these samplers.
First, we would like to have convergence between the sample distribution and target distribution, either in total variation distance, Kullback-Leibler divergence, or other discrepancy measures, which may or may not depend on the initial point/distribution. Then, we would like to understand how fast they converge. 
Although other asymptotic properties such as the law of large numbers (LLN) and central limit theorems (CLT) can be of primary interest in their own right, particularly because most applications focus on the performance of a mean estimator of the form \(\frac{1}{N}\sum_{i=1}^N f(\theta_i)\) for a given function \(f\), these results are consequences of the convergence of the sample distribution.

In this paper, we measure convergence with the total–variation norm\footnote{If the chain has a unique stationary law $\pi$ and
$P^n(\theta,\cdot)\to\pi$ in total variation for every start $\theta$, then the ergodic theorem gives: for any $f\in L^1(\pi)$, the sample mean
$\bar f_N := \frac{1}{N}\sum_{t=1}^N f(\theta_t)$
satisfies $\bar f_N \to \mathbb{E}_\pi[f]$ almost surely. If, in addition, the chain is geometrically ergodic (defined below) and $f\in L^{2+\delta}(\pi)$ for some $\delta>0$, then a CLT holds:
$\sqrt{N} (\bar f_N-\mathbb{E}_\pi[f])\Rightarrow \mathcal N(0,\sigma_f^2)$ with
$\sigma_f^2=\mathrm{Var}_\pi \big(f(\theta_0)\big)+2\sum_{k\ge1}\mathrm{Cov}_\pi \big(f(\theta_0),f(\theta_k)\big)$.
Here $\theta_0 \sim \pi$, and all expectations/covariances are taken under $\pi$. See, e.g., \cite{meyn2009markov,jones2004markov,roberts2004general}.}

\[
\|\mu_1 - \mu_2\|_{\mathrm{TV}} = \sup_{A \in \mathcal{F}}\bigl|\mu_1(A)-\mu_2(A)\bigr|,
\]
where \(\mu_{1}\) and \(\mu_{2}\) are probability measures on the same measurable space \((\Omega,\mathcal F)\). For a Markov kernel \(P\) with stationary distribution \(\pi\) and any initial law \(\mu_{0}\), we monitor
\(\|\mu_{0}P^{n}-\pi\|_{\mathrm{TV}}\) to quantify convergence.
The chain is called \textit{geometrically ergodic} if there exists \(0<\gamma<1\) such that
\[
\|\mu_0 P^n  - \pi \|_{\mathrm{TV}}
\le
(1-\gamma)^n \|\mu_0 - \pi\|_{\mathrm{TV}},
\]
implying an exponential decay of rate \(1-\gamma\) in TV distance.

In a finite state space, assume \(P\) is irreducible, aperiodic, and \(\pi\)–reversible. Then \(1\) is a simple eigenvalue of \(P\) and all other eigenvalues lie strictly inside the unit circle. Let \(\lambda_2\) be the second–largest eigenvalue in absolute value. The \textit{spectral gap} is
\(\gamma(P)=1-|\lambda_2|\), and the spectral theorem yields
\[
\|\mu_0 P^n - \pi\|_{\mathrm{TV}} \le C (1-\gamma(P))^n
\]
for a constant \(C\) depending only on \(\mu_0\) and \(\pi\).

This finite–state characterization extends to general state spaces when $P$ is reversible. Let
\[
L_2(\pi)=\Bigl\{f:\Omega\to\mathbb{R}:\int f^2 d\pi<\infty\Bigr\},\qquad
\langle f,g\rangle=\int f g d\pi,\qquad
L_2^0(\pi)=\{f\in L_2(\pi):\langle f,1\rangle=0\}.
\]
Reversibility makes the Markov operator $(Pf)(x)=\int f(y) P(x,dy)$ self–adjoint on $L_2(\pi)$ and a contraction on $L_2^0(\pi)$. Its \textit{(absolute) spectral gap} is
\[
\gamma(P) = 1-\sup_{\substack{f\in L_2^0(\pi)\\ \|f\|=1}}\bigl|\langle Pf,f\rangle\bigr|
 = 1-\sup_{f\in L_2^0(\pi)}\frac{\|Pf\|}{\|f\|}.
\]
In finite state spaces this coincides with $1-|\lambda_2|$. For the rest of the paper we use this spectral–gap viewpoint to compare convergence rates. Our theoretical results cover three settings: (i) a finite state space with $m^2$ states and $m$ modes; (ii) a general state space where $\pi$ is a mixture of $m$ log–concave components; and (iii) a general state space with local non-identification. The finite state space case is attractive by its simplicity and clear intuition, while it is less common in practice, so we place the results for that case in the Appendix \ref{append:proof}.

\subsubsection{Multi-modality}

%When the chain lives on an uncountable space, the transition operator acts on an infinite-dimensional function space and no longer possesses a finite spectrum of eigenvalues.  Consequently, the elementary eigenvalue calculus used in the finite case (see Appendix \ref{append:proof}) is unavailable.  Instead, we analyze convergence through functional tools such as conductance bounds, spectral gaps on \(L_{2}(\pi)\) that remain meaningful for kernels on \(\mathbb R^{d}\) or other continuous spaces.

Multimodal posteriors arise frequently in applied econometrics, even beyond the mixture models that have been extensively discussed in statistics \cite{fruhwirth2006finite}.  For example, regime-switching models deliver separate likelihood peaks that correspond to distinct combinations of structural parameters and latent states (\cite{diebold2001long}). In SVAR models identification of structural parameters could hold locally but not globally depending on identifying restrictions (\cite{bacchiocchi2025locally}). Taken together, these examples show that multimodality is a pervasive feature, making it a natural and important focus of our discussion.

In this section, we analyze a canonical example inspired by \cite{guan2007small} to illustrate the challenges of sampling from multimodal distributions and the effectiveness of an  identification–aware approach. Employing a circular topology, which is boundary-free, circumvents the endpoint treatments of a bounded interval.
\begin{example}
\label{example:1dmultimodal}
The state space $\Omega$ is a one-dimensional circle of circumference $4L$, represented as the interval $[-2L, 2L]$ with its endpoints identified. The target distribution $\pi(\theta)$ is bimodal:
\[
\pi(\theta) \propto
\begin{cases}
 \exp(-\nu |\theta|), & \text{if } \theta \in [-L, L] \\
 \exp(-\nu (2L - |\theta|)), & \text{if } \theta \in [-2L, -L) \cup (L, 2L]
\end{cases}
\]

where $L \gg 1$ and $\nu > 0$. The parameters $L$ and $\nu$ control the problem's difficulty: large $L$ increases the separation between modes, while large $\nu$ makes each mode sharper and more concentrated.

\end{example}

    \begin{figure}[h!]
 \centering
    \includegraphics[width=0.55\textwidth]{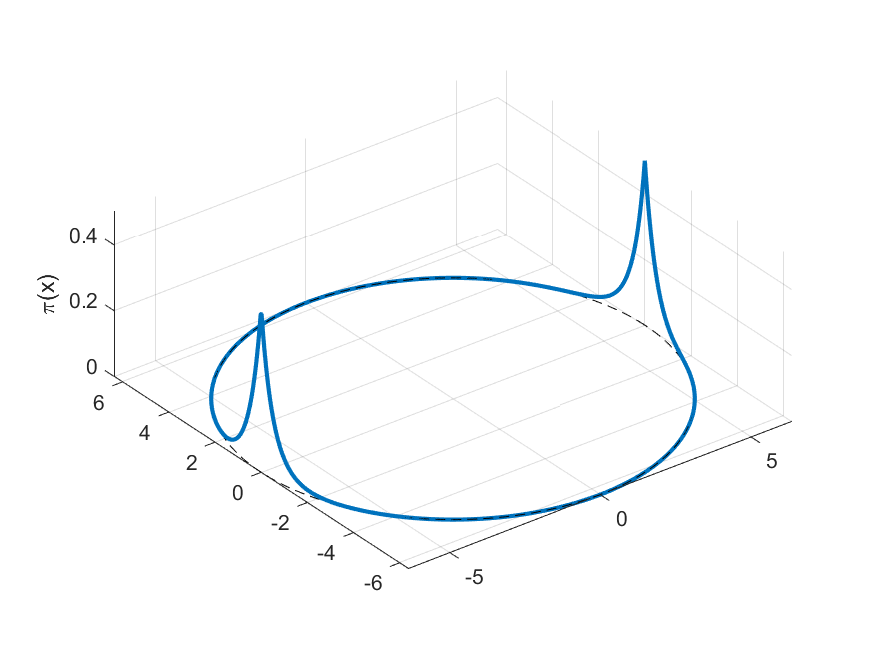}
    \caption{Target $\pi$ defined on a 1-D circle, with $L=10, \nu =2$}
\end{figure}
 A standard Metropolis Hastings sampler with a local $\delta$-ball random walk mixes poorly on this two–mode circle. As shown in Proposition~\ref{prop:1ddeltaball}, its spectral gap collapses exponentially:
\[
\gamma(P)\ \le\ C_1 e^{-\nu (L-\delta)}.
\]
This indicates that the sampler becomes prohibitively slow, effectively getting trapped in one mode, whenever the modes are sharp (large $\nu$) or far apart (large $L$).

The slow mixing is caused by the low-probability barrier between the modes. To break this barrier, we design a proposal that respects the inherent symmetry of the state space. Let \[
s(\theta) =
\begin{cases}
\theta+2L, & \theta\in[-2L,0),\\[2pt]
\theta-2L, & \theta\in[0,2L).
\end{cases}
\]
 be the antipodal shift and define the teleport kernel
\[
T(\theta,\cdot)=\tfrac12 \delta_\theta(\cdot)+\tfrac12 \delta_{s(\theta)}(\cdot), \quad\text{where }\delta_\theta(\cdot)\text{ denotes the Dirac measure at } \theta
\]
Let $P$ be one MH step with a symmetric $\delta$-ball proposal. The identification–aware transition is the two–stage composition $\widetilde P =\ PT $, i.e., at each iteration first apply $T$ (stay put with probability $1/2$ or jump to $s(\theta)$ with probability $1/2$), then perform one local MH step from the resulting point. It is worth noting that $\widetilde P$ is by itself reversible in this specific setting.

This small change to the proposal mechanism has a profound impact on performance. As proven in Proposition \ref{prop:1ddeltaball} in Appendix~\ref{append:proof}, we can obtain a uniform lower bound on the spectral gap:
\[
\gamma(\widetilde P)\ \ge\ C_2\ \min\{(\nu\delta)^2,\ 1\},
\]
so choosing $\delta \propto \nu^{-1}$ keeps $\gamma(\widetilde P)$ bounded away from zero independently of $L$ and $\nu$.

We now introduce a more general setup.
Let $\pi$ be a probability density on a connected state space
$\Omega\subseteq\mathbb{R}^n$. Fix $h>0$ with $m \geq 2$ modes. We assume that the support of $\pi$ can be decomposed into $m$ disjoint,
open components $A_1,\dots,A_m$ with
$\pi (\cup_{i=1}^m A_i )=1$ such that each $\pi$ has only one mode in each $A_i$.
Let $w_i:=\pi(A_i)$, $\pi_i(\cdot):=\pi(\cdot\cap A_i)/w_i$, $\mu_i\in A_i$ be the mode of $\pi$ restricted to $A_i$, and write
$d_i:= \operatorname{dist}(\mu_i, \partial A_i)=\inf_{z\in \partial A_i}\|\mu_i - z\|_2$, where $\partial A_i$ stands for the boundary of the closure of $A_i$, and $d_*:=\min_i d_i$.

\begin{assumption}
\label{assump:multimodality}
\begin{enumerate}[label=(\roman*)]
\item \label{assump:multimodality1}
There exist constants $c_i\ge1$ and
$\nu_i>0$ such that, for all $r\in(0,d_i]$,
\[
\pi_i \big(\{\theta\in A_i:\ \|\theta-\mu_i\|\ge r\}\big)
  \le  c_i e^{-\nu_i r}.
\]
Set $c_{\max}:=\max_i c_i$, $\nu_{\min}:=\min_i \nu_i$, and
$\nu_{\max}:=\max_i \nu_i$.
\item \label{assump:multimodality2}
Fix a step size $\delta>0$ and define
$A_i^{\mathrm{int}}:=\{\theta\in A_i:\ \operatorname{dist}(\theta,\partial A_i)\ge\delta\}$.
There exist $\varepsilon_1\in(0,1]$, $\varepsilon_2>0$, and measurable cores
$A_i^\circ\subset A_i$ with $\pi_i(A_i^\circ)\ge\varepsilon_1$ such that, for
every $i\neq j$ and all $\theta\in A_i^\circ$,

$$
T(\theta,A_j^{\mathrm{int}})\ \ge\ \varepsilon_2,
$$
where $T(\theta,\cdot)$ is a $\pi$-reversible teleport kernel used in the composition.

\item\label{assump:multimodality3} For each $i$ there exists a convex set
$C_i\subset A_i$ with $\pi_i(C_i)\ge\alpha_i>0$ and
$\delta_0>0$ such that $C_i^{\mathrm{int}}
:=\{\theta\in C_i:\ \operatorname{dist}(\theta,\partial C_i)\ge\delta_0\}\neq\emptyset$,
and $\pi$ is continuous and bounded above/below on $C_i^{\mathrm{int}}$.
For some $n_0\in\mathbb N$ and $\eta_0>0$,
\[
\inf_{\theta\in A_i} P^{ n_0}(\theta,\cdot)\ \ge\ \eta_0 m_i(\cdot),
\qquad\text{with }  m_i  \text{ a probability measure supported on }C_i^{\mathrm{int}}.
\]
\end{enumerate}
\end{assumption}

Assumption \ref{assump:multimodality}\ref{assump:multimodality1} is a mode-wise concentration condition: within each region $A_i$ it bounds the tail mass away from the mode $\mu_i$ up to the radius $d_i$. It is used to upper-bound the conductance of the RWM by showing that the $\delta$-boundary layer near $\partial A_i$ has exponentially small $\pi_i$-mass as $d_i$ grows, hence $\gamma(P)$ is exponentially small in $d_i$. Assumption \ref{assump:multimodality}\ref{assump:multimodality2} states that a single teleport step places a fixed fraction of probability into the interior $A_j^{\mathrm{int}}$ of any other mode, uniformly over a core $A_i^\circ$. This guarantees the inter-mode communication for the composed kernel $\widetilde P=PT $. Assumption \ref{assump:multimodality}\ref{assump:multimodality3} provides a within-mode “small set” $C_i\subset A_i$ on which a Doeblin–type minorization holds for some finite number of MH steps. This yields a uniform positive lower bound on the spectral gap of $P$ restricted to $A_i$ (see \citet[Ch. 16]{meyn2009markov}).

An alternative condition for Assumptions \ref{assump:multimodality}\ref{assump:multimodality1} and \ref{assump:multimodality3} is log-concavity: if $\pi_i$ is log-concave on $A_i$, Lemma \ref{lemma: decay} yields the mode-wise exponential tail bound required by \ref{assump:multimodality}\ref{assump:multimodality1}. If there exists a convex $C_i\subset A_i$ on which $\pi$ is log-concave, then for sufficiently small $\delta$, Lemma \ref{lemma: lb_deltaball} gives a positive spectral gap for the $\delta$-ball MH on $C_i$.

A widely used class of statistical models that meets these conditions is the $k$-mixture Gaussian $\pi(\theta)=\sum_{i=1}^k w_i \mathcal N(\theta;\mu_i,\Sigma_i)$ with well-separated means, take $A_i=\{\theta:  w_i\varphi_i(\theta)=\max_j w_j\varphi_j(\theta)\}$. Separation guarantees $d_i>0$. On each $A_i$ the mixture behaves sub-Gaussian around $\mu_i$, so Assumption \ref{assump:multimodality}\ref{assump:multimodality1} holds with $\nu_i$ comparable to $\lambda_{\max}(\Sigma_i)^{-1/2}$. Choosing $T(\theta,\cdot)=\pi(\cdot | A_j)$ and $\delta$ smaller than the interior margin of $A_j$ gives Assumptions \ref{assump:multimodality}\ref{assump:multimodality2}–\ref{assump:multimodality3}. For Assumption \ref{assump:multimodality}\ref{assump:multimodality3}, take $C_i=\{\ (\theta-\mu_i)^\top\Sigma_i^{-1}(\theta-\mu_i)\le r_i^2\ \}\subset A_i$ with $r_i$ small. Then, $\pi$ is bounded above/below on $C_i^{\mathrm{int}}$ and the $\delta$-ball MH (with $\delta\le \delta_0$) satisfies the required minorization.

\begin{proposition}
\label{prop:deltaMH}
Let $\pi$ be a probability density on $\Omega\subseteq\mathbb{R}^n$, and suppose Assumption~\ref{assump:multimodality} holds with sets $A_1,\dots,A_m$, modes $\mu_i$, radii $d_i$, and the teleport kernel $T$. 
Let $P$ be the RWM kernel with the uniform  $\delta$--ball proposal,
\[
q(\theta,\cdot)=\mathrm{Unif}\bigl(B(\theta,\delta)\bigr),\qquad \delta>0,
\]
and let $\bar P := \frac12PT  +\frac12 TP$ denote the reversible composed kernel. Then:
\begin{enumerate}
\item For any $\delta>0$,
\[
\gamma(P)  \le  2 c_{\max} \exp \big\{-\nu_{\min} (d_*-\delta)\big\},
\qquad d_*:=\min_{1\le i\le m} d_i.
\]

\item There exists a constant $c_0>0$, depending only on $(\varepsilon_1,\varepsilon_2,n_0,m,\eta_0)$ (and not on the separations $d_i$), such that
\[
\gamma(\bar P)\ \ge\ c_0.
\]

\end{enumerate}
\end{proposition}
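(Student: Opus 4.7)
The plan is to apply the Cheeger/conductance framework throughout: the upper bound in Part~1 comes from exhibiting an explicit bottleneck set, and the lower bound in Part~2 from a Markov-chain decomposition that combines within-mode mixing of $P$ with cross-mode coupling of $T$.

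\textbf{Part 1 (upper bound).} The key observation is that a $\delta$-ball proposal moves the state by at most $\delta$, so any transition from $A_j$ to $A_j^c$ must originate in the boundary layer $B_j^\delta:=\{\theta\in A_j:\operatorname{dist}(\theta,\partial A_j)\le\delta\}$; outside this layer the proposal stays in $A_j$ and no accepted move can exit. Since $m\ge 2$, some $j$ has $\pi(A_j)\le 1/2$, and using $A_j$ as the test set for conductance gives
\[
\Phi(P)\ \le\ \frac{\int_{A_j}P(\theta,A_j^c)\pi(d\theta)}{\pi(A_j)}\ \le\ \pi_j(B_j^\delta).
\]
Points of $B_j^\delta$ lie at distance at least $d_j-\delta$ from $\mu_j$, so Assumption~\ref{assump:multimodality}\ref{assump:multimodality1} yields $\pi_j(B_j^\delta)\le c_j\exp\{-\nu_j(d_j-\delta)\}$. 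When $d_*\ge\delta$ this is further bounded by $c_{\max}\exp\{-\nu_{\min}(d_*-\delta)\}$; when $d_*<\delta$ the right-hand side of the claim exceeds $c_{\max}\ge 1$, so the bound is trivial because $\gamma(P)\le 2$. Cheeger's inequality $\gamma(P)\le 2\Phi(P)$ completes Part~1.

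\textbf{Part 2 (lower bound).} I would combine two ingredients via decomposition. Within each $A_i$, the Doeblin minorization in Assumption~\ref{assump:multimodality}\ref{assump:multimodality3} implies that $P|_{A_i}$ is uniformly ergodic with a spectral gap lower bounded by some $\gamma_{\mathrm{loc}}(\eta_0,n_0)>0$, and $\bar P|_{A_i}$ inherits this (up to a constant factor) because $\bar P$ contains $P$ with weight $1/2$. Across modes, Assumption~\ref{assump:multimodality}\ref{assump:multimodality2} combined with the observation that a $\delta$-ball MH step cannot exit $A_j$ when started in $A_j^{\mathrm{int}}$ gives, for every $\theta\in A_i^\circ$ and $j\neq i$,
\[
\bar P(\theta,A_j)\ \ge\ \tfrac12(PT)(\theta,A_j)\ \ge\ \tfrac12\, T(\theta,A_j^{\mathrm{int}})\ \ge\ \tfrac12\varepsilon_2.
\]
Averaging over the core $A_i^\circ$, the projected chain on labels with transitions $\bar\pi(i,j)=w_i^{-1}\int_{A_i}\bar P(\theta,A_j)\pi(d\theta)$ has every off-diagonal entry at least $\varepsilon_1\varepsilon_2/2$, hence a spectral gap $\gamma_{\mathrm{proj}}(\varepsilon_1,\varepsilon_2,m)>0$. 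A Markov chain decomposition inequality in the style of Madras and Randall then yields
\[
\gamma(\bar P)\ \ge\ \tfrac12\,\gamma_{\mathrm{proj}}\,\min_i\gamma(\bar P|_{A_i})\ \ge\ c_0,
\]
with $c_0>0$ depending only on $(\varepsilon_1,\varepsilon_2,n_0,m,\eta_0)$ and independent of $\{d_i,\nu_i\}$. Reversibility of $\bar P$, guaranteed by the symmetrization $\tfrac12(PT+TP)$ (cf.\ Proposition~\ref{prop:reversibility}), is essential for this spectral argument.

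\textbf{Main obstacle.} The hard part is the within-mode step of the decomposition. Classical decomposition theorems are stated for chains restricted to each $A_i$ via rejection, whereas $\bar P$ interleaves $P$- and $T$-moves that may freely enter or leave $A_i$. I would either (i) transport the minorization of $P^{n_0}$ in Assumption~\ref{assump:multimodality}\ref{assump:multimodality3} to a minorization for $\bar P|_{A_i}$, by arguing that the $P$-component alone (with the restriction-induced rejection) already provides sufficient within-$A_i$ mixing, or (ii) bypass decomposition entirely and bound $\Phi(\bar P)$ directly by splitting each test set $S$ according to its intersection with $\{A_i\}$ and treating the ``dominated by a single mode'' and ``straddling several modes'' cases through the within-mode gap and the cross-mode teleport coupling respectively. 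Either route is technical but requires no idea beyond standard decomposition and conductance machinery.
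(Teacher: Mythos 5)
Your proposal is correct and follows essentially the same route as the paper: Part~1 via Cheeger's inequality applied to the $\delta$-boundary layer of a single mode together with the tail bound of Assumption~\ref{assump:multimodality}\ref{assump:multimodality1}, and Part~2 via the Madras--Randall decomposition, with the cross-mode gap from $T(\theta,A_j^{\mathrm{int}})\ge\varepsilon_2$ combined with $P(z,A_j)=1$ on $A_j^{\mathrm{int}}$, and the within-mode gap from the Doeblin minorization. The one step you flag as an obstacle is resolved in the paper exactly by your route (i): since $\bar P$ plays the $PT$ branch with probability $\tfrac12$ at each step, the $n_0$-step restricted chain on $A_i$ inherits the minorization of Assumption~\ref{assump:multimodality}\ref{assump:multimodality3} with the constant degraded by at most $2^{-n_0}$, which still yields $\gamma_i$ bounded below in terms of $(\eta_0,n_0)$ only.
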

Refer to Appendix \ref{append:proof_deltaMH} for a detailed proof. 

Proposition  \ref{prop:deltaMH} shows that the plain $\delta$-ball RWM–MH can mix very slowly: its spectral gap admits an upper bound of order $\exp\{-\nu_{\min}(d_*-\delta)\}$, so it deteriorates exponentially as the modes become farther apart (large $d_*$) and/or the target is more sharply concentrated (large $\nu_{\min}$). In contrast, the IA–MH composition $ \bar P$ removes any dependence on inter-mode separation: under Assumption \ref{assump:multimodality} its gap is bounded below by a positive constant that does not involve the $d_i$. If Assumption \ref{assump:multimodality}\ref{assump:multimodality1} and \ref{assump:multimodality3} are replaced by log-concavity, the lower bound can depend on the within-mode “condition number” $\nu_{\min}/\nu_{\max}$. Choosing $\delta\propto 1/\nu_{\max}$ yields a gap that is uniform in the separations and degrades only as $\nu_{\max}$ grows relative to $\nu_{\min}$, not with the distance between modes.

For the Gaussian RWM case, we pursue an alternative set of assumptions: instead of working with Assumption \ref{assump:multimodality}, we adopt a log-concavity framework and strengthen it to strong log-concavity. The corresponding result is stated in Proposition \ref{prop:gaussian_rw} in Appendix~\ref{append:proof}.

Proposition \ref{prop:HMC} establishes that Hamiltonian Monte Carlo (HMC) can also experience significant bottlenecks in multi-modal settings, leading to an exponentially small spectral gap.  

\begin{proposition}[Hamiltonian Monte Carlo]
\label{prop:HMC}
Under the same setup as Proposition \ref{prop:deltaMH}, let $L_h$ decompose into $m$ disjoint open components $A_1, \dots, A_m$ with $\pi(\cup_i A_i)=1$. For each $i$, define $w_i = \pi(A_i)$, $\pi_i(\cdot) = \pi(\cdot \cap A_i)/w_i$, and let $\mu_i \in A_i$ be a mode of $\pi_i$. Set
$
d_i := \operatorname{dist}(\mu_i, \partial A_i)$ and $ d_* := \min_i d_i$.

Let Assumption \ref{assump:multimodality}\ref{assump:multimodality1} hold.
In addition, assume the potential $U=-\log\pi$ is $L_s$–smooth on each $A_i$: $$\|\nabla U(x)-\nabla U(y)\|\le L_s\|x-y\|,\qquad \text{for all } x,y\in A_i$$.

Let $P$ be the standard HMC kernel with Gaussian momentum $p_0\sim\mathcal N(0,\sigma^2 I_n)$, $\ell$ leapfrog steps of size $\eta$, and the usual Metropolis accept/reject step. Then, for fixed $(\ell,\eta,\sigma^2)$,
$$
\gamma(P)  \le  C 
\exp \Big(- \min\Big\{\tfrac12 \nu_{\min} d_*,\ \frac{c d_*^2}{\sigma^2(\ell\eta)^2}\Big\}\Big),
$$

for some constants $C,c>0$ that depend only on $(n,\ell,\eta,\sigma^2,L_s,c_{\max})$ but not on $d_*$.
\end{proposition}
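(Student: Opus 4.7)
I would follow the same conductance strategy as in Proposition~\ref{prop:deltaMH}: Cheeger's inequality gives $\gamma(P)\le 2\Phi(P)$, so it suffices to exhibit a mode region $A_i$ with $\pi(A_i)\le 1/2$ for which $\int_{A_i}P(\theta,A_i^c) \pi(d\theta)/\pi(A_i)$ is exponentially small. Unlike the $\delta$-ball RWM case, HMC trajectories are unbounded, so $\partial A_i$ is not a hard barrier; the two terms in the $\min$ will instead come from two independent rare events---the tail decay of $\pi$ near $\partial A_i$, and the Gaussian tail of the momentum $p_0$ required to actually cross a barrier of width $d_*$.

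\textbf{Leapfrog displacement bound.} The first ingredient is an estimate of $\|\theta^{(\ell)}-\theta^{(0)}\|$ in terms of $\|p_0\|$. Under $L_s$-smoothness of $U$, a discrete Gronwall argument on the leapfrog recursion (in the style of \cite{mangoubi2018does}) yields, for $\eta L_s$ in the stability regime,
\[
\|\theta^{(\ell)}-\theta^{(0)}\|\ \le\ C_1 \ell\eta \bigl(\|p_0\|+L_s \|\theta^{(0)}-\mu_i\|\bigr),
\]
with $C_1=C_1(\ell,\eta,L_s)$. I would cite this rather than re-derive it. Since the Metropolis accept/reject step can only \emph{decrease} the probability of landing in $A_i^c$, the proposal-level event $\{\theta^{(\ell)}\in A_i^c\}$ upper-bounds $P(\theta,A_i^c)$.

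\textbf{Splitting and assembly.} Split $A_i=B_i\cup(A_i\setminus B_i)$ with $B_i:=\{\theta\in A_i:\|\theta-\mu_i\|\le d_i/2\}$. Assumption~\ref{assump:multimodality}\ref{assump:multimodality1} immediately controls the tail layer: $\pi(A_i\setminus B_i)\le c_{\max} w_i e^{-\nu_{\min} d_*/2}$. For $\theta\in B_i$, $\operatorname{dist}(\theta,\partial A_i)\ge d_i/2$, so escape requires a displacement of at least $d_i/2$; feeding this into the Step-2 bound, with $\|\theta^{(0)}-\mu_i\|\le d_i/2$ and $\ell\eta$ small enough relative to $1/L_s$ that the $L_s\|\theta^{(0)}-\mu_i\|$ contribution is subsumed, one forces $\|p_0\|\ge c_3 d_*/(\ell\eta)$. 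Gaussian norm concentration for $p_0\sim\mathcal{N}(0,\sigma^2 I_n)$ then bounds the proposal probability by $\exp(-c d_*^2/(\sigma^2\ell^2\eta^2))$, after absorbing $\chi^2_n$ constants into $C$. Summing and dividing by $w_i$,
\[
\Phi(P)\ \le\ c_{\max} e^{-\nu_{\min} d_*/2}+e^{-c d_*^2/(\sigma^2\ell^2\eta^2)}\ \le\ 2\exp\!\Bigl(-\min\!\bigl\{\tfrac12 \nu_{\min} d_*,\ c d_*^2/(\sigma^2(\ell\eta)^2)\bigr\}\Bigr),
\]
and Cheeger's inequality delivers the claim with a constant $C=C(n,\ell,\eta,\sigma^2,L_s,c_{\max})$ independent of $d_*$.

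\textbf{Main obstacle.} The delicate point is keeping the leapfrog displacement estimate uniform over $\theta^{(0)}\in B_i$: $L_s$-smoothness alone controls $\|\nabla U(\theta)\|$ only by $L_s\|\theta-\mu_i\|$, so the $\|\theta-\mu_i\|$ term in the displacement bound must be strictly dominated by $d_*/(2 C_1 \ell\eta)$, i.e., one needs $C_1 L_s\ell\eta<1$. This is exactly the standard leapfrog stability regime, and it is the reason $C$ and $c$ are allowed to depend on $(n,\ell,\eta,\sigma^2,L_s)$ rather than being universal. Outside this regime, one would need a sharper symplectic energy argument to rule out trajectories that blow up from small momenta; within it, the rest of the proof parallels Proposition~\ref{prop:deltaMH} verbatim.
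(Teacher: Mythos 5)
Your proposal is correct and follows essentially the same route as the paper's proof: Cheeger's inequality, a core/shell split of $A_i$ at radius $d_i/2$, the Assumption~\ref{assump:multimodality}\ref{assump:multimodality1} tail bound for the shell, and a Gaussian momentum-tail bound for the core via a leapfrog displacement estimate with acceptance probability bounded by one. Your displacement bound, which retains the explicit $L_s\|\theta^{(0)}-\mu_i\|$ term and the resulting stability condition $C_1 L_s\ell\eta<1$, is in fact a more careful rendering of the step the paper compresses into the assertion that the half-step momenta remain within a constant factor of $\|p_0\|$.
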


The proof is provided in Appendix \ref{append:proof}.  
Proposition~\ref{prop:HMC} shows that, in the multi-modal setup considered, HMC admits an exponentially small upper bound on its spectral gap as the inter–mode separation grows. 
By contrast, as we show in Proposition~\ref{prop:deltaMH}, the spectral gap of IA-RWM is bounded below by a constant independent of inter-mode distances, yielding provably faster mixing than standard HMC. 

As for lower bounds, in general multimodal settings (without additional structure on the modes and barriers) there is no known universal, closed-form lower bound for the HMC spectral gap. Explicit lower bounds are available only for specific target families. Consequently, one cannot appeal to a general HMC lower bound to conclude that identification–aware HMC must mix faster than standard HMC.

As for IA–HMC, while a fully general spectral-gap lower bound is not yet available, the teleport mechanism directly targets the bottlenecks that hinder plain HMC, and empirical results indicate the same qualitative advantage. 
Existing literature highlights the poor performance of HMC in multi-modal settings. For example, \cite{mangoubi2018does} demonstrate that for certain multi-modal targets, HMC can perform even worse than random-walk Metropolis, as it struggles to cross between modes due to the geometry of Hamiltonian trajectories. These results align with the upper-bound analysis in Proposition~\ref{prop:HMC}, where transitions between modes are shown to be exponentially unlikely. Similar arguments, such as those adapted from the conductance bounds in \cite{vishwanath2024repelling}, also support the observation that HMC can be highly inefficient in multi-modal scenarios. 
We leave establishing a general lower-bound theory for IA–HMC for future work.

\subsubsection{Set-identification}
\label{subsubsec: setID}
Apart from multi-modality, another common source of identification failure in statistical models is set-identification. When the parameter is set-identified, there is typically a continuum of observationally equivalent parameter values forming a manifold in the parameter space.

Specifically, within an arbitrarily small neighborhood of a parameter, there always exists another parameter configuration that produces an identical data-generating process. In what follows, we compare the mixing times of the IA–RWM algorithm with those of the standard RWM in the presence of local unidentifiability. We begin by illustrating the concept with a concrete example.
\begin{example}
\label{example:flatY}
Consider the parameter space
\[
\Theta_D  =  X  \times  Y_D  \subset  \mathbb{R}^d,
\]
where \(d = m + r\). Here,
\(X \subset \mathbb{R}^m\) is a fixed, bounded set (with \(m \ge 1\)). Let \(Y_D = [-D,D]^r\).

Consider the target distribution \(\pi_D\) on \(\Theta_D\) such that $\pi_D$ is flat over the identified set, i.e.,  $\pi_D$ factors as
\[
\pi_D(\theta_x,\theta_y)  =  p(\theta_x) u_D(\theta_y), \quad (\theta_x,\theta_y)\in \Theta_D,
\]
where \(p(\theta_x)\) is a continuous density on \(X\), bounded away from \(0\) and \(\infty\).  
\(u_D(\cdot)\) is the uniform density on \(Y_D\).
\end{example}

    \begin{figure}[h!]
 \centering
    \includegraphics[width=0.55\textwidth]{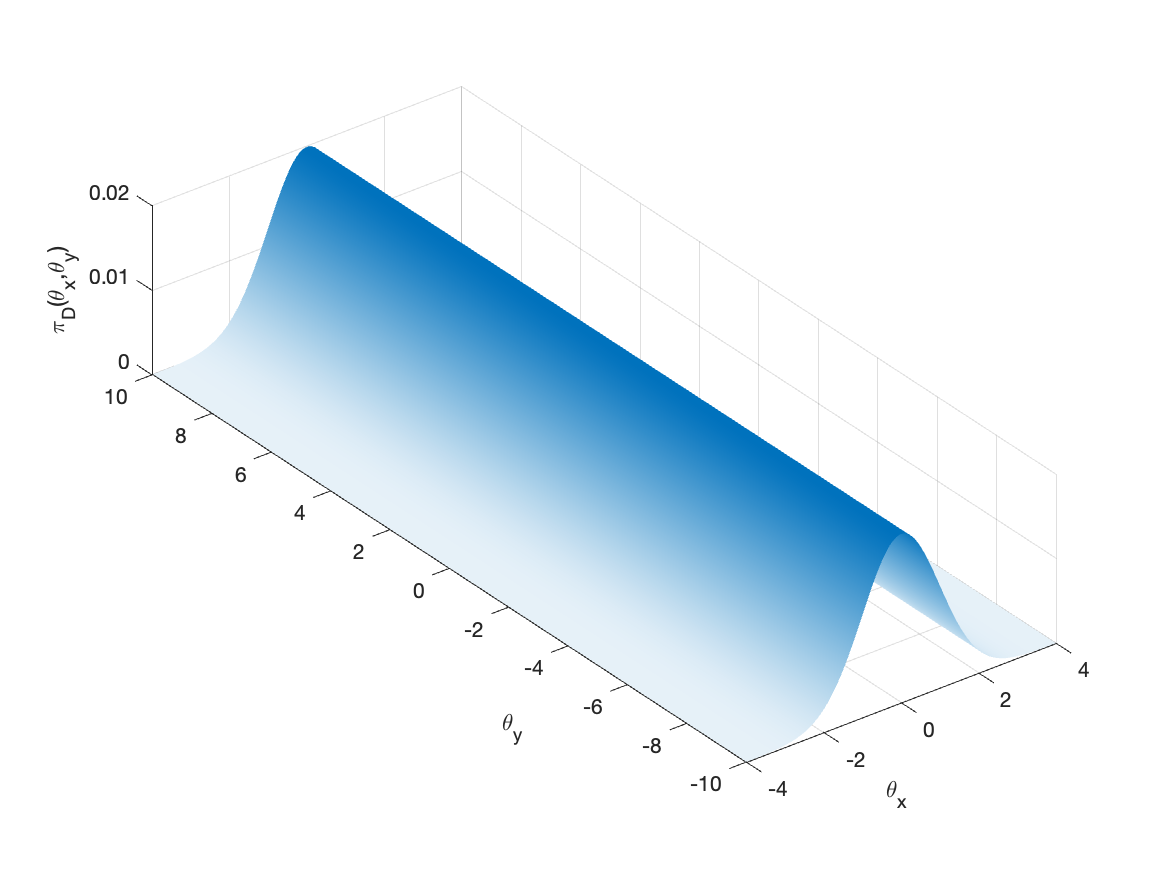}
    \caption{Target density \(\pi_D(\theta_x,\theta_y) = p(\theta_x)   u_D(\theta_y)\), where \(D = 10\). Here, \(p(\theta_x) = \frac{1}{\sqrt{2\pi}} \exp\left(-\frac{\theta_x^2}{2}\right)\) is the standard Gaussian restricted to \([-4,4]\), and \(u_D(\theta_y)\) is the uniform density on \([-10,10]\).}
\end{figure}
In this model, $\theta_x$ is assumed to be identified, while \(u_D\) is flat in the \(\theta_y\) direction, reflecting that \(\theta_y\) is unidentifiable. As \(D\to\infty\), the size of \(Y_D\) grows without bound.
For simplicity, we assume \(p(\theta_x)\) is the standard Gaussian density and use the standard RWM algorithm as a benchmark. Under the additional assumption that the proposal density \(q\) is sufficiently diffuse (so that, for a small number of steps, the transition probability \(q^n\left((\theta_x,\theta_y),(\theta_x',\theta_y)\right)\) is uniformly bounded below for all \(\theta_y\)), one can show that the RWM’s spectral gap \(\gamma(P)\) tends to 0 as the unidentifiable region expands. In contrast, the IA–RWM algorithm maintains a spectral gap \(\gamma( \bar P)\) that is uniformly bounded away from 0, which ensures faster convergence even as \(D\) grows. This result is formalized in Proposition \ref{prop:local_unid_flatY_bar} and proved in Appendix~\ref{append:proof_supp}.

This idea extends beyond the rectangular setting for \(Y_D\). In differential geometry, one can often represent a family of smooth, compact, connected \(r\)-dimensional manifolds (known as the leaves of a foliation, or fibres) as the level sets of a smooth function, provided that certain regularity and topological conditions hold. Intuitively, think of the parameter space as stacked by smoothly curved surfaces rather than straight rectangular blocks.  Each point belongs to one surface and one only. The surfaces never cross, and as you move the points, they shift gradually without sharp bends or gaps.  Under these mild geometric conditions we can tag every surface with a single coordinate, the density depends only on that tag, and the identification–aware sampler can still update the uninformative directions by drawing a fresh point uniformly on the current surface. This is exactly the same strategy that works in the rectangular case. These assumptions are formalized as below.

\begin{assumption}
\label{assump:local_nonid} 
\begin{enumerate}[label=(\roman*)]

\item  \label{assump:local_nonid1} 
There exists a $C^\infty$ map $\phi:\Theta\to \Phi\subset\mathbb{R}^{d-r}$ whose Jacobian
$D_\theta\phi$ has constant rank $d-r$ for every $\theta\in\Theta$.

%--------------------------------------------------------
\item \label{assump:local_nonid2} 
For every $u \in \Phi$, the fiber (i.e. observationally equivalent set)
\[
\mathcal{F}(u) := \{\theta \in \Theta : \phi(\theta) = u\}
\]
is compact and connected.

%--------------------------------------------------------
\item \label{assump:local_nonid3} 
For each $u\in\Phi$ there exists a smooth hypersurface $\Sigma(u)\subset\mathcal{F}(u)$ such that $\Sigma(u)$ separates $\mathcal{F}(u)$ into two subsets $\mathcal{F}(u)^{-}$ and $\mathcal{F}(u)^{+}$ with
$0<c_1\le \operatorname{Vol}_{\mathcal F}(\mathcal{F}(u)^{-})/\operatorname{Vol}_{\mathcal F}(\mathcal{F}(u))\le c_2<1$,
and $\Sigma(u)$ admits a tubular neighborhood of width at least $\varepsilon_{\min}>0$ inside $\mathcal{F}(u)$, i.e., there is a diffeomorphism
$\Sigma(u)\times(-\varepsilon_{\min},\varepsilon_{\min})\to\mathcal N_{\varepsilon_{\min}}(\Sigma(u))\subset\mathcal{F}(u)$.

%--------------------------------------------------------
\item \label{assump:local_nonid4} \label{ass:pi-regular} 
Assume
\[
\pi(d\theta)  =  f(u)  w_u(\theta)  \mu_u(d\theta)  du,
\]
where $u=\phi(\theta)$, $0<\underline f\le f(u)\le \overline f<\infty$, and for each $u$ the function $w_u:\mathcal F(u)\to(0,\infty)$ satisfies
\[
0<w_{\min}\ \le\ w_u(\theta)\ \le\ w_{\max}<\infty \quad \text{for $\mu_u$–a.e.\ }\theta\in\mathcal F(u),
\]
and $w_u$ is locally Lipschitz on $(\mathcal F(u), d_{\mathcal F})$. That is, for every compact $K\subset \mathcal F(u)$ there exists $L_{u,K}<\infty$ such that
\[
|w_u(\theta)-w_u(\theta')|\ \le\ L_{u,K}  d_{\mathcal F}(\theta,\theta')\qquad\text{for all }\theta,\theta'\in K,
\] where  $d_{\mathcal F}(\theta,\theta')$ is the shortest-path distance within $\mathcal F(u)$ between $\theta$ and $\theta'$.

The conditional $\mu_u$ is the normalized $r$–dimensional Hausdorff measure on the fiber $\mathcal F(u)$:
\[
\mu_u(A)  =  \frac{\operatorname{Vol}_{\mathcal F}(A)}{\operatorname{Vol}_{\mathcal F}(\mathcal F(u))}, 
\qquad A\subset \mathcal F(u).
\]

%--------------------------------------------------------

\item \label{assump:local_nonid5}
There exist $\rho>0$ and constants $0<c\le C<\infty$ such that, for every $\theta\in\Theta$, points with
$\|u-\phi(\theta)\|+\|s\|\le\rho$ admit a unique representation
\[
\theta=\Psi_\theta(u,s),\qquad u\in\mathbb{R}^{d-r},\ s\in\mathbb{R}^r,
\]
satisfying
\[
\phi(\Psi_\theta(u,s))=u, \quad \text{and }\] 
\[c \|(u,s)-(u',s')\|\le \|\Psi_\theta(u,s)-\Psi_\theta(u',s')\|\le C \|(u,s)-(u',s')\|,
\]
and with Jacobian determinant of $\Psi_\theta$ bounded between $c$ and $C$.

\item \label{assump:local_nonid6}
The proposal density $g$ is symmetric and translation-invariant, and satisfies:
(i) There exist $\delta>0$ and $c_g>0$ such that
\[
g(z)\ \ge\ c_g\qquad\text{for all }\ \|z\|\le \delta,
\]
and
(ii) its tail probability $\overline G(t):=\int_{\|z\|>t} g(z) dz$ has finite first moment,
\[
\int_{0}^{\infty} \overline G(t) dt\ <\ \infty.
\]
\end{enumerate}
\end{assumption}

The parameter vector $\phi(\theta) \in \mathbb{R}^{d-r}$ has a smaller dimension than $\theta$ and it corresponds to a
vector of reduced-form parameters commonly available in structural econometric models. See, for example, \citet{giacomini2021robust}. Assumption \ref{assump:local_nonid}\ref{assump:local_nonid1} ensures, by the Regular Level Set Theorem \citep[Corollary 5.14]{lee2012introduction}, that each fiber $\mathcal{F}(u)$ is a smooth $r$-dimensional submanifold of $\Theta$.\footnote{A subset $M \subset \Theta$ is a \textit{smooth $r$-dimensional submanifold} if for every $x \in M$ there exists a neighborhood $U \subset \mathbb{R}^d$ of $x$ and a smooth map $F: U \to \mathbb{R}^{d-r}$ such that $M \cap U = \{y \in U : F(y) = 0\}$ and $DF(y)$ has full rank $d-r$ for all $y \in U$.} 
It ensures that local neighborhoods of fibers behave regularly, avoiding pathological changes in fiber geometry.  The assumption about the rank of $D_{\theta}\phi$ requires that the dimension of the  $\phi$ corresponds to the dimension of minimally sufficient reduced-form parameters. 

Assumption
\ref{assump:local_nonid}\ref{assump:local_nonid2} rules out fibers that wander off to infinity or split into disjoint pieces, both of which can cause improper posteriors or poor mixing. In familiar SVAR settings, the rotation set (e.g., $SO(n)$) is compact, and after an ordering convention it is connected.

In Assumption
\ref{assump:local_nonid} \ref{assump:local_nonid3}, every fiber admits a smooth “mid-fiber” slice $\Sigma(u)$ that splits it into two parts of comparable size and has a uniform tubular neighborhood. For MCMC this means a small random-walk step cannot jump from one side of the fiber to the other unless the current point is close to the slice, which is exactly what lets us control crossing probabilities. The uniform neighborhood width is a standard geometric implication of smoothness and is made precise by the tubular neighborhood theorem \citep[Thm. 6.24]{lee2012introduction}.

When $\pi(\theta)\propto L(y\mid\theta) p(\theta)$ and the likelihood is flat along each fiber $\mathcal F(u)$, Assumption~\ref{assump:local_nonid}\ref{assump:local_nonid4} matches the posterior disintegration with $f(u)\propto L(u) p(u)$ and $w_u(\theta)\propto p_u(\theta)$ (the prior’s conditional on $\mathcal F(u)$). 
Thus $w_u$ encodes prior information along the observationally equivalent set: $w_u\equiv1$ corresponds to a uniform prior, while informative priors yield non-uniform $w_u$. 
The bounds on $f$ and $w_u$ imply a uniform acceptance floor for moves that change $u$,
\[
\alpha(\theta,\theta')\ \ge\ \frac{\underline f w_{\min}}{\overline f w_{\max}} =: \alpha_0.
\]

Assumption \ref{assump:local_nonid}\ref{assump:local_nonid5} provides uniform local coordinates $(u,s)$ near any parameter value, with controlled distortion of distances and volumes. In practice, this means we can vary $u$ and the fiber coordinate $s$ in a stable way everywhere in the parameter space, which is what we need for the geometric bounds used later.

Assumption \ref{assump:local_nonid}\ref{assump:local_nonid6} requires the proposal $g$ to place positive mass on a fixed small ball and to have light tails. Both $\delta$--ball and Gaussian random walks satisfy this. These properties let us guarantee short moves occur with non-negligible probability and keep the contribution of very long jumps under control.
\begin{proposition}
\label{prop:manifold}
Let Assumptions \ref{assump:local_nonid}\ref{assump:local_nonid1}–\ref{assump:local_nonid6} hold. 
For $u\in\Phi$, let $D(u):=\sup_{\theta,\theta'\in\mathcal F(u)} d_{\mathcal F}(\theta,\theta')$ and $D_{\max}:=\sup_{f(u)>0}D(u)$. 
Let $P$ be the standard random–walk Metropolis kernel targeting $\pi$ (from \ref{assump:local_nonid4}) with proposal $g$ (from \ref{assump:local_nonid6}). 
Define the teleport kernel $T$ by
\[
T(\theta,A) := \frac{\displaystyle\int_{A\cap \mathcal F(\phi(\theta))} w_{\phi(\theta)}(\xi)\, \mu_{\phi(\theta)}(d\xi)}{\displaystyle\int_{\mathcal F(\phi(\theta))} w_{\phi(\theta)}(\xi)\, \mu_{\phi(\theta)}(d\xi)},
\]
where $\mu_u$ is the normalized $r$–dimensional Hausdorff measure on $\mathcal F(u)$ from \ref{assump:local_nonid4}, and set 
\[
\bar P:=\tfrac12(PT )+\tfrac12(TP).
\]
Then:
\begin{enumerate}
\item There exists $C<\infty$ independent of $D_{\max}$ such that $\gamma(P)\le C/D_{\max}$. In particular, $\gamma(P)\to0$ as $D_{\max}\to\infty$.
\item There exist $n\in\mathbb N$ and $\varepsilon_0>0$, independent of $D_{\max}$, such that for all $\theta\in\Theta$ and measurable $A\subset\Theta$,
\[
\bar P^{\,n}(\theta,A)\ \ge\ \varepsilon_0\, \frac{\pi(A\cap T_0)}{\pi(T_0)},
\]
where $T_0\subset\Theta$ is a fixed measurable set with $\pi(T_0)>0$ independent of $D_{\max}$. 
Consequently, $\bar P$ is uniformly ergodic and its $L_2(\pi)$ spectral gap is bounded below uniformly in $D_{\max}$. In particular, one may take 
\[
\gamma(\bar P)\ \ge\ 1-\big(1-\varepsilon_0\big)^{1/n}\ >\ 0.
\]
\end{enumerate}
\end{proposition}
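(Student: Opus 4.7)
I would prove the two parts separately, using complementary techniques.

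\emph{Part (1): conductance upper bound for $P$.} My approach is a Dirichlet--form (Cheeger) argument with the fiber--half test set
\[
S \ :=\ \{\theta\in\Theta : \theta\in \mathcal F(\phi(\theta))^{-}\}.
\]
Assumption~\ref{assump:local_nonid}\ref{assump:local_nonid3} combined with \ref{assump:local_nonid}\ref{assump:local_nonid4} gives $\pi(S)\in[c_1',c_2']$ uniformly in $D_{\max}$, so it suffices to bound the flow $Q_P(S,S^c)=\int_S P(\theta,S^c) \pi(d\theta)$. A one--step crossing from $S$ to $S^c$ forces the current point to lie in a within--fiber tubular neighborhood of $\Sigma(\phi(\theta))$ whose width is controlled by the effective proposal radius. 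Using the uniform local coordinates in Assumption~\ref{assump:local_nonid}\ref{assump:local_nonid5} and the finite first--moment condition \ref{assump:local_nonid}\ref{assump:local_nonid6}(ii), the relative within--fiber mass of that layer is $O(1/D(u))$. Integrating against the bounded density $f(u) w_u(\theta)$ from \ref{assump:local_nonid}\ref{assump:local_nonid4} gives $Q_P(S,S^c)\le C/D_{\max}$, and the variational inequality $\gamma(P)\le Q_P(S,S^c)/(\pi(S) \pi(S^c))$ completes part~(1).

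\emph{Part (2): uniform Doeblin minorization for $\bar P$.} I would build a regeneration scheme establishing that, for some $n$ and $\varepsilon_0>0$ independent of $D_{\max}$,
\[
\bar P^{\,n}(\theta,A)\ \ge\ \varepsilon_0\, \frac{\pi(A\cap T_0)}{\pi(T_0)}\qquad\text{for all }\theta\in\Theta,\ A\subset\Theta,
\]
from which $\gamma(\bar P)\ge 1-(1-\varepsilon_0)^{1/n}$ follows by standard minorization--to--gap conversion \citep{meyn2009markov}. Since $\pi$ is a probability and $f\ge\underline f>0$ on its support, the $\phi$--marginal concentrates on a set $\Phi_0\subset\Phi$ of bounded Lebesgue measure; the bi--Lipschitz chart in \ref{assump:local_nonid}\ref{assump:local_nonid5} also gives it bounded diameter $R$. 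Fix $u_\star\in\Phi_0$ and a small $r>0$, and let $T_0$ consist of the $\pi$--mass--bearing portions of the fibers $\mathcal F(u)$ with $u\in B(u_\star,r)$, chosen so that $\pi(T_0)>0$ depends only on the invariants of the problem. The scheme alternates two moves: a $T$ step redistributes the fiber coordinate according to $w_{\phi(\theta)} \mu_{\phi(\theta)}$ independently of the starting fiber coordinate; a subsequent $P$ step then, by \ref{assump:local_nonid}\ref{assump:local_nonid6}(i), the Jacobian bounds $(c,C)$ of \ref{assump:local_nonid}\ref{assump:local_nonid5}, and the acceptance floor $\alpha_0=\underline f w_{\min}/(\overline f w_{\max})$ from \ref{assump:local_nonid}\ref{assump:local_nonid4}, advances $\phi$ by order $\delta$ with probability at least $q_0>0$, uniformly in the fiber coordinate. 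After $n\asymp R/\delta$ alternations, any initial $\phi(\theta)$ reaches $B(u_\star,r)$ with probability at least $q_0^{n}$, and a final $T$ step aligns the conditional distribution on the terminal fiber with $\pi$'s conditional on that fiber; together these produce the required uniform $\varepsilon_0$.

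\emph{Main obstacle.} The delicate step is ensuring that each iteration's advance in $\phi$ is genuinely uniform over fiber coordinates, even though a teleport can deliver the chain to a fiber of diameter up to $D_{\max}$. The uniform local coordinates in Assumption~\ref{assump:local_nonid}\ref{assump:local_nonid5} together with \ref{assump:local_nonid}\ref{assump:local_nonid6}(i) and the acceptance floor $\alpha_0$ provide a lower bound on the $u$--component of a proposal that does not depend on the fiber coordinate, but turning these ingredients into an explicit $\varepsilon_0$ requires a careful change--of--variables argument keeping the Jacobian bounds away from $0$ and $\infty$, plus a compactness argument guaranteeing a fixed mass--bearing slice exists in every fiber over $B(u_\star,r)$. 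The symmetrization $\bar P=\tfrac12(PT)+\tfrac12(TP)$ plays no role in the minorization itself; it is needed only to make $\bar P$ reversible so that the minorization converts into an $L_2(\pi)$ spectral gap.
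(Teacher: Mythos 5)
Your plan follows essentially the same route as the paper's proof: part (1) uses the same fiber--half cut set with the tube estimate $\operatorname{Vol}_{\mathcal F}(\Sigma(u))/\operatorname{Vol}_{\mathcal F}(\mathcal F(u))=O(1/D(u))$ and the first--moment tail condition (your variational bound is equivalent to the paper's Cheeger step), and part (2) builds the same Doeblin minorization by chaining $(T,P)$ pairs whose $P$--steps advance $\phi$ by order $\delta$ with probability bounded below by the small--ball mass of $g$ times the acceptance floor $\alpha_0$, extracting the $PT$ branch from $\bar P$ at a $2^{-n}$ cost. The only noteworthy deviations are cosmetic (you end with a $T$--refresh onto the fiber conditional rather than a final $P$--step into a fixed small ball), and your assertion that the $\phi$--image has bounded diameter is an overclaim from bounded Lebesgue measure, though the paper's own chaining argument rests on the same implicit boundedness.
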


Proof can be found in Appendix \ref{append:proof_main}. Proposition \ref{prop:manifold} formalizes that standard random-walk Metropolis mixes increasingly slowly as the fibers $F(u)$ grow in diameter, while the identification–aware RWM maintains uniformly fast mixing regardless of fiber size, because it refreshes uninformative directions via weighted draws along each fiber. Intuitively, IA–RWM avoids the bottleneck along nearly flat dimensions that traps standard RWM.

\section{Sampling Simulation}
\label{sec:simulation}
In this section, we perform two simple simulation exercises to showcase the efficiency gain we achieved from employing the  identification–aware step.
\begin{simulation}[Mixture Gaussian]
    In this simulation exercise, I draw a sample of size 1,000 from the mixture Gaussian distribution $$X \sim p\mathcal{N}(\mu_1,\sigma_1^2)+(1-p)\mathcal{N}(\mu_2,\sigma_2^2),$$
    where the true $(\mu_1, \mu_2,\sigma_1,\sigma_2, p) = (0, 20, 1, 5, 0.3)$.
    Then, I sample the parameters $(\mu_1, \mu_2,\sigma_1,\sigma_2, p)$ based on the likelihood with 1,000 chains of length 100,000, randomize the initial guess of the parameters in each simulation, and use the random walk Markov chain as the baseline proposal distribution. 
\end{simulation}
Figure \ref{fig:density_plot} plots the sample distribution of one out of the 1,000 chains.  identification–aware Gaussian random-walk Metropolis-Hastings (RWM) is able to ``teleport'' across two modes while standard RWM is stuck in one of the two modes. In fact, based on 1,000 simulation exercises, the RWM consistently ends up trapped in one of the modes almost every time. In contrast, using that $K(\mu_1, \mu_2,\sigma_1,\sigma_2, p)=\{(\mu_1, \mu_2,\sigma_1,\sigma_2, p), (\mu_2, \mu_1,\sigma_2,\sigma_1, 1-p)\}$, the chain was able to explore both modes. Figure \ref{fig:IAMH_moment} shows that the reported means of $(\mu_1,\sigma_1)$ are clustered near $(10, 3)$\footnote{Different scaling in the marginal distributions around each mode causes the posterior mean to deviate slightly from $(10, 3)$.} more tightly than those in Figure \ref{fig:MCMC_moment}, indicating that the  identification–aware MCMC method reliably estimates the posterior means even when scaling varies between modes.
\begin{figure}[h!]
  \begin{subfigure}[b]{0.5\textwidth}
    \includegraphics[width=\textwidth]{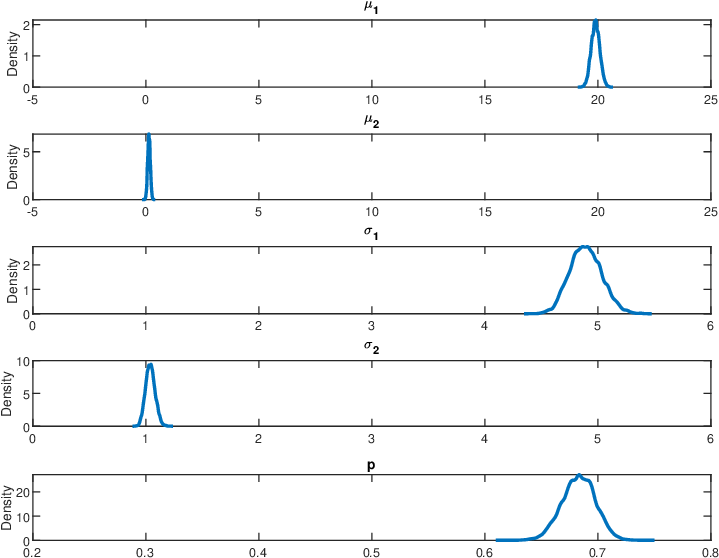}
    \caption{Samples from RWM}
    \label{fig:density_chain_old}
  \end{subfigure}
  \hfill % it is important to have this hfill to get space between the images
  \begin{subfigure}[b]{0.5\textwidth}
    \includegraphics[width=\textwidth]{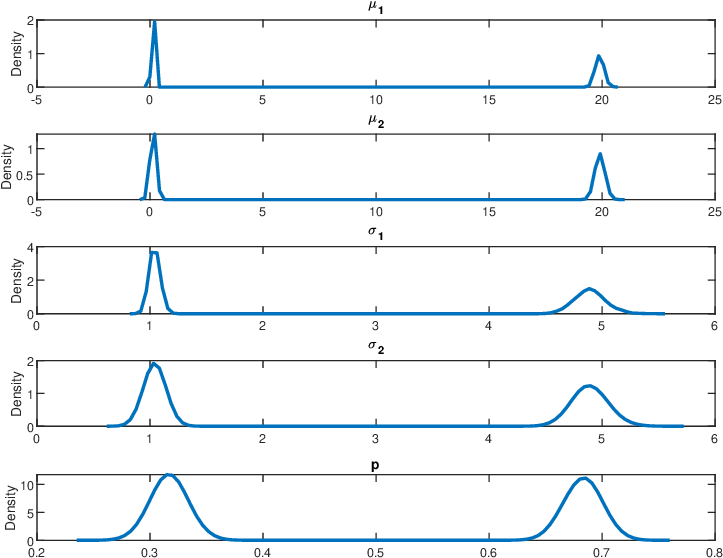}
    \caption{Samples from IA–RWM $\widetilde P$}
    \label{fig:density_chain_new}
  \end{subfigure}
  \caption{Sampling distributions from a mixture Gaussian likelihood using RWM and IA–RWM $\widetilde P$}
  \label{fig:density_plot}
\end{figure}

\begin{figure}[h!]
  \begin{subfigure}[b]{0.5\textwidth}
    \includegraphics[width=\textwidth]{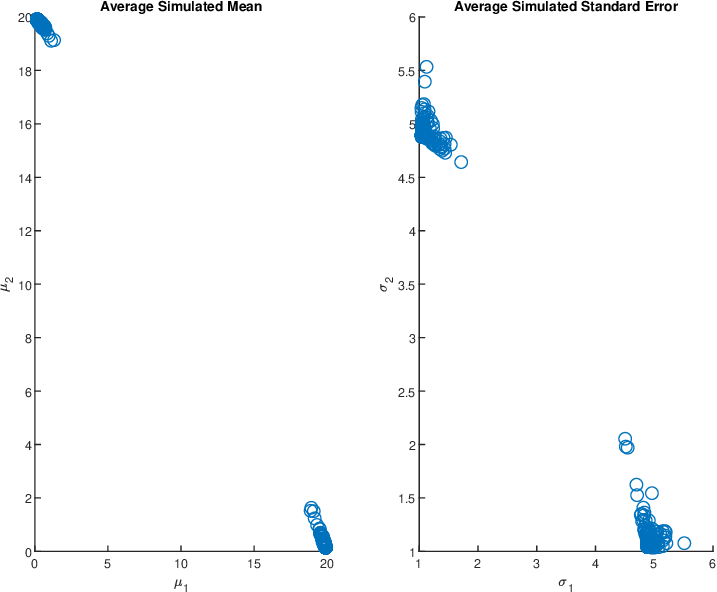}
    \caption{Sample moments from RWM}
    \label{fig:MCMC_moment}
  \end{subfigure}
  \hfill % it is important to have this hfill to get space between the images
  \begin{subfigure}[b]{0.5\textwidth}
    \includegraphics[width=\textwidth]{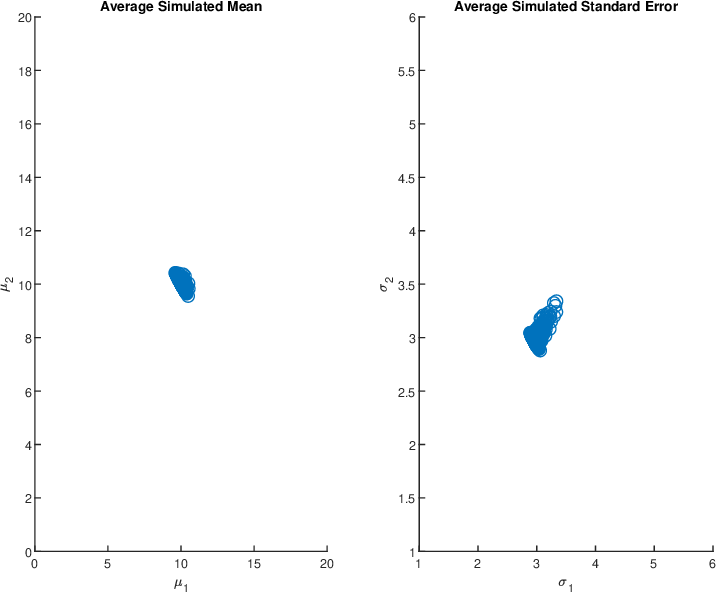}
    \caption{Sample Moments from IA–RWM}
    \label{fig:IAMH_moment}
  \end{subfigure}
  \caption{Comparison of sampled moments across \( 1,000 \) chains with length \( N = 1{,}000 \) }
  \label{fig:moments}
\end{figure}

While our identification–aware MCMC method is highly effective at exploring multi-modal distributions, its gain in efficiency becomes less obvious when local identification fails. Specifically, in scenarios where maxima are not isolated but instead form continuous manifolds or more complex structures, the method's ability to thoroughly navigate the distribution is unclear.

\begin{simulation}[Conditional Gaussian]
\label{sim:conditional_gaussian}
In this simulation exercise, we sample two Gaussian distributions, one with two parameters and the other with seven parameters. They are both of the form 
$$X \sim \mathcal{N}(\sum_{i}^k\mu_i, 1),$$
where $k =2$ and $10$, respectively.
Unlike the mixture Gaussian case, the observationally equivalent sets in this example will be $K(\mu_1,\dots, \mu_k)= \{\overline{\mu}_1,\dots, \overline{\mu}_k: \sum_{i}\overline{\mu}_i = \sum_{i} \mu_i \}$, an affine subspace with infinite elements.
\end{simulation}
In both experiments, a uniform prior is used. For 
$k=2$, the true parameter values are set to $(\mu_1, \mu_2)=(0,2)$, and a sample of size 1,000 is generated. The naive Maximum Likelihood Estimator (MLE) is highly sensitive to the choice of initial values. For instance, initializing at $(0,0)$ yields estimates of $(\hat{\mu}_1, \hat{\mu}_2)= (0, 2)$.\footnote{Different starting points lead to varying MLE estimates, which is expected because any $(\mu_1,\mu_2)$ pair with the same sum results in identical likelihoods. This dependence on initial values arises solely from the implementation of the interior-point optimization method. Similar effects are observed in the movement of particles within SMC.} To see how well each algorithm explores the support, we bound each $\mu_i$ between $[-10, 10]$. The result of one simulation run with 10,000 iterations\footnote{For SMC, it has 10,000 particles with 10 iterations.} in Figure \ref{fig:cond_gaussian_dim2} gives more credit to sequential Monte Carlo. Metropolis Hastings on average (across simulations) performs well, but it tends to be less stable and less evenly distributed on the global maxima.
\begin{figure}[h!]
  \centering
  \includegraphics[width=0.8\textwidth, height=8cm]{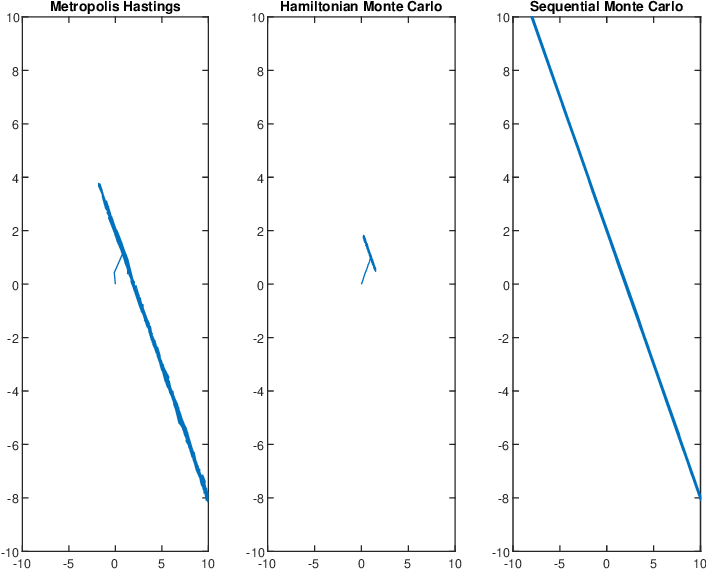}
  \caption{Trace plot of posterior samples for \(\mu_1\) and \(\mu_2\) of size 10,000,  $k=2$}
  \label{fig:cond_gaussian_dim2}
\end{figure}

Since standard SMC outperforms the other two classic algorithms in this example, we expanded the study to a higher dimensional scenario with $k=10$. This scaling allows for a more comprehensive evaluation of the algorithms' performance in higher-dimensional parameter spaces, where the number of particles and  identification–aware MH (IA–RWM) samples remains relatively small compared to the dimensionality, limiting their ability to fully explore the space. In this scaled-up exercise, we compare only the performance of IA–RWM and SMC, utilizing the settings from \cite{herbst2014sequential}. The true parameter values are set to $\mu_1=10$ and $\mu_i=0$ for $i\neq 1$. Intuitively, when dimension $n$ increases, the number of points needed to explore the full support grows exponentially. We ran SMC with 20 tempering stages and $10^8$ particles, effectively approaching the machine's hardware limit.
 We sample the same amount of points with 10,000 Metropolis-Hastings sample, and 10,000 from $K(\mu_1,\dots, \mu_k)$ for each $(\mu_1,\dots, \mu_k)$, using the batch augmentation variant in \ref{append: ia_variants}. Both methods were parallelized in MATLAB. On an Intel Xeon Gold 6246R CPU with 128 GB RAM, the SMC procedure took approximately 1,100 seconds to complete, whereas our method required about 700 seconds.

In Figure \ref{fig:Particle_SMC_IAMHMC}, the spatial distribution of samples projected to the first two dimensions is depicted. SMC particles exhibit a clear clustering around the point $(10,0)$, which aligns closely with the initial particle distribution. This concentration suggests that SMC is strongly influenced by the starting values, potentially limiting its exploration of the parameter space. In contrast, IA–RWM samples are more evenly dispersed in the $[-10,10]^2 $ subspace, demonstrating a more thorough exploration and reduced dependence on initial conditions.

Figure \ref{fig:KernelDensity_SMC_IAMHMC} illustrates the marginal kernel density estimates for the parameters $\mu_1$ and $\mu_2$. The SMC method shows a sharp concentration of $\mu_1$ around 10 and 
$\mu_2$ around 0, reflecting the clustering observed in the scatter plot. This concentration near the starting values indicates a potential limitation in capturing the full posterior distribution's variability when local identification fails. In contrast, IA–RWM exhibits a more diffuse marginal density, approaching a uniform distribution while still reflecting a slight  offset from the origin. This diffuse distribution underscores IA–RWM's capability to explore the parameter space more effectively, capturing a broader range of plausible parameter values.

Overall, the comparative analysis across the spatial distributions, marginal densities, and summary statistics underscores the enhanced performance and robustness of the IA–RWM method over the traditional SMC approach, particularly in scenarios where thorough exploration of high-dimensional, flat regions is essential.  
\begin{figure}[h!]
   \centering
    \includegraphics[width=0.6\textwidth]{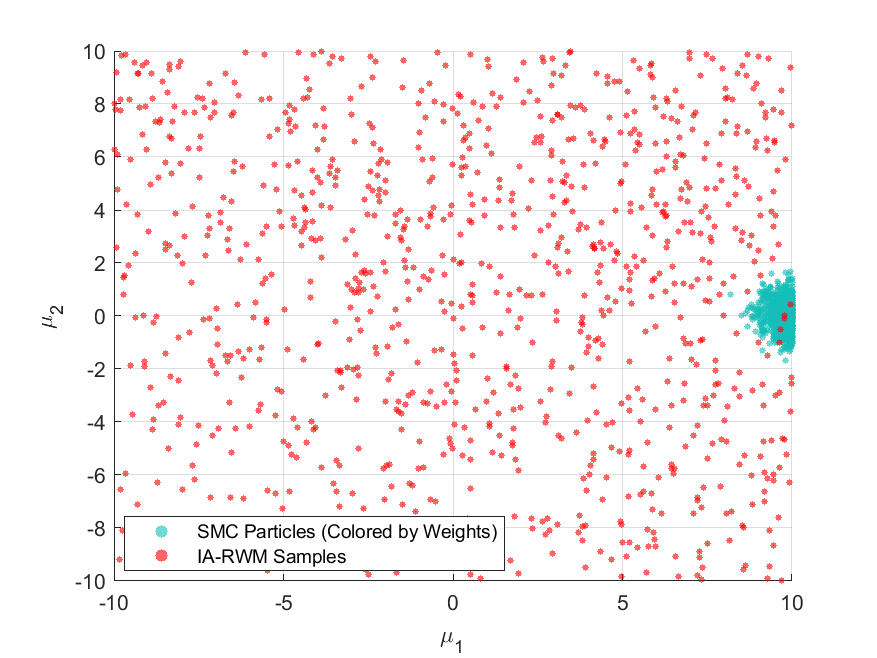}
  \caption{2D projection of samples: SMC particles vs. IA–RWM samples in \((\mu_1, \mu_2)\) space}
  \label{fig:Particle_SMC_IAMHMC}
\end{figure}

\begin{figure}[h!]
   \centering
    \includegraphics[width=0.7\textwidth]{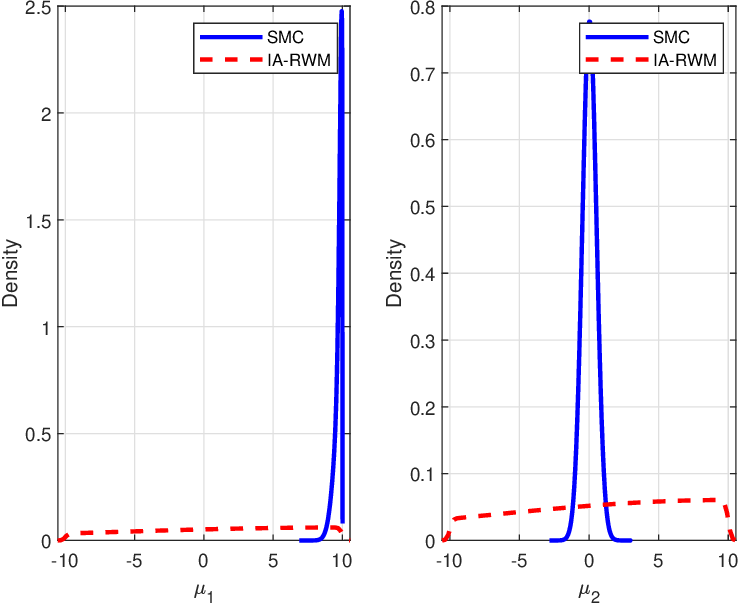}
  \caption{Marginal kernel density estimates for parameters \(\mu_1\) and \(\mu_2\): SMC vs. IA–RWM}
  \label{fig:KernelDensity_SMC_IAMHMC}
\end{figure}

\section{Identification-aware MCMC with informative prior}
\label{sec:bayesian_analysis}
The target posterior matters for teleportation solely through the prior conditional on the observationally equivalent set. Note that in earlier examples with a uniform prior, all observationally equivalent modes have the same posterior height, making the inferior performance of traditional samplers less apparent in the Bayesian case where the prior over observationally equivalent parameters is informative through the prior specification.
In this section, we explicitly incorporate the informative conditional prior into the teleportation step to ensure the resulting Markov chain maintains the posterior as its stationary distribution. We begin with an MA(1) example to show how the teleport step integrates into Bayesian samplers and improves posterior sampling efficiency.

\subsection{Moving Average Processes}
It is well known that an $\text{MA}(1)$ process,
\[
y_t  =  \epsilon_t + \theta \epsilon_{t-1}, 
\qquad \epsilon_t \sim \mathcal{N} \left(0,\sigma^2\right),
\]
exhibits an observational equivalence between $(\theta,\sigma)$ and $(\theta^{-1}, |\theta| \sigma)$.  Throughout we analyze a single dataset generated at $(\theta,\sigma)=(0.5,1)$.  The observationally equivalent point is $(\theta,\sigma)=(2,0.5)$, which induces the same likelihood.

Following \citet{plagborg2019bayesian}, we consider independent priors on $(\theta,\sigma)$. The prior places $\theta\sim\mathcal{N}(1,0.5^2)$ and $\log \sigma\sim\mathcal{N}(0,0.25^2)$, which favors the vicinity of $(0.5,1)$ over $(2,0.5)$.  For comparison, we also examine a likelihood-only (uniform prior) specification, in which the prior is effectively flat and the posterior coincides with the likelihood.\footnote{In the uniform case, the conditional “fiber” move between $(\theta,\sigma)$ and $(\theta^{-1},|\theta|\sigma)$ is a simple $1/2$ draw.}

We perform posterior inference in the transformed parameterization \((\theta, s=\log\sigma)\) using three samplers, each run for $50,000$ iterations: (i) a random-walk Metropolis (RWM; target acceptance \(0.234\)), (ii) the adaptive No-U-Turn Sampler (NUTS; target acceptance \(0.80\); \citealp{hoffman2014no}), and (iii) an identification–aware random-walk Metropolis (IA–RWM) that augments local RWM updates (target \(0.234\)) with a teleport move between observationally equivalent points.

NUTS is a Hamiltonian Monte Carlo method that adaptively adjusts both the leapfrog step size and the trajectory length, so the user need only supply gradients of the log posterior. It is well regarded for efficiently exploring the interior of a single mode, even in moderately high-dimensional settings. However, as noted by \cite{plagborg2019bayesian}, performance can deteriorate under highly diffuse priors when the posterior is multi-modal, as exploration across modes may slow despite adaptation.

To assess sensitivity to the prior and initialization, we consider three setups: a uniform (likelihood-only) prior, an informative prior with the chain initialized at the true value $(0.5,1)$, and the same informative prior with the chain initialized at the observationally equivalent point $(2,0.5)$. For each configuration and sampler, we overlay in light gray the true marginal posterior obtained by direct grid integration of the joint log posterior and plot the sampler’s marginal density estimate from the draws.

\begin{figure}[h!]
  \centering
  \includegraphics[width=0.95\textwidth]{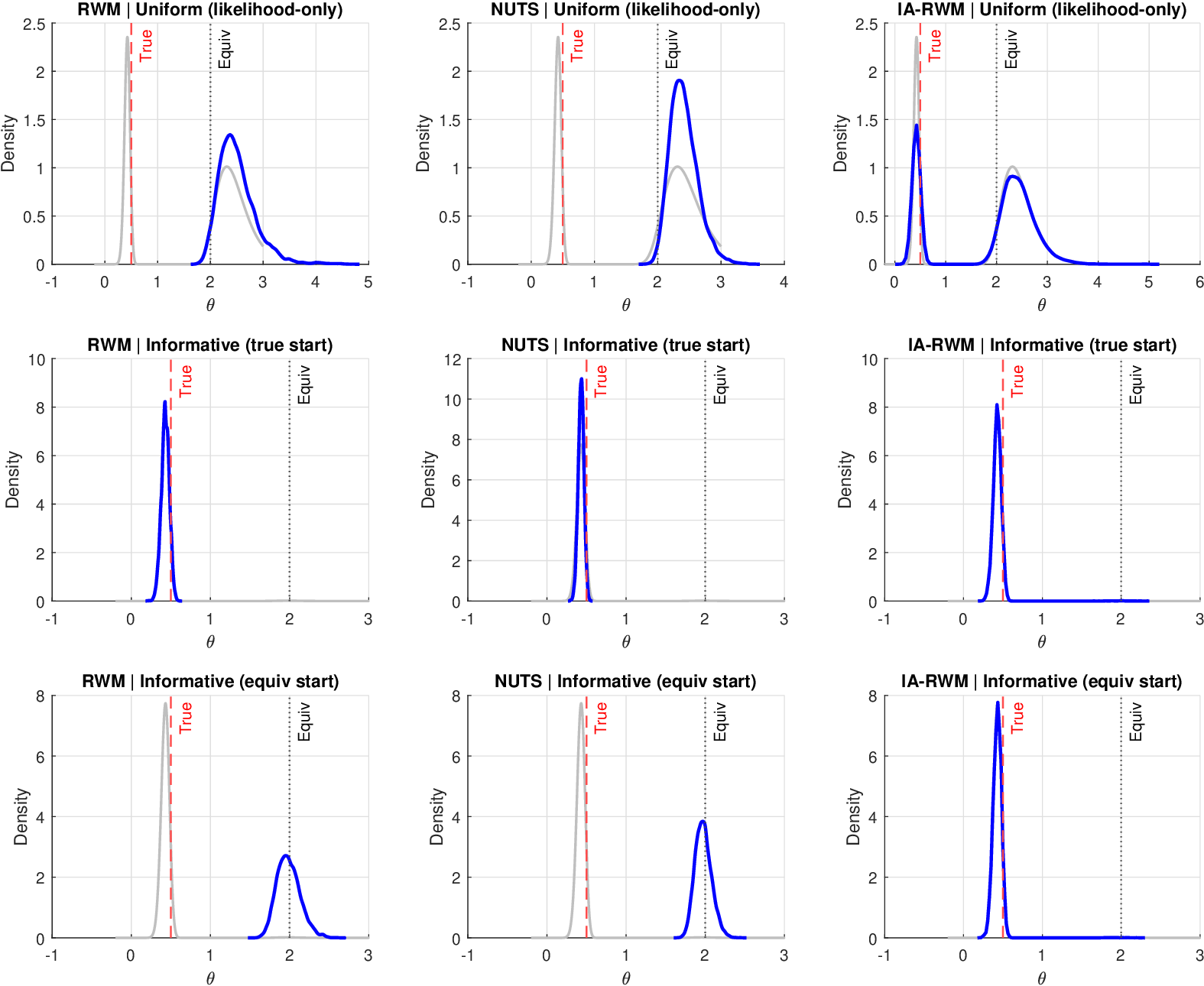}
  \caption{$\theta$ posteriors. Light-gray curves are grid-based marginal posteriors; colored curves are KDEs from sampler draws. Vertical lines: true $\theta = 0.5$ (red dashed) and observationally equivalent $\theta = 2$ (black dotted).}
  \label{fig:theta_3x3}
\end{figure}

\begin{figure}[h!]
  \centering
  \includegraphics[width=0.95\textwidth]{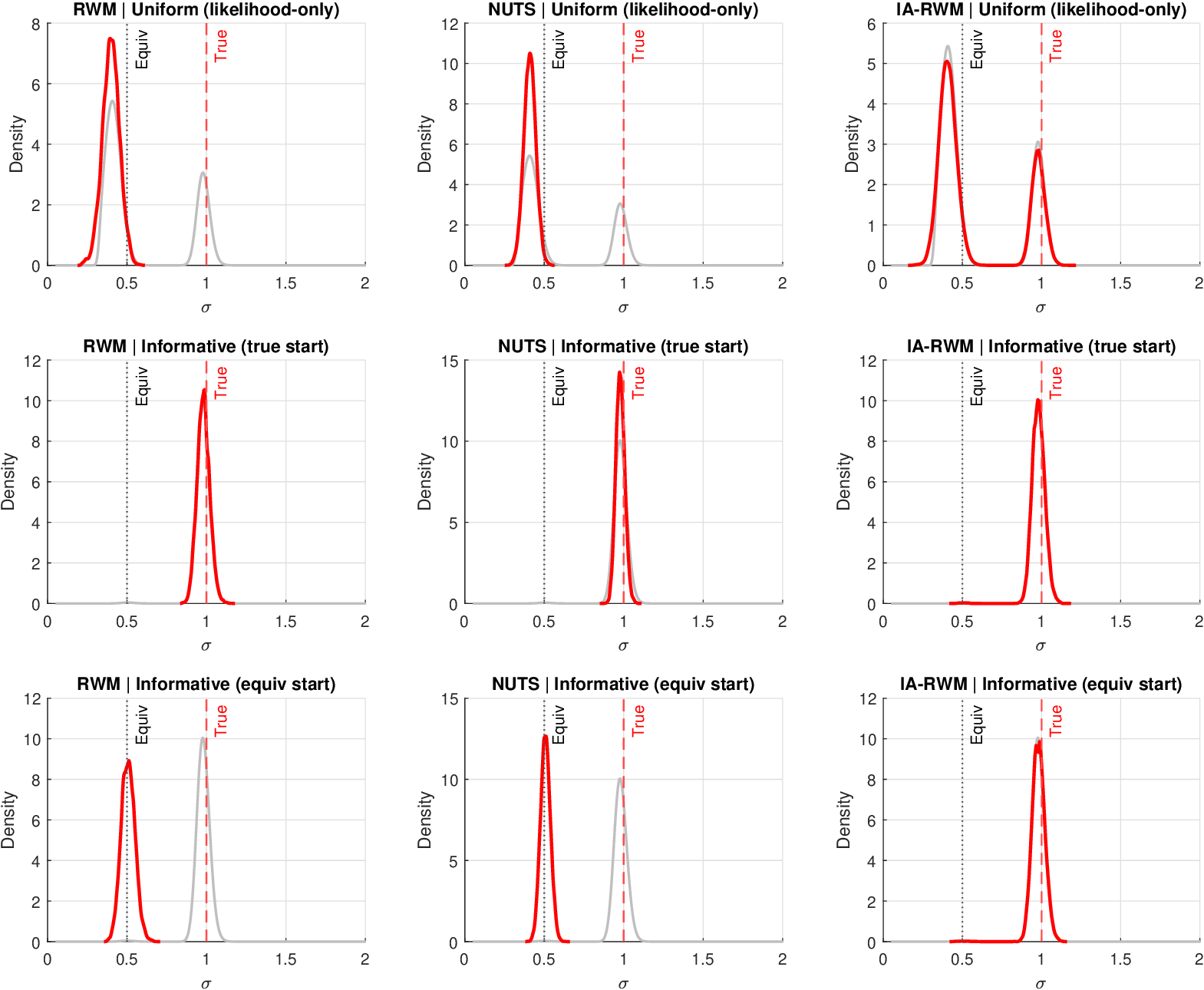}
  \caption{$\sigma$ posteriors. Light-gray curves are grid-based marginal posteriors; colored curves are KDEs from sampler draws. Vertical lines: $\sigma = 1$ (red dashed) and observationally equivalent $\sigma = 0.5$ (black dotted).}
  \label{fig:sigma_3x3}
\end{figure}

From Figure \ref{fig:theta_3x3} and \ref{fig:sigma_3x3} we can see that, under the informative prior, the posterior mass is strongly concentrated near $(0.5,1)$, with a smaller secondary mode near $(2,0.5)$.  
When initialized near the dominant mode (row 2), both RWM and NUTS produce marginals that agree well with the gray truth curves.  
However, when initialized at the observationally equivalent point (row 3), RWM and NUTS tend to get stuck in that local region for the entire sampling process and can under-represent the dominant mode, yielding marginals that deviate from the gray overlays and masking bimodality.  
In contrast, IA–RWM, by explicitly traversing the observationally equivalent points, robustly mixes between the two regions and recovers the correct marginal shapes regardless of initialization.  

Under the uniform (likelihood-only) specification (row 1), the mass allocated to the two observationally equivalent regions is more balanced. Here, too, IA–RWM delivers consistent bimodal marginals, while RWM/NUTS are more sensitive to their starting point.  

Overall, the results show that set identification and multimodality can hinder standard samplers, whereas exploiting the observational-equivalence structure enables more effective posterior exploration.

\FloatBarrier 
\subsection{Application: News Shocks and Business Cycles}
 
In this section we reassess the algorithms using the SVMA setup of \citet{plagborg2019bayesian}. An $n$ variable SVMA writes the data as \[y_t=\sum_{\ell=0}^{q}\Theta_\ell \varepsilon_{t-\ell}, \qquad \varepsilon_t\sim \mathcal{N}(0,\Sigma). \]  
 We collect impulse responses in $\Theta=\{\Theta_{ij,\ell} : 1 \leq i,j \leq n\}$ and the shocks are assumed i.i.d. Gaussian with covariance matrix $\Sigma=\mathrm{diag}(\sigma_{1},\dots,\sigma_{n})$. Identification for this model has been studied extensively in \cite{lippi1994var} and \cite{plagborg2019bayesian}, and are briefly revisited in Appendix~\ref{append:empirical}. The application uses three quarterly U.S. series: TFP growth, real GDP growth, and the ex-ante real federal-funds rate, and three latent shocks: an unanticipated productivity shock, a technology news shock, and a monetary-policy shock. The sample spans 1954Q3 – 2007Q4 ($T=213$). Series are detrended using the Stock–Watson kernel smoother, and the real rate is the effective funds rate minus contemporaneous GDP-deflator inflation. We adopt a finite MA lag $q=16$: the Akaike Information Criterion suggests $q\approx 13$, while autocorrelation diagnostics support a slightly longer window, so $q=16$ is chosen to be conservative.

We place a multivariate Gaussian prior on the impulse responses. For each $(i,j)$ and horizon $0\le \ell\le q$,
\[
\Theta_{ij,\ell}\sim \mathcal N(\mu_{ij,\ell},\ \tau_{ij,\ell}^2), 
\qquad 
\mathrm{Corr}(\Theta_{ij,\ell+k},\Theta_{ij,\ell})=\rho_{ij}^{ k}\ \ (0\le \ell\le \ell+k\le q),
\]
and the vectors $\{\Theta_{ij,0:q}\}$ are a priori independent across $(i,j)$. Impact responses are normalized by fixing the own-shock impacts to unity, $\Theta_{jj,0}=1$, with zero prior variance $(\tau_{jj,0}^2=0)$.  Shock standard deviations 
$\sigma_1,\ldots,\sigma_n$ are mutually independent and independent of the IRFs, with
\[
\log \sigma_j \sim \mathcal N(\mu_j^\sigma,\ (\tau_j^\sigma)^2).
\]

Priors are centered on the log-linearized sticky-price DSGE model of \citet{sims2012news}:
$\mu_{ij,\ell}$ equals the DSGE IRF, except for the news shock where the uncertain anticipation horizon is handled by setting the TFP mean to one-half of the DSGE impact and spreading that mass over $\ell=0,\ldots,6$. Variances $\tau_{ij,\ell}^2$ are calibrated so the DSGE IRFs lie inside 90\% prior bands under reasonable parameter perturbations. Bands for news-shock IRFs are deliberately wide (often including zero), whereas monetary-policy IRFs are tighter to reflect stronger beliefs about their qualitative shape. Smoothness parameters use $\rho_{ij}=0.5$ for TFP responses and $\rho_{ij}=0.9$ for GDP and the real rate, encoding smoother behavior for output and rates and allowing spikier productivity. Shock-scale priors are deliberately vague, with $\mu_j^\sigma=\log 0.5$ and $\tau_j^\sigma=2$.

In his paper, \cite{plagborg2019bayesian} uses a two-step heuristic to provide the No-U-Turn Sampler (NUTS) with a high-density starting value. First, he constructs a rough posterior-mode guess: the sample auto-covariance function of the data is computed, an invertible SVMA representation matching this auto-covariance is obtained, and a greedy search over all root flips of the characteristic polynomial is performed so that the candidate maximizes the Gaussian prior density.\footnote{This is essentially a discrete search among a subset of observationally equivalent points, as no rotation is performed.} This candidate mode is then blended with the prior mean along a convex grid. The weight that yields the highest posterior probability defines the initial parameter vector supplied to the sampler. The procedure ensures numerical stability (by starting from an invertible representation) and avoids low-density regions that would require a long burn-in. On the other hand, it starts in a high-probability region and a local Markov chain may still get trapped there if the target distribution is multi-modal.

We generate posterior samples of size 10,000 (thinned from 100,000, which is ten times the number in \cite{plagborg2019bayesian}) with NUTS, IA–RWM (block-wise), and IA-NUTS separately, first using his original prior, and then with a bounded uniform prior. The identification–aware variant used here is the reversibilized teleport–local composition \(\bar P=\tfrac12(PT+TP)\) for both IA–RWM and IA–NUTS. Implementation details are provided in Appendix~\ref{append:empirical}.

Figure \ref{fig:newsshock_logsima_threeway} reports posterior results for $\log \sigma_i$ (marginal plots of $\Theta$ are in the Appendix) under the tighter prior. All samplers are initialized at the same point obtained by a greedy mode search. Because visualizing the full “true” posterior, let alone its marginals, is infeasible in this setting, we diagnose behavior using summary statistics and local mode analyses. On the trace plots and marginal densities, NUTS and IA-NUTS are more stable than IA–RWM and deliver very similar marginal shapes. Relative to NUTS, however, IA-NUTS yields (in our sample) a lower posterior mean, similar average log-posterior, a larger posterior mode, and greater variance (Table \ref{tab:posterior_summary}). The marginal for $\log \sigma_1$ and $\log \sigma_2$  suggests that IA-NUTS locates a distinct mode that coincides with the mode visited by IA–RWM. While the numerical values should not be read too literally, the table indicates that IA–RWM may have traversed a broader region of support, and that the target is at least bimodal (and possibly multimodal).

As an additional piece of evidence, IA–RWM’s draws frequently achieve high log-posterior values with non-trivial step sizes and the chain spends sustained time in a region that NUTS visits rarely (if at all). This persistent occupancy, together with relatively large log-posterior values, points to an alternative high-density region with non-negligible mass that is more readily uncovered by the IA-based procedures.

To further investigate, we performed a sample-based mode search that draws candidate seeds from each sampler and then conducts brief, box-constrained local optimization from condensed seed sets. This exercise consistently returned higher log-posterior maximizers from IA-NUTS or IA-RWM than from NUTS, and the top IA-NUTS modes were well separated from those reached by NUTS.

However, there are two caveats that could temper the interpretation of mode-search. First, none of the polished modes reproduces the prominent visual marginal mode of NUTS with very small $\log\sigma_2$ (below -2), and the optimizers can be sensitive to tuning parameters. The apparent marginal peak need not correspond to a distinct high-posterior maximizer in the full parameter space. It may reflect projection of a broader ridge or a different basin whose summit lies elsewhere in $\Theta$. Second, larger peak log-posterior values do not imply larger posterior mass. Our mode search certifies the existence of alternative high-density basins but does not quantify their volume.

Taken together, the evidence supports the practical advantage of IA-NUTS over standard NUTS in this problem: IA-NUTS more reliably discovers alternative basins with higher attained log-posterior than those found by NUTS under the same computational budget. At the same time, without being able to quantify the posterior mass of the explored regions, the extent of IA-RWM and IA-NUTS’s mixing remains unknown.

\begin{table}[htbp]
\centering
\caption{Summary statistics of 10,000 posterior draws}
\label{tab:posterior_summary}
\begin{tabular}{lccc}
\toprule
& NUTS & IA–RWM & IA--NUTS\\
\midrule
Log-posterior at mean & 75.82   & -312.66  & -582.326 \\
Avg log-posterior & 2.08  &  6.18 & 1.14 \\
Sum per-param SD & 23.70  & 25.98  & 36.49 \\
Highest log-posterior & 36.14  & 59.57  & 67.88 \\
\bottomrule
\end{tabular}
\end{table}

\begin{figure}[h!]
  \centering \includegraphics[width=\textwidth]{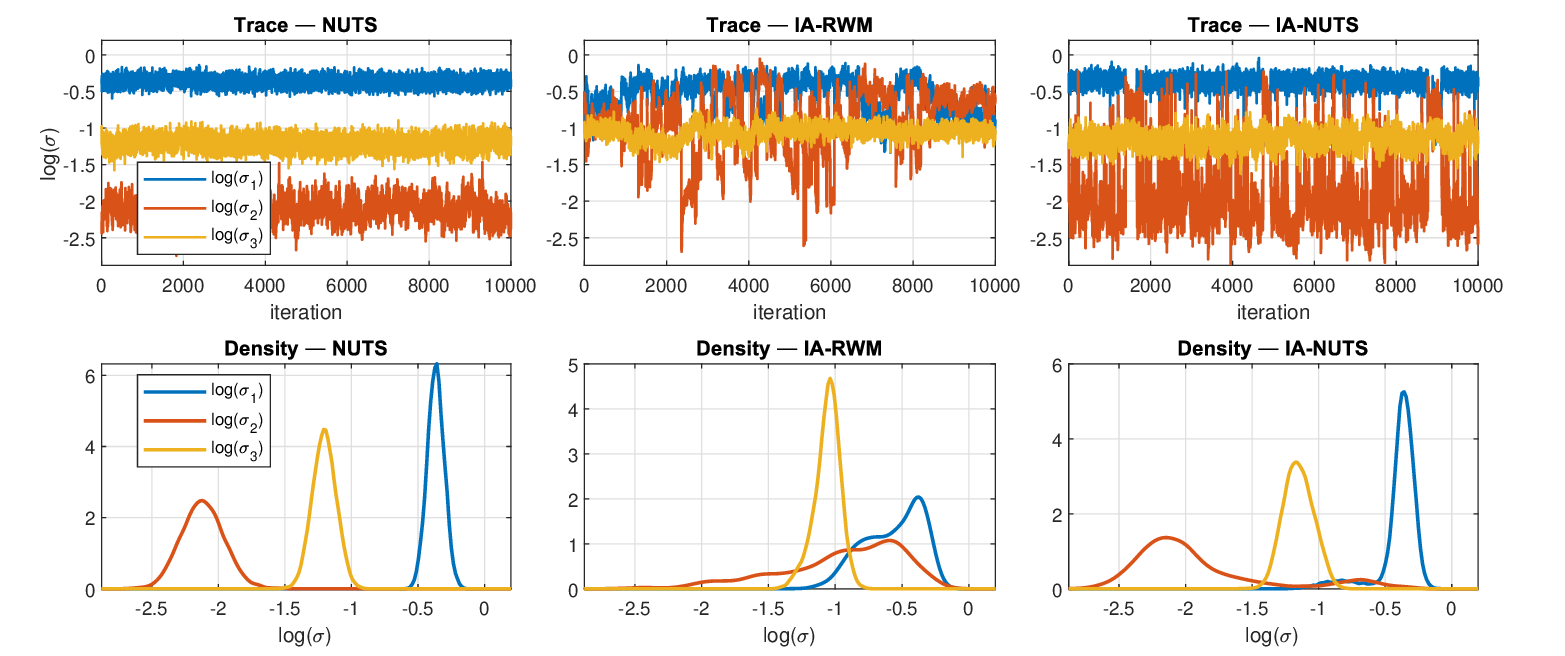}
  \caption{Trace and kernel-density plots of 10,000 posterior $\log \sigma$ draws, obtained by retaining every 10th sample after thinning.}
  \label{fig:newsshock_logsima_threeway}
\end{figure}

Next, we replace the informative Gaussian prior with an independent, uniform prior on $(\Theta,\boldsymbol{\phi})$, where $\Theta=\{\Theta_{ij,\ell}\}$ and $\boldsymbol{\phi}=(\phi_1,\ldots,\phi_n)$ with $\phi_j=\log\sigma_j$. Specifically, $\Theta_{ij,\ell}\sim\text{Unif}[a_{ij,\ell},b_{ij,\ell}]$ and $\phi_j\sim\text{Unif}[L_j,U_j]$ independently across all indices (i.e., uniform on a hyper-rectangle), where $a_{ij}=b_{ij}=100, L_j=-8, U_j=5$ for all $i,j$. The sampler enforces these bound constraints via specular reflection at the boundaries.

Given the 153-dimensional parameter space and the flat directions created by set identification, and exacerbated under a uniform prior, the goal is to find a sampler that balances numerical stability with genuine state-space coverage. From Figure \ref{fig:newsshock_logsima_threeway_uniform}, and Figure \ref{fig:uniform_newsshock_theta_nuts} - \ref{fig:uniform_newsshock_theta_ianuts_chain} in the Appendix, the IA-NUTS achieves that balance. Its trace plots are stationary yet display sustained movement across the support without prolonged residence near the parameter bounds, and its kernel densities are sharply peaked with credible shoulders and tails. Such shapes are expected when information is weak: many IRFs are locally close to zero under the normalization, the likelihood changes very little along broad manifolds, and the marginal projections of these manifolds concentrate probability near the origin while retaining non-negligible mass in the wings.  IA-NUTS’s occasional global refresh moves reposition the chain across observationally equivalent regions, which helps prevent persistent max-depth saturation and reduces the artificial broadening that repeated boundary reflections can induce.

Baseline NUTS performs noticeably worse in this environment. The traces experience long excursions and show pronounced swings, and the corresponding marginals look overly broad and uneven across panels, suggesting the sampler is not settling into a stable geometry. In fact, baseline NUTS turns out to cost 3 times more gradient evaluations than IA-NUTS, suggesting frequent hits of maximum depth. By contrast, IA–RWM produces very persistent traces with tiny step-to-step movement, and its marginals collapse into narrow spikes near zero—patterns consistent with an overly conservative proposal that fails to traverse the flat directions. IA-NUTS strikes the better balance: its traces remain stable without sticking, and its marginals are tight where the data are informative yet still display believable shoulders and tails, indicating more faithful exploration of the weakly identified posterior.

\begin{figure}[h!]
  \centering \includegraphics[width=\textwidth]{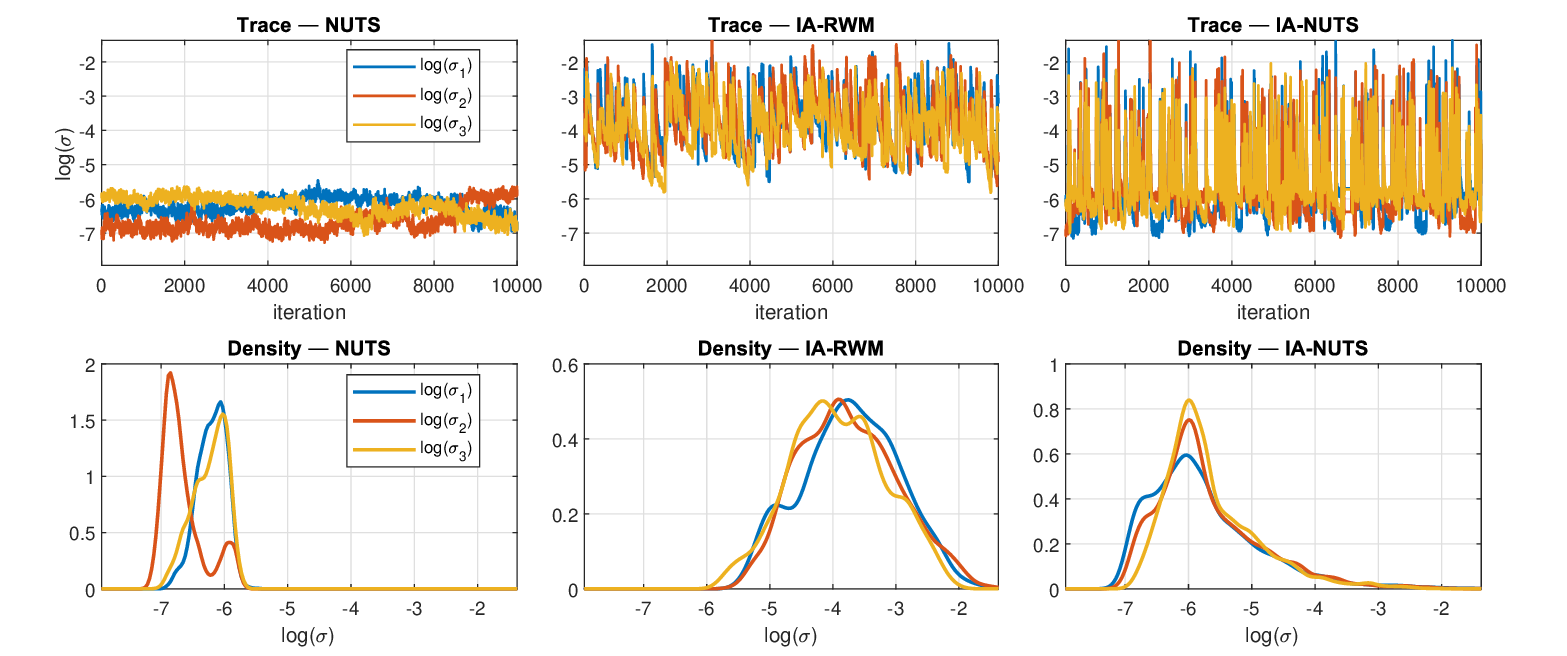}
  \caption{Trace and density plots of 10,000 posterior $\log \sigma$ draws, obtained by retaining every 10th sample after thinning under the uniform prior.}
  \label{fig:newsshock_logsima_threeway_uniform}
\end{figure}

\FloatBarrier
\section{Conclusions and discussion}
We proposed identification-aware sampling schemes and showed that they outperform the conventional RWM and HMC in terms of the speed of convergence. 
The key idea of exploiting identified sets and introducing global moves along them in sampling algorithms readily extends to a broad class of Markov chain based methods, thereby expanding both their applicability and practical efficiency.

Moreover, although our exposition focuses on \(K(\theta)\) as the set of parameters sharing the same data generating process (i.e., an observationally equivalent set) or likelihood, this assumption can be relaxed. In practice, one may define \(K(\theta)\) as any subset of parameters yielding nearly identical or approximated likelihood (e.g., using Whittle approximations). When exact equivalence is replaced by approximate equivalence, one can introduce a suitable weighting or correction step to ensure that the sampler still targets the correct posterior distribution. This generalization is especially valuable in models or data scenarios where strict observational equivalences are difficult to characterize, but approximate regions of high posterior density can be identified. We leave the thorough treatment of this issue for future work.
\end{spacing}
\FloatBarrier 
\newpage
\appendix
\section{Supplemental Figures and Tables}
\label{append:supp_figures}
\begin{figure}[h!]
  \centering \includegraphics[width=0.7\textwidth]{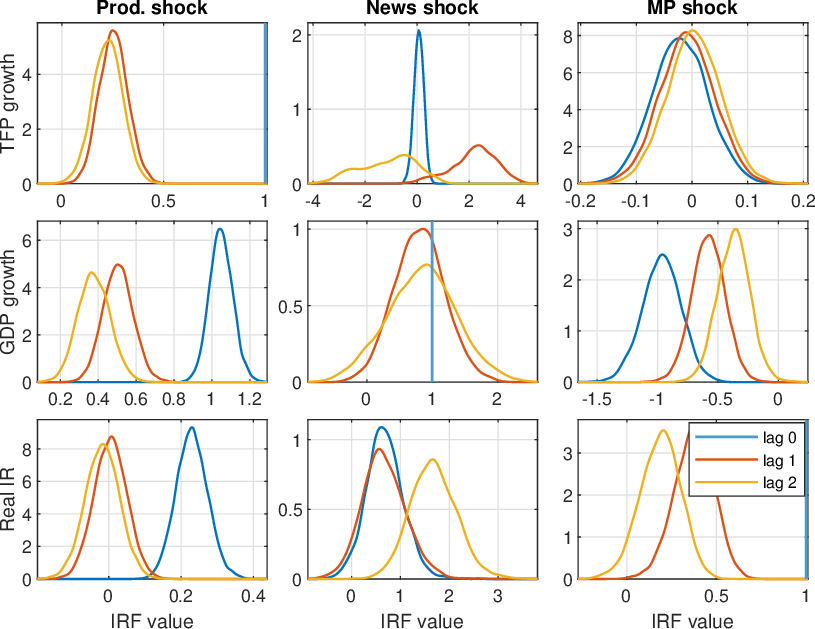}
  \caption{Kernel-density plots of IRF draws obtained by NUTS.} 
  \label{fig:newsshock_theta_nuts}
\end{figure}

\begin{figure}[h!]
  \centering \includegraphics[width=0.7\textwidth]{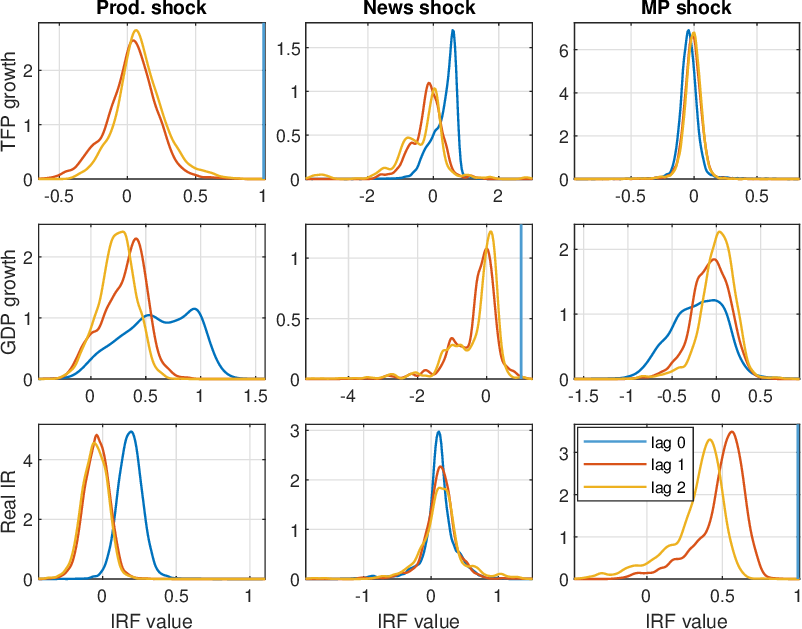}
  \caption{Kernel-density plots of IRF draws obtained by IA–RWM. }  
\end{figure}

\begin{figure}[h!]
  \centering \includegraphics[width=0.7\textwidth]{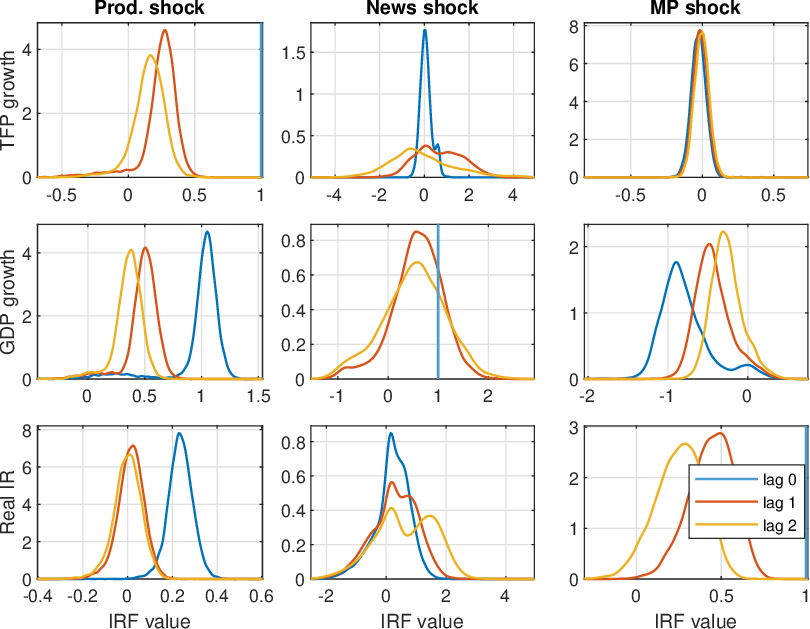}
  \caption{Kernel-density plots of IRF draws obtained by IA-NUTS. }  
\end{figure}

\begin{figure}[h!]
  \centering \includegraphics[width=0.8\textwidth]{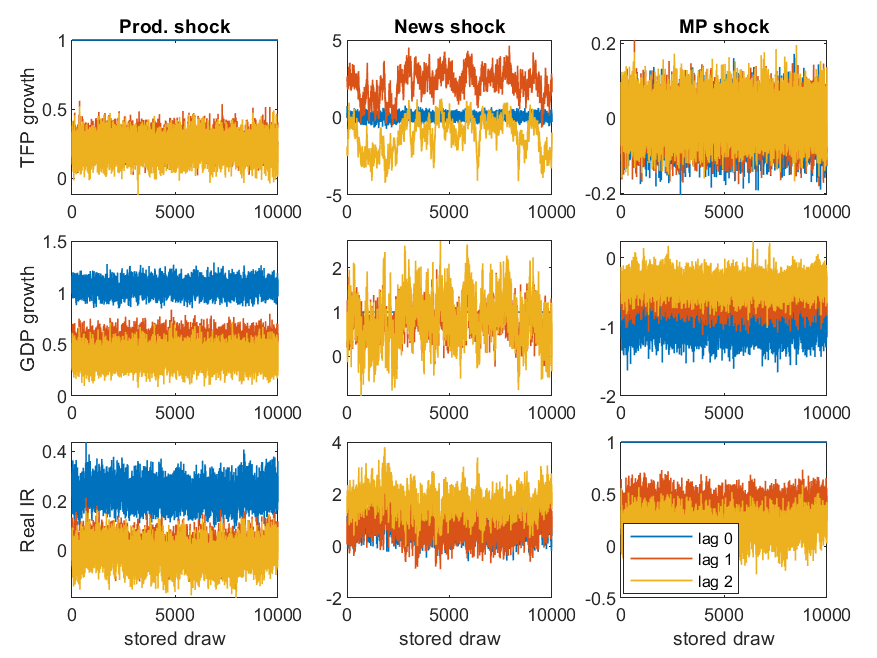}
  \caption{Trace plots of IRF draws obtained by NUTS.}
\end{figure}

\begin{figure}[h!]
  \centering \includegraphics[width=0.8\textwidth]{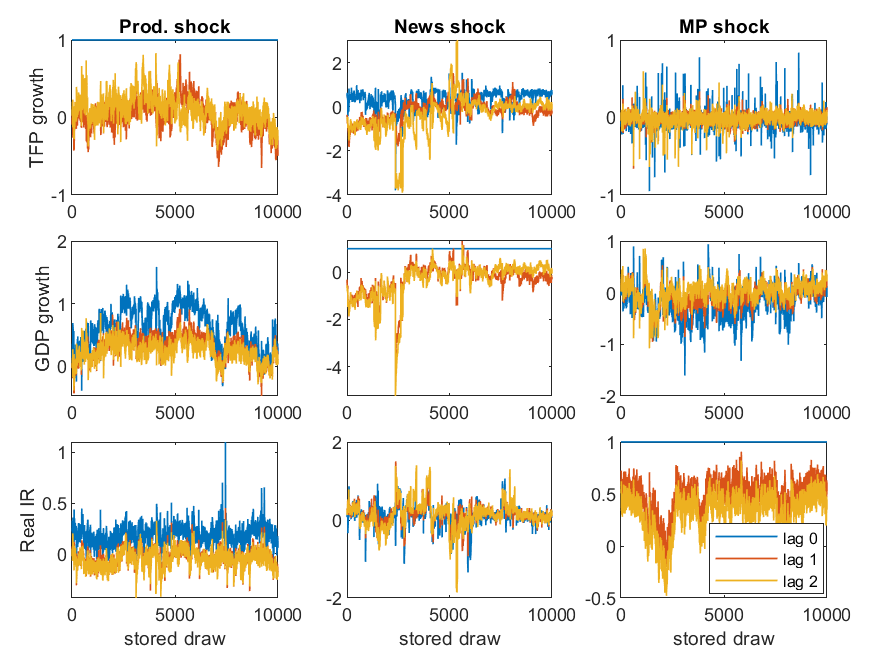}
  \caption{Trace plots of IRF draws obtained by IA–RWM.}
\end{figure}

\begin{figure}[h!]
  \centering \includegraphics[width=0.8\textwidth]{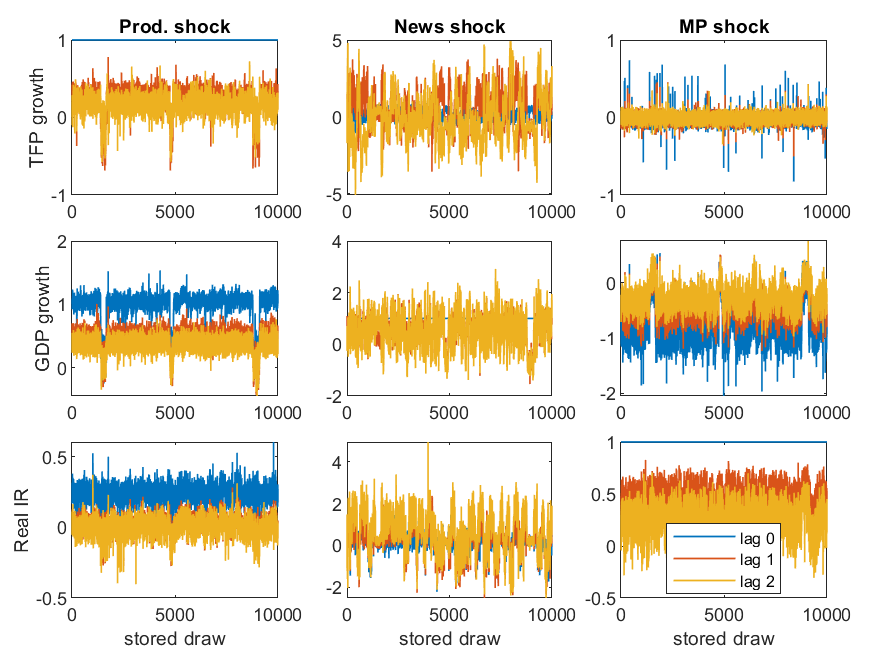}
  \caption{Trace plots of the posterior draws of IRF obtained with IA-NUTS.}
\end{figure}

\begin{figure}[h!]
  \centering \includegraphics[width=0.7\textwidth]{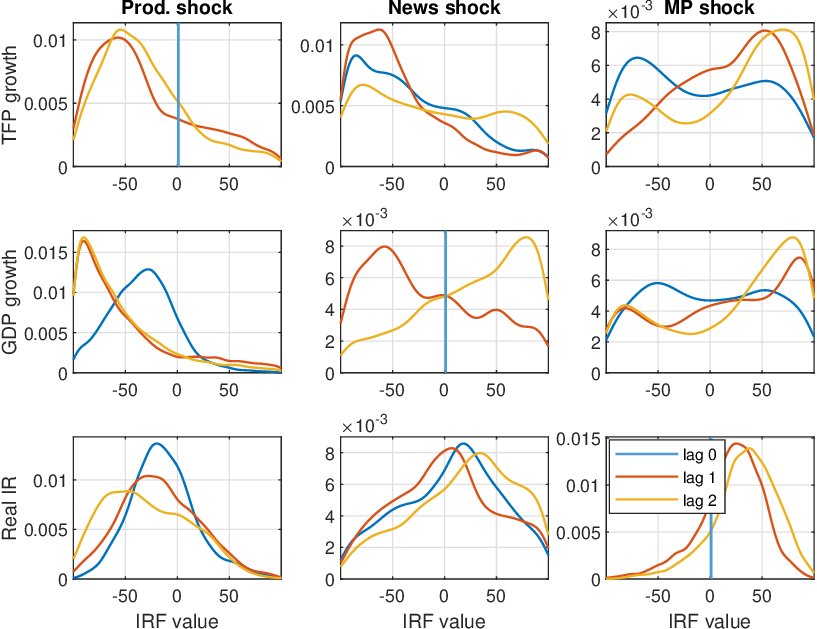}
  \caption{Kernel-density plots of IRF draws obtained by NUTS under the uniform prior.} 
  \label{fig:uniform_newsshock_theta_nuts}
 \end{figure}

\begin{figure}[h!]
  \centering \includegraphics[width=0.7\textwidth]{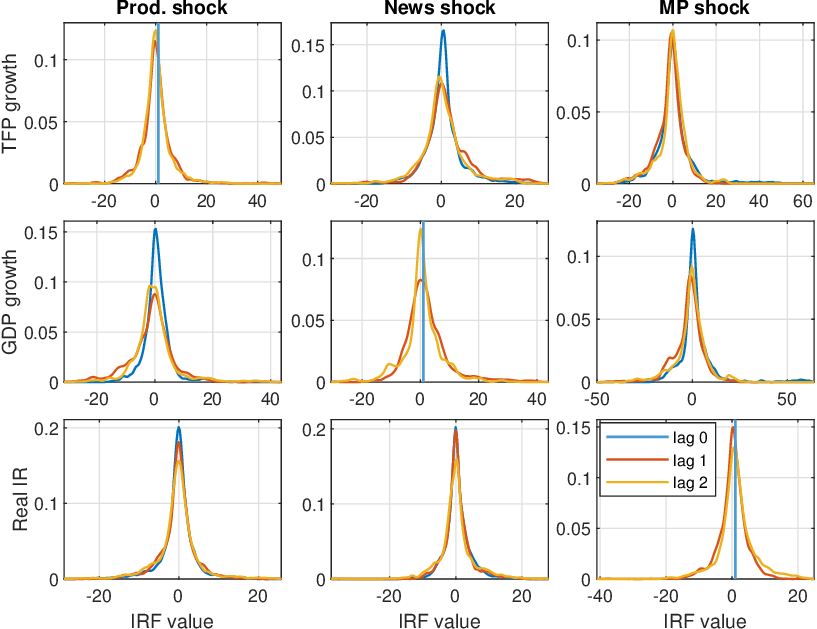}
  \caption{Kernel-density plots of IRF draws obtained by IA–RWM under the uniform prior. }  
\end{figure}

\begin{figure}[h!]
  \centering \includegraphics[width=0.7\textwidth]{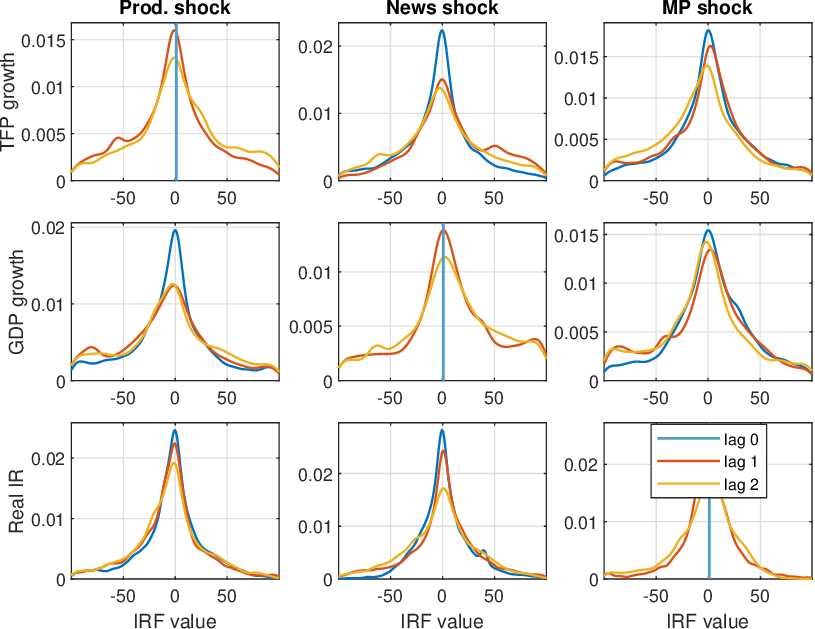}
  \caption{Kernel-density plots of IRF draws obtained by IA-NUTS under the uniform prior. }  
\end{figure}

\begin{figure}[h!]
  \centering \includegraphics[width=0.8\textwidth]{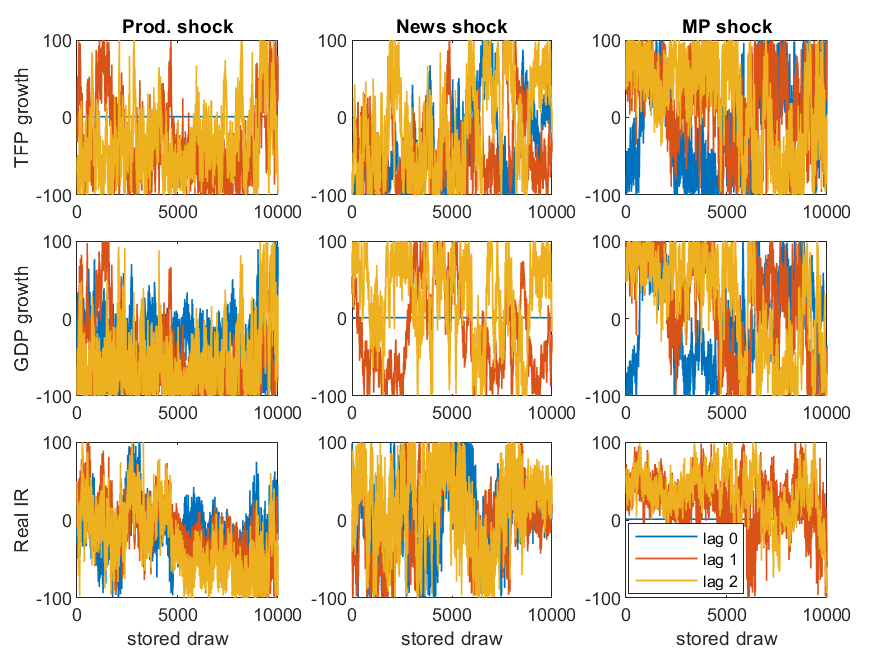}
  \caption{Trace plots of IRF draws obtained by NUTS  under the uniform prior.}
\end{figure}

\begin{figure}[h!]
  \centering \includegraphics[width=0.8\textwidth]{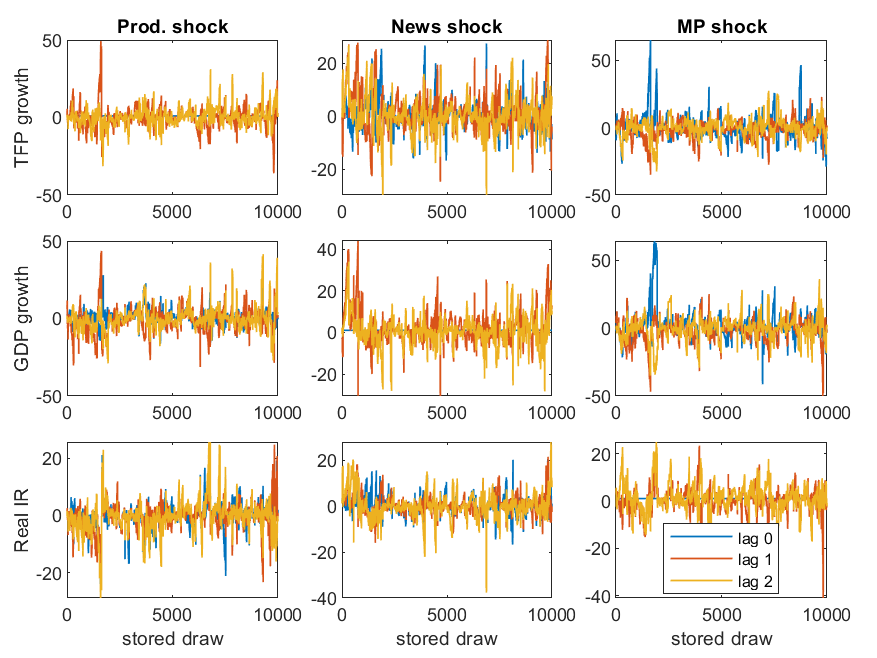}
  \caption{Trace plots of IRF draws obtained by IA–RWM under the uniform prior.}
\end{figure}

\begin{figure}[h!]
  \centering \includegraphics[width=0.8\textwidth]{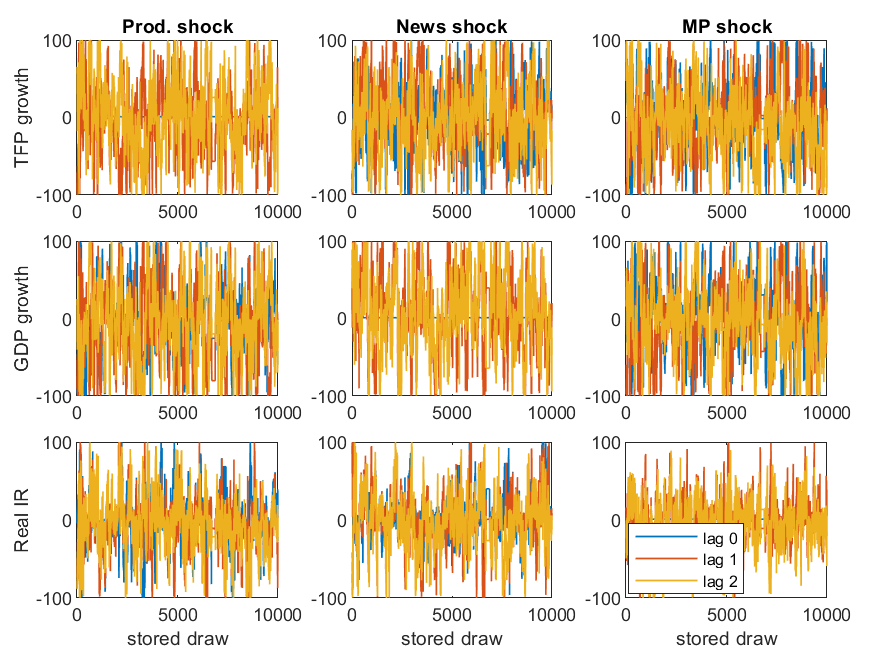}
  \caption{Trace plots of the posterior draws of IRF obtained with IA-NUTS under the uniform prior.}
  \label{fig:uniform_newsshock_theta_ianuts_chain}
\end{figure}

\FloatBarrier 
\section{Algorithms}

\subsection{Sequential Monte Carlo}
\label{appendix: SMC}
Sequential Monte Carlo (SMC) methods operate with a population of weighted particles that evolve through resampling and mutation steps, providing a flexible approach to approximating posterior distributions. Unlike single-chain MCMC methods, SMC can explore multiple regions of the parameter space simultaneously, which helps mitigate the risk of being trapped in local modes. Algorithm \ref{alg: SMC} follows the setup of \cite{herbst2014sequential}.

\begin{algorithm}[SMC]
\label{alg: SMC}
\begin{enumerate}
    \item \textbf{Setup:} 
    \begin{enumerate}
        \item Let \(\pi_0(\theta)\) be the prior and \(\pi_1(\theta) \propto \pi_0(\theta)    L(y  |  \theta)\) be the posterior. 
        \item Define a sequence of intermediate distributions:
        \[
            \pi_t(\theta) \propto \pi_0(\theta)   L(y  |  \theta)^{\lambda_t}, \quad t = 0,1,\ldots,T,
        \]
        where \(0 = \lambda_0 < \lambda_1 < \ldots < \lambda_T = 1\) is a tempered path from the prior to the posterior.
        \item Choose the initial number of particles \(N\) and the sequence \(\{\lambda_t\}_{t=0}^T\).
        \item Initialize particle set \(\{\theta^i_0\}_{i=1}^N\) by sampling from \(\pi_0(\theta)\). Assign equal weights \(w_0^i = 1/N\).
    \end{enumerate}

    \item  For  \(t = 1\) to \(T\):
    \begin{enumerate}
        \item \textbf{Reweight:} Update the importance weights from \(\{\theta_{t-1}^i, w_{t-1}^i\}\) to reflect \(\lambda_t\):
        \[
          w_t^i \propto w_{t-1}^i \times 
          \bigl[L(y  |  \theta_{t-1}^i)\bigr]^{(\lambda_t - \lambda_{t-1})}.
        \]
        Normalize the weights so that \(\sum_i w_t^i = 1\).

        \item \textbf{Resample:} If the effective sample size 
        \(\text{ESS} = 1 / \sum_i (w_t^i)^2\)
        falls below a threshold, resample the particles \(\{\theta_t^i\}_{i=1}^N\) according to \(\{w_t^i\}_{i=1}^N\). Reset weights to \(w_t^i = 1/N\) for all \(i\).
        
        \item \textbf{Mutate:} 
        \begin{enumerate}
            \item Perform one or more MCMC updates (e.g., a Metropolis-Hastings step) on each particle \(\theta_t^i\), using \(\pi_t(\theta)\) as the target distribution. 
            \item Update the weights if needed (in adaptive SMC, this might be integrated into the MCMC acceptance or proposal adjustments).
        \end{enumerate}
    \end{enumerate}

    \item \textbf{Output:} At \(t = T\), the particles \(\{\theta_T^i\}\) (with weights \(\{w_T^i\}\)) approximate the posterior \(\pi_1(\theta)\).
\end{enumerate}
\end{algorithm}
Because of its efficiency, SMC methods are amongst the most widely used computational techniques in statistics, engineering, physics, finance and many other disciplines. However, it also faces a few practical limitations. First, selecting the tempering schedule \(\{\lambda_t\}\) can be challenging: if increments in \(\lambda_t\) are too large, the particle weights may collapse rapidly (the particle set degenerates), while overly small increments lead to high computational cost. Second, each stage involves both a resampling step and MCMC mutation, which can be computationally expensive, particularly in high-dimensional parameter spaces or when the underlying model is highly non-linear. Moreover, while SMC is often more robust than a single, fixed MCMC chain, it can still suffer if the distribution is strongly multi-modal and the chosen path through the parameter space fails to adequately explore all modes. Careful tuning (e.g., adaptive proposal strategies in MCMC moves or adaptive schedules for \(\lambda_t\)) is usually necessary for SMC to achieve both broad coverage of the parameter space and efficient computational performance.

\subsection{Identification–aware Variants}
\label{append: ia_variants}

In this section, we list three schemes whose marginal transition kernel
\(\widetilde P(\theta,d\theta'')\) satisfies detailed balance
\(\pi(d\theta) \widetilde P(\theta,d\theta'')=
 \pi(d\theta'') \widetilde P(\theta'',d\theta)\), and a batch augmentation scheme.
A symmetric local proposal
\(q(\theta' | \theta)=q(\theta | \theta')\) is always assumed.

%======================================================================
\subsubsection{Random Order}
\label{alg:twostep_priorweighted}
The two-step, random order composition $\bar P = \frac12 TP+\frac12 PT $ suggests that, each transition operates in two stages. In the first stage, we flip a coin, and with $\frac12$ chance we run either $TP$ or $PT $. This idea is based on  Algorithm~\ref{alg: identification–aware MHMC} in the main text, but has our desired reversibility. 
However, when direct sampling from \(T(\theta,\cdot)\) is infeasible (e.g., in the SVMA application), we can use a within-class multiple-try Metropolis (MTM) step as per \cite{liu2000multiple} that targets \(T(\theta,\cdot)\) on \(K(\theta)\):
\begin{enumerate}
\item Forward draws on the class: sample \(u_1,\ldots,u_M \stackrel{\text{i.i.d.}}{\sim} r(\cdot | \theta)\) supported on \(K(\theta)\).
\item Weights: compute \(w_m=\dfrac{\pi(u_m)}{r(u_m | \theta)}\), \(m=1,\ldots,M\).
\item Candidate selection: pick index \(m^\ast\) with probability \(w_{m^\ast}/\sum_{j=1}^M w_j\) and set \(u^\ast:=u_{m^\ast}\).
\item Reverse draws: draw \(u'_1,\ldots,u'_M \stackrel{\text{i.i.d.}}{\sim} r(\cdot |  u^\ast)\) on \(K(\theta)\), force inclusion of the current point by setting \(u'_1=\theta\), and compute \(w'_j=\dfrac{\pi(u'_j)}{r(u'_j |  u^\ast)}\).
\item Accept/reject on the class: accept \(\theta\to u^\ast\) with
\[
\alpha_T(\theta,u^\ast)=\min\Bigl\{1,\ \frac{\sum_{j=1}^M w'_j}{\sum_{m=1}^M w_m}\Bigr\};
\]
otherwise stay at \(\theta\).
\end{enumerate}

We formalize this as Lemma \ref{lemma:MTM_detail} in the Appendix \ref{append:proof_supp}. 

Lemma~\ref{lemma:MTM_detail} states that, for fixed \(\theta\), the within-class MTM transition \(S_\theta\) is reversible with respect to the conditional target \(T(\theta,\cdot)\) on \(K(\theta)\). Hence \(T(\theta,\cdot)\) is its stationary law. Define the Stage–1 kernel on \(\Theta\) by \(S(\theta,A)=S_\theta(\theta, A\cap K(\theta))\). Because \(\pi\) disintegrates along the partition \(\{K(\theta)\}\) and each \(S_\theta\) leaves \(T(\theta,\cdot)\) invariant on its class, \(S\) preserves \(\pi\):
\[
\int_\Theta S(\theta,A) \pi(d\theta)=\pi(A)\quad\text{for all measurable }A\subset\Theta.
\]

Remarks. (i) The support/mutual absolute continuity condition \(r(u | \cdot)>0\) whenever \(\pi(u)>0\) on \(K(\theta)\) ensures the MTM weights and sums are well defined.   (ii) Repeating the MTM update a few times within Stage 1 moves the intermediate law toward \(T(\theta,\cdot)\) without changing \(\pi\)-invariance of the overall two–stage kernel.

\subsubsection{MH Composition}
\label{alg:exact_mixtureMH}
 Given a teleport kernel \(T\) in Equation \eqref{eqn:teleport_kernel}, define the mixed proposal
\[
  \tilde q(\theta''  |  \theta)  =  \int T(\theta,d\theta')  q(\theta''  |  \theta') ,
\]
which averages the local proposal \(q\) over the observationally equivalent set \(K(\theta)\). The corresponding Metropolis–Hastings acceptance is
\[
  \tilde\alpha(\theta,\theta'')  =  \min\Bigl\{1,  \frac{\pi(\theta'') \tilde q(\theta  |  \theta'')}{\pi(\theta) \tilde q(\theta''  |  \theta)} \Bigr\}.
\]
Let \(\widetilde P_P\) denote the resulting transition kernel. If \(T\) is the teleport in \eqref{eqn:teleport_kernel}, then \(\widetilde P_P\) is \(\pi\)–reversible. In general \(\widetilde P_P\) is different from the sequential compositions \(PT \) and \(TP\). They coincide only under additional symmetry.

A practical limitation is that \(\tilde q(\theta''  |  \theta)\) requires integrating \(q(\cdot  |  \theta')\) over \(K(\theta)\) for each \(\theta''\), which is typically intractable unless \(K(\theta)\) is finite or very low dimensional.  
One way to work around this is to use exact auxiliary–variable realization. Use a within–class Multiple–Try Metropolis construction on \(K(\theta)\): draw a forward set \(u_1,\dots,u_M \sim T(\theta,\cdot)\), propose \(\theta''_m \sim q(\cdot  |  u_m)\), select a candidate with the standard MTM weights, and mirror this with a reverse set drawn from \(T(\theta'',\cdot)\). With the usual MTM acceptance based on the sums of forward and reverse weights, the resulting kernel is \(\pi\)–reversible and does not require evaluating \(\tilde q\) explicitly (see Appendix \ref{alg:twostep_priorweighted}). This targets \(\pi\) exactly (though it is not, for finite \(M\), identical to \(\widetilde P_P\)).

While exact and \(\pi\)–reversible, in general this variant can be either too restrictive or computationally burdensome. Using exact auxiliary variables requires multiple draws from \(T\) and \(q\) plus a matched reverse set at each iteration, and it scales poorly when \(K(\theta)\) is large or high dimensional. Accordingly, we do not use it in our empirical applications and include it here for completeness.

%----------------------------------------------------------------------

\subsubsection{Mixture Chain}
\label{alg:mixtureeps}
Combine the local kernel \(P\) and the teleport kernel \(T\) from \eqref{eqn:teleport_kernel} into
\[
\widetilde P_H  =  (1-\varepsilon) P  +  \varepsilon T,\qquad \varepsilon\in[0,1].
\]
If both \(P\) and \(T\) are \(\pi\)–reversible, then \(\widetilde P_H\) is \(\pi\)–reversible. A direct implementation is a coin flip at each iteration: with probability \(\varepsilon\) draw \(\theta' \sim T(\theta,\cdot)\) and accept automatically; with probability \(1-\varepsilon\) perform the usual \(P\)–update (e.g., a Metropolis–Hastings step with symmetric proposal \(q\)).

Another alternative is to view the move as a single MH step with proposal
\[
q_{\mathrm{mix}}(\theta,\theta')  =  (1-\varepsilon) q(\theta,\theta')  +  \varepsilon t(\theta,\theta') ,
\]
where \(t\) is a density representation of \(T\). This requires a common dominating measure for \(q\) and \(t\); when \(K(\theta)\) is lower dimensional, \(t\) is naturally defined with respect to an \(r\)–dimensional Hausdorff measure and a direct density mixture on the ambient space is not available unless one augments the state or reparametrizes. In practice, the coin-flip implementation above is preferred. It realizes \(\widetilde P_H\) exactly and avoids measure-theoretic complications. 

%----------------------------------------------------------------------
\subsubsection{Batch Augmentation}
\label{alg:batch_aug}

This is the scheme used in Simulation~\ref{sim:conditional_gaussian}.  When sampling on $K(\theta_t)$ is significantly cheaper than running another local step of $P$, augmenting the exist Markov chain lead to much faster mixing without altering the invariant target. 
Let $\{\theta_t\}_{t\ge1}$ be an ergodic Markov chain with invariant distribution $\pi$ (e.g.\ a local MH/HMC kernel). At each iteration $t$ with state $\theta_t$, draw an auxiliary batch
\[
U_{t,1},\dots,U_{t,M}\ \stackrel{\mathrm{i.i.d.}}{\sim}\ T(\theta_t,\cdot),
\]
where $T(\theta,\cdot)$ is the teleport kernel in \eqref{eqn:teleport_kernel}.

Under stationarity of $\{\theta_t\}$, each batch point $U_{t,j}$ is marginally $\pi$:
\[
P(U_{t,j}\in A)
 = \int T(\theta,A) \pi(d\theta)
 = \pi(A),\qquad A\subset\Theta\ \text{measurable}.
\]
Equivalently, the augmented pair has joint law $\tilde\pi(d\theta,du)=\pi(d\theta) T(\theta,du)$, whose $\theta$–marginal is $\pi$. Therefore any Monte Carlo average built from the $U_{t,j}$'s targets the same $\pi$–expectations as the base chain.

If the per–iteration batch size is fixed ($M_t\equiv M$), one may stack all auxiliary draws and use the pooled empirical measure
\[
\hat\Pi_{\mathrm{stack}}  =  \frac{1}{NM}\sum_{t=1}^N\sum_{j=1}^M \delta_{U_{t,j}}
\]
to approximate $\pi$ and to compute $\pi$–averages $\int g d\hat\Pi_{\mathrm{stack}}$. If $M_t$ varies with $t$, a simple robust choice is the block average 
\[
\frac{1}{N}\sum_{t=1}^N \Bigl(\frac{1}{M_t}\sum_{j=1}^{M_t} g(U_{t,j})\Bigr),
\]
which remains consistent for $\int g(\theta) \pi(d\theta)$ and avoids unintended reweighting when $M_t$ depends on~$\theta_t$.\footnote{Naively pooling $\sum_{t,j} g(U_{t,j})/\sum_t M_t$ is also consistent if $\{M_t\}$ is independent of $\{\theta_t\}$. When $M_t$ correlates with $\theta_t$, block averaging is safer.}

%----------------------------------------------------------------------

\section{Technical Details}
\label{append:proof}

First we state a few lemmas that are going to be used throughout most of the proofs.

\subsection{Supplementary Results}
\label{append:proof_supp}

\begin{lemma}[Cheeger's Inequality]
\label{lemma: cheeger}
Let $P$ be a reversible Markov transition kernel with invariant measure $\pi$. Denote $\gamma(P)$  the spectral gap of $P$, and the conductance of $P$ is defined as\[
\mathbf{h}_{P} = \inf_{S \subseteq \Omega,    0 < \pi(S) \leq 1/2} \frac{\int_S P(\theta, S^c)    \pi(\mathrm{d}\theta)}{\pi(S)}.
\] Then
$$\frac{\mathbf{h}_{P}^2}{2}\le \gamma(P) \le 2\mathbf{h}_{P}.$$
\end{lemma}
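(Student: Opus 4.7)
The plan is to prove the two inequalities separately using the Dirichlet form $\mathcal{E}(f,f) := \langle (I-P)f, f\rangle_\pi$, which for a $\pi$--reversible kernel satisfies the identity $\mathcal{E}(f,f) = \tfrac{1}{2}\int\int (f(\theta)-f(\theta'))^2 \pi(d\theta) P(\theta, d\theta')$. The variational (Poincaré) characterization gives $1-\lambda_2 = \inf_{f: \mathrm{Var}_\pi(f)>0} \mathcal{E}(f,f)/\mathrm{Var}_\pi(f)$, which reduces Cheeger's inequality to comparing this Rayleigh quotient with the conductance $\mathbf{h}_P$.

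For the easy direction $\gamma(P) \le 2\mathbf{h}_P$, I would plug the centered indicator $f=\mathbb{1}_S-\pi(S)$ into the variational characterization for a set $S$ with $0<\pi(S)\le 1/2$. A short computation using reversibility gives $\mathcal{E}(f,f)=\int_S P(\theta,S^c) \pi(d\theta)$ and $\mathrm{Var}_\pi(f) = \pi(S)(1-\pi(S)) \ge \pi(S)/2$. Taking the infimum over such $S$ yields $1-\lambda_2 \le 2\mathbf{h}_P$. (A mild subtlety is that the statement is for the absolute gap $\gamma(P)=1-|\lambda_2|$; one either restricts attention to non-negative chains, laziness, or absorbs a factor by passing to $(I+P)/2$, which does not change $\mathbf{h}_P$ up to a factor of $2$.)

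The main obstacle is the lower bound $\gamma(P) \ge \mathbf{h}_P^2/2$. The standard route is a co-area / level-set argument. Let $f \in L_2^0(\pi)$ be a (near-)optimizer; by replacing $f$ with a suitable truncation I may assume $\pi(\{f>0\})\le 1/2$. Set $g:=f_{+}$ so that $g^2$ is non-negative with the support of $g$ of $\pi$-mass at most $1/2$. Define the level sets $S_t := \{g^2 > t\}$ and apply the conductance bound $\int_{S_t} P(\theta,S_t^c) \pi(d\theta) \ge \mathbf{h}_P \pi(S_t)$. Integrating over $t\ge 0$ and using the layer-cake identity together with the Markov-chain co-area formula produces $\int |g^2(\theta)-g^2(\theta')| \pi(d\theta) P(\theta,d\theta') \ge 2\mathbf{h}_P \int g^2 d\pi$. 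A Cauchy--Schwarz step against $|g(\theta)+g(\theta')|$, combined with the reversibility identity for $\mathcal{E}(g,g)$ and the elementary bound $\int (g(\theta)+g(\theta'))^2 \pi(d\theta) P(\theta, d\theta') \le 4\int g^2 d\pi$, converts this into $\mathcal{E}(g,g) \ge (\mathbf{h}_P^2/2) \int g^2 d\pi$. Finally, extending back from $g=f_+$ to $f$ by handling the positive and negative parts separately and invoking $\mathrm{Var}_\pi(f)$ on each piece yields the desired lower bound.

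The delicate point, and the step most deserving of care, is the Cauchy--Schwarz manipulation that turns a bound on $\int |g^2-g'^2| d(\pi\otimes P)$ into a bound on $\mathcal{E}(g,g) = \tfrac{1}{2}\int (g-g')^2 d(\pi\otimes P)$; the truncation choice $\pi(\mathrm{supp}(g))\le 1/2$ is what keeps the constants clean and is what ultimately produces the factor $1/2$ in $\mathbf{h}_P^2/2$ rather than a worse constant. Once both directions are assembled, combining them gives the stated two-sided bound.
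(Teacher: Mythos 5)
The paper does not actually prove this lemma; it simply cites \cite{lawler1988bounds} and \cite{diaconis1991geometric}. Your sketch reproduces the standard argument from those references --- the centered-indicator test function for the upper bound and the level-set/co-area plus Cauchy--Schwarz argument for the lower bound --- and the computations you outline (the identity $\mathcal{E}(\mathbb{1}_S,\mathbb{1}_S)=\int_S P(\theta,S^c)\,\pi(d\theta)$, the layer-cake step giving $\int|g^2-g'^2|\,d(\pi\otimes P)\ge 2\mathbf{h}_P\int g^2\,d\pi$, and the bound $\int (g+g')^2\,d(\pi\otimes P)\le 4\int g^2\,d\pi$) are all correct and yield the stated constants. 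One point deserves relocation: the caveat you attach to the easy direction actually bites in the hard direction. Since the paper defines $\gamma(P)$ as the \emph{absolute} gap $1-\sup_{f}|\langle Pf,f\rangle|$, the upper bound $\gamma(P)\le 1-\lambda_2\le 2\mathbf{h}_P$ holds automatically, whereas the lower bound $\gamma(P)\ge \mathbf{h}_P^2/2$ controls only $1-\lambda_2$ and can fail for the absolute gap when there is spectrum near $-1$ (e.g.\ a two-state deterministic swap has $\mathbf{h}_P=1$ but absolute gap $0$); the fix is exactly the laziness/positivization device you mention, so your proof is complete once that remark is moved to the lower-bound step.
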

Proof of this Lemma can be found in \cite{lawler1988bounds} or \cite{diaconis1991geometric}. 

\begin{lemma}[State Decomposition Theorem]
    \label{lemma: state_decomp}
Let $\left\{A_1, \ldots, A_m\right\}$ be a partition of $\Omega$. The transition kernel $P_{A_i}$ of the restricted Markov chain is given by

$$
P_{A_i}(\theta, B)=P(\theta, B)+1_B(\theta) P\left(\theta, A_i^c\right) \quad \text { for } \theta \in A_i, B \subset A_i
$$
The ``component'' Markov chain with state space $\{1, \ldots, m\}$ and transition probabilities is defined as:

$$
P_H(i, j)=\frac{1}{2 \pi\left(A_i\right)} \int_{A_i} P\left(\theta, A_j\right) \pi(d \theta) \quad \text { for } i \neq j
$$

and $P_H(i, i)=1-\sum_{j \neq i} P_H(i, j)$. Then we have
$$
\gamma(P) \geq \frac{1}{2} \gamma\left(P_H\right)\left(\min _{i=1, \ldots, m} \gamma\left(P_{A_i}\right)\right) .
$$

\end{lemma}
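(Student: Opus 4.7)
The plan is to prove this via the variational characterization of the spectral gap using Dirichlet forms, which is the cleanest route for this kind of decomposition result (originally due to Caracciolo–Pelissetto–Sokal and Madras–Randall). For a $\pi$–reversible kernel, recall that
\[
\gamma(P) \;=\; \inf_{f:\,\mathrm{Var}_\pi(f)>0}\;\frac{\mathcal{E}_P(f,f)}{\mathrm{Var}_\pi(f)},\qquad
\mathcal{E}_P(f,f):=\tfrac{1}{2}\!\int\!\!\int (f(x)-f(y))^2\,\pi(dx)P(x,dy),
\]
and similarly for $P_{A_i}$ with conditional measure $\pi_i:=\pi(\,\cdot\,|A_i)$ and for $P_H$ with component measure $\bar\pi(i)=\pi(A_i)$. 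I would first verify the standard facts that $\pi_i$ is stationary and reversible for $P_{A_i}$, and that $\bar\pi$ is stationary and reversible for $P_H$ (the factor $\tfrac12$ in the definition of $P_H(i,j)$ is precisely what symmetrizes the between-class flow and makes $P_H$ reversible).

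The core identity is the law of total variance applied to the partition:
\[
\mathrm{Var}_\pi(f) \;=\; \sum_{i=1}^m \pi(A_i)\,\mathrm{Var}_{\pi_i}(f)\;+\;\mathrm{Var}_{\bar\pi}(F),\qquad F(i):=\int_{A_i} f\,d\pi_i.
\]
Now the two pieces are controlled separately. For the within-class piece, I would observe that the self-loop term $1_B(\theta)P(\theta,A_i^c)$ in $P_{A_i}$ contributes zero to the Dirichlet form (since $f(x)-f(x)=0$), so
\[
\sum_i \pi(A_i)\,\mathcal{E}_{P_{A_i}}(f,f)\;=\;\tfrac{1}{2}\sum_i \int_{A_i}\!\!\int_{A_i}(f(x)-f(y))^2\,\pi(dx)P(x,dy)\;\le\;\mathcal{E}_P(f,f),
\]
which gives $\sum_i\pi(A_i)\mathrm{Var}_{\pi_i}(f)\le \mathcal{E}_P(f,f)/\min_i \gamma(P_{A_i})$. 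For the between-class piece, $\mathrm{Var}_{\bar\pi}(F)\le \mathcal{E}_{P_H}(F,F)/\gamma(P_H)$, and I must bound $\mathcal{E}_{P_H}(F,F)$ by $\mathcal{E}_P(f,f)$ up to a constant. Using $\bar\pi(i)P_H(i,j)=\tfrac{1}{2}\int_{A_i}P(x,A_j)\pi(dx)$ for $i\ne j$ and Jensen's inequality on $(F(i)-F(j))^2=\bigl(\int(f(x)-f(y))\pi_i(dx)\pi_j(dy)\bigr)^2$, one obtains
\[
\mathcal{E}_{P_H}(F,F)\;\le\;\tfrac{1}{2}\sum_{i\ne j}\int_{A_i}\!\!\int_{A_j}(f(x)-f(y))^2\,\pi(dx)\,\tfrac{P(x,A_j)}{\pi(A_j)}\,\pi_j(dy),
\]
which after rearrangement feeds into $\mathcal{E}_P(f,f)$ on the cross-class flows.

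Combining, the $\tfrac{1}{2}$ in the final bound arises from allocating $\mathcal{E}_P(f,f)$ between the within- and between-component contributions: roughly, one splits $\mathcal{E}_P=\mathcal{E}_P^{\text{within}}+\mathcal{E}_P^{\text{cross}}$, uses the first to absorb the within-class variance and the second to absorb $\mathcal{E}_{P_H}(F,F)$, and then takes the worse of the two ratios, losing a factor of two. Putting everything together yields
\[
\mathrm{Var}_\pi(f)\;\le\;\frac{2}{\gamma(P_H)\,\min_i\gamma(P_{A_i})}\;\mathcal{E}_P(f,f),
\]
and taking the infimum over $f$ delivers the claim.

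The main obstacle I anticipate is the second piece: translating the Jensen bound on $(F(i)-F(j))^2$ into a \emph{clean} comparison between $\mathcal{E}_{P_H}$ and $\mathcal{E}_P$ that produces exactly the constant $\tfrac{1}{2}$ rather than something weaker. This requires keeping track of the symmetrization factor baked into $P_H(i,j)$ and verifying that the off-diagonal cross-class mass on both sides ($(i,j)$ and $(j,i)$) is accounted for without double counting. Everything else — reversibility of $P_{A_i}$ and $P_H$, the absorption of self-loops, and the law of total variance — is bookkeeping once the Dirichlet-form framework is in place.
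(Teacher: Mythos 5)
The paper does not actually prove this lemma; it cites \cite{madras2002markov} for the proof, so your task was to reconstruct the Caracciolo--Pelissetto--Sokal / Madras--Randall argument. Your overall architecture is the right one and matches the cited source: reversibility of $\pi_i$ for $P_{A_i}$ and of $\bar\pi$ for $P_H$, the law of total variance across the partition, and the observation that the self-loop term in $P_{A_i}$ contributes nothing to the Dirichlet form, so that $\sum_i\pi(A_i)\,\mathcal{E}_{P_{A_i}}(f,f)\le\mathcal{E}_P(f,f)$. That half is correct.

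The genuine gap is the between-class step, exactly where you flag "the main obstacle." Jensen gives $(F(i)-F(j))^2\le\int_{A_i}\int_{A_j}(f(x')-f(y))^2\,\pi_i(dx')\,\pi_j(dy)$ with $x'$ and $y$ integrated \emph{independently} of the $x$ appearing in the weight $\int_{A_i}P(x,A_j)\,\pi(dx)$; you cannot identify $x'$ with $x$, so your displayed bound on $\mathcal{E}_{P_H}(F,F)$ does not follow. More importantly, even if one grants it, the right-hand side integrates $(f(x)-f(y))^2$ against a product-type measure $\pi(dx)\,\tfrac{P(x,A_j)}{\pi(A_j)}\,\pi_j(dy)$ rather than against $\pi(dx)P(x,dy)$, and such an expression is \emph{not} dominated by $\mathcal{E}_P(f,f)$ in general: $P(x,\cdot)$ restricted to $A_j$ may concentrate where $f(y)\approx f(x)$ while $\pi_j$ puts mass where $f$ differs greatly, so the "rearrangement feeds into $\mathcal{E}_P(f,f)$" step fails. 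The actual proof controls $(F(i)-F(j))^2$ by splitting $f(x)-f(y)$ into within-block deviations plus a cross-block increment, so that $\mathcal{E}_{P_H}(F,F)$ is bounded by a combination of the cross-block Dirichlet flow \emph{and} the within-block variances $\mathrm{Var}_{\pi_i}(f)$ (which are then absorbed using the restricted gaps); this interplay is precisely what produces the $\tfrac{1}{2\pi(A_i)}$ normalization of $P_H$ and the final constant $\tfrac12$. As written, the heart of the theorem is asserted rather than proved.
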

A proof can be found in \cite{madras2002markov}.

\begin{lemma}[Lower Bound for $\delta$-ball Random Walk]
    \label{lemma: lb_deltaball}
 Let $\pi $ be a log-concave probability distribution on a convex set $A \subseteq \mathbb{R}^n$, and let its concentration be characterized by the parameter $\nu$. Let $P$ be the Metropolis-Hastings kernel on $A$ with a $\delta$-ball random walk proposal. For a universal constant $a >0$, if the step size is chosen such that $\nu\delta \leq a$, the spectral gap $\gamma $ of this kernel is bounded below by:
\[
\gamma  \ge C \frac{(\nu  \delta)^2}{n}
\]
for a universal constant $C>0$. 

\end{lemma}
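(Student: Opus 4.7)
The plan is to combine Cheeger's inequality (Lemma~\ref{lemma: cheeger}) with the classical Lovász--Simonovits isoperimetric inequality for log-concave measures, following the Lovász--Vempala template for the ball walk. Specifically, it suffices to prove a conductance lower bound of the form $\mathbf{h}_P \ge C'\, \nu\delta/\sqrt{n}$; squaring and invoking Cheeger then yields $\gamma(P) \ge C (\nu\delta)^2/n$.

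First I would establish a \emph{uniform acceptance floor}. For log-concave $\pi$, the function $-\log\pi$ is convex, and the concentration parameter $\nu$ controls the one-sided Lipschitz rate so that $\pi(y)/\pi(x) \ge e^{-\nu \|x-y\|_2}$ for $x,y$ in the support. Whenever the proposal $y \in B(x,\delta)\cap A$, the Metropolis ratio is bounded below by $e^{-\nu\delta} \ge e^{-a} =:\alpha_0 > 0$, so any proposal is accepted with at least constant probability.

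Next I would prove a \emph{one-step overlap estimate} for spatially close points. A standard volume computation shows that $\mathrm{Vol}\bigl(B(x,\delta)\cap B(y,\delta)\bigr)/\mathrm{Vol}(B(x,\delta))$ is at least $1-c_1 \sqrt{n}\,\|x-y\|_2/\delta$. Combined with the acceptance floor from Step 1 (and the fact that for interior points $B(x,\delta)\cap A$ has comparable mass with a convex body argument, since $A$ is convex), this yields
\[
\bigl\|P(x,\cdot)-P(y,\cdot)\bigr\|_{\mathrm{TV}} \le 1-\kappa
\quad\text{whenever}\quad \|x-y\|_2 \le c_2\,\delta/\sqrt{n},
\]
for universal constants $c_2,\kappa>0$, provided $\nu\delta\le a$ for $a$ small enough.

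The third ingredient is the \emph{Lovász--Simonovits isoperimetric inequality} for log-concave distributions: for any measurable partition $A=S_1\cup S_2\cup S_3$ with $\mathrm{dist}(S_1,S_2)\ge t$,
\[
\pi(S_3) \ge \min\!\bigl\{1,\ c_3\,\nu\,t\bigr\}\cdot \min\{\pi(S_1),\pi(S_2)\},
\]
where the scaling by $\nu$ reflects that the characteristic length of the measure is $1/\nu$. Given any $S\subset A$ with $\pi(S)\le 1/2$, split $S=S'\cup S''$ and $S^c=T'\cup T''$ according to whether $P(x,S^c)$ (resp.\ $P(x,S)$) is below $\kappa/2$. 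By Step~2, any $x\in S''$ and $y\in T'$ must satisfy $\|x-y\|_2 \ge c_2\,\delta/\sqrt{n}$, so we can apply the isoperimetric inequality with $t=c_2\delta/\sqrt{n}$. A short case analysis (either $\pi(S')$ or $\pi(T')$ carries at least half the relevant mass) converts this into
\[
\int_S P(x,S^c)\,\pi(dx) \;\ge\; C'\,\frac{\nu\delta}{\sqrt{n}}\,\pi(S),
\]
giving $\mathbf{h}_P\gtrsim \nu\delta/\sqrt{n}$ and hence the claim via Cheeger.

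The main obstacle I expect is the translation between the abstract ``concentration parameter $\nu$'' used in the statement and the constant appearing in the Lovász--Simonovits inequality, since $\nu$ here is the exponential tail rate (cf.\ Lemma~\ref{lemma: decay}), whereas the isoperimetric inequality is typically stated in terms of a diameter or covariance scale. Verifying that these are equivalent up to universal constants for log-concave $\pi$ (by rescaling coordinates so that $\nu=1$ and then appealing to the isotropic bound) is the only delicate step; everything else is routine manipulation. The requirement $\nu\delta\le a$ is precisely what is needed to make both the acceptance floor and the overlap estimate hold simultaneously with constants independent of $n$, $\nu$, and $\delta$.
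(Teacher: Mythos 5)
Your sketch is essentially the Kannan--Lov\'asz--Simonovits argument (uniform acceptance floor, one-step overlap of nearby $\delta$-balls, the Lov\'asz--Simonovits isoperimetric inequality, then Cheeger), which is precisely the source the paper relies on: the lemma is stated without proof and attributed to \cite{kannan1997random}. The one step you rightly flag as delicate --- identifying the tail-rate parameter $\nu$ with the reciprocal of the effective diameter entering the isoperimetric inequality --- is indeed where the real bookkeeping lives, but the route is the standard one and the rest of your outline is correct.
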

Proof can be found in \cite{kannan1997random}.

\begin{lemma}[Local Concentration of Log-Concave Measures]
\label{lemma: decay}
Let $\pi$ be a log-concave probability measure on $\mathbb{R}^d$ with density
\[
\pi(\theta)\ \propto\ e^{-U(\theta)},\qquad U:\mathbb{R}^d\to(-\infty,\infty]\ \text{convex}.
\]
Let $\beta=\int \theta \pi(d\theta)$ be the mean, and let $\mu\in\arg\max_\theta \pi(\theta)$ be any mode. Then there exist constants $C\ge1$ and $\nu>0$ (depending on $\pi$) such that, for all $r>0$,
\[
\pi\bigl(\{\theta:\ \|\theta-\beta\|\ge r\}\bigr)  \le  C e^{-\nu r}
\qquad\text{and}\qquad
\pi\bigl(\{\theta:\ \|\theta-\mu\|\ge r\}\bigr)  \le  C e^{-\nu r}.
\]
In particular, log-concavity implies exponentially decaying tails about both the mean and the mode, possibly with different constants.

\end{lemma}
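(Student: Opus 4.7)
The approach is to invoke Borell's classical exponential tail inequality for log-concave measures on $\mathbb{R}^d$ and then transfer the bound from the mean to the mode via a triangle inequality.

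First, I would handle the bound around $\beta$. Since $\pi$ is a probability measure on $\mathbb{R}^d$, we may choose $R_0>0$ with $\pi(B(\beta,R_0))>1/2$. Borell's inequality applied to the log-concave measure $\pi$ and the symmetric (about $\beta$) convex body $K=B(\beta,R_0)$ yields, for $t\geq 1$,
$$\pi\bigl(\|\theta-\beta\|\geq tR_0\bigr)\ \leq\ \pi(K)\Bigl(\tfrac{1-\pi(K)}{\pi(K)}\Bigr)^{(t+1)/2}.$$
Writing $\rho:=(1-\pi(K))/\pi(K)\in(0,1)$ and setting $r=tR_0$, this becomes $\pi(\|\theta-\beta\|\geq r)\leq C_0 e^{-\nu_0 r}$ for $r\geq R_0$, with $\nu_0:=-\log(\rho)/(2R_0)>0$. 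For $r\in(0,R_0)$ the trivial bound $\pi(\cdot)\leq 1$ is absorbed by enlarging the leading constant so that $C_0\geq e^{\nu_0 R_0}$.

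Second, I would transfer the bound to the mode $\mu$. Let $\delta:=\|\mu-\beta\|$, which is finite because any log-concave probability density on $\mathbb{R}^d$ has finite first moment (so $\beta$ exists) and the mode $\mu$ lies in the convex support of $\pi$. For $r\geq\delta$ the triangle inequality yields $\{\|\theta-\mu\|\geq r\}\subseteq\{\|\theta-\beta\|\geq r-\delta\}$, so the first bound gives $\pi(\|\theta-\mu\|\geq r)\leq C_0 e^{-\nu_0(r-\delta)}=(C_0 e^{\nu_0\delta})\,e^{-\nu_0 r}$. For $r<\delta$ the bound is again absorbed into the constant. Choosing $\nu=\nu_0$ and $C=C_0 e^{\nu_0(\delta+R_0)}$ delivers both stated inequalities with a common exponent but, in general, different prefactors hidden in $C$.

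The only nontrivial input is Borell's inequality itself, which is classical. A self-contained alternative, which avoids any external citation, is to first establish $\int e^{\lambda\|\theta-\beta\|}\pi(d\theta)<\infty$ for some $\lambda>0$ and then apply Markov's inequality; here the exponential moment bound reduces, via Prékopa's theorem on log-concave marginals, to the one-dimensional fact that a log-concave density $f=e^{-\varphi}$ on $\mathbb{R}$ with $\int f=1$ must satisfy $\varphi(x)\geq a|x|+b$ outside a compact set (otherwise $\int f$ would diverge). The main potential obstacle is notational rather than substantive: one must be careful that Borell's statement is applied to sets symmetric about $\beta$ (not the origin), which is handled by an inconspicuous translation of the measure.
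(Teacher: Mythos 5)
Your proposal is correct and follows essentially the same route as the paper, which simply cites \cite{borell1974convex} for this lemma without spelling out the argument. You have filled in the standard details — Borell's lemma applied to a ball about the mean (after recentering), conversion of the geometric decay in $t$ to an exponential rate in $r$, and the triangle-inequality transfer from the mean to the (finite-distance) mode — all of which are sound.
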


Proof can be found in \cite{borell1974convex}.

\begin{lemma} \label{lemma:MTM_detail} For fixed \(\theta\), the within–class MTM transition \(S_\theta\) on \(K(\theta)\) is reversible with respect to \(T(\theta,\cdot)\); in particular, \(T(\theta,\cdot)\) is a stationary distribution for \(S_\theta\). \end{lemma}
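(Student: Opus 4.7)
The plan is to follow the classical Liu–Liang–Wong argument for Multiple-Try Metropolis, adapted to the fiber $K(\theta)$ with reference measure $\nu$ and target density $\pi_\theta(u) := \pi(u)/Z_\theta$, where $Z_\theta := \int_{K(\theta)} \pi(u) \nu(du) \in (0,\infty)$. With this notation, $T(\theta,\cdot)$ has density $\pi_\theta$ on $K(\theta)$, and the weights in the algorithm are $w(u;x) := \pi_\theta(u)/r(u | x) = \pi(u)/(Z_\theta r(u | x))$, so the acceptance ratio $\sum_j w'_j / \sum_m w_m$ is invariant to the normalization $Z_\theta$, and we may work throughout as if $T(\theta,\cdot)$ were the target on $K(\theta)$. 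Write $S_\theta = S_\theta^{\mathrm{acc}} + R_\theta$, where $S_\theta^{\mathrm{acc}}$ denotes the accepted-move contribution and $R_\theta$ the holding mass at $x$; detailed balance is automatic for $R_\theta$, so it suffices to verify it for $S_\theta^{\mathrm{acc}}$.

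First I would write the augmented joint density of one step of the scheme. Conditional on $x$, the algorithm generates $(u_1,\ldots,u_M,m^\ast,u'_2,\ldots,u'_M)$ with joint density (w.r.t.\ $\nu^{\otimes M}\otimes \text{Unif}\{1,\dots,M\}\otimes \nu^{\otimes(M-1)}$)
\[
\prod_{k=1}^M r(u_k | x)\ \cdot\ \frac{w(u_{m^\ast};x)}{\sum_{j=1}^M w(u_j;x)}\ \cdot\ \prod_{\ell=2}^M r(u'_\ell | u_{m^\ast}),
\]
and the move $x\to u^\ast := u_{m^\ast}$ is accepted with probability $\alpha = \min\{1, W_{u^\ast}/W_x\}$, where $W_x := \sum_m w(u_m;x)$ and $W_{u^\ast} := w(x;u^\ast) + \sum_{\ell\ge 2} w(u'_\ell;u^\ast)$ (recall $u'_1 = x$ by construction). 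Multiplying by $\pi_\theta(x)$ and using $w(u;x) r(u | x) = \pi_\theta(u)$, the accept-contribution density becomes
\[
\pi_\theta(x)\pi_\theta(u_{m^\ast}) \prod_{k\ne m^\ast} r(u_k | x) \prod_{\ell\ge 2} r(u'_\ell | u_{m^\ast}) \cdot \frac{1}{W_x}\cdot \min\!\left\{1,\frac{W_{u^\ast}}{W_x}\right\}.
\]

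The core step is then to apply the elementary identity $\min\{1,a/b\}/b = \min\{a,b\}/(ab) = \min\{1,b/a\}/a$ with $a=W_{u^\ast}$ and $b=W_x$. This rewrites the above as
\[
\pi_\theta(x)\pi_\theta(u^\ast) \prod_{k\ne m^\ast} r(u_k | x) \prod_{\ell\ge 2} r(u'_\ell | u^\ast) \cdot \frac{\min\{W_x,W_{u^\ast}\}}{W_x W_{u^\ast}},
\]
which is manifestly invariant under the involution that simultaneously swaps $(x,u^\ast)$ and the tuples $(u_k)_{k\ne m^\ast} \leftrightarrow (u'_\ell)_{\ell\ge 2}$: the roles of the forward draws from $x$ and the reverse draws from $u^\ast$ are exchanged, and, correspondingly, $W_x$ and $W_{u^\ast}$ swap. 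That symmetry is exactly detailed balance for the augmented move.

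Finally, I would integrate out the auxiliary variables $(u_k)_{k\ne m^\ast}, (u'_\ell)_{\ell\ge 2}$ and sum over $m^\ast \in \{1,\ldots,M\}$ on both sides of the symmetric identity (each side producing the same permutation factor). This yields
\[
\pi_\theta(dx)\,S_\theta^{\mathrm{acc}}(x,du^\ast) \;=\; \pi_\theta(du^\ast)\,S_\theta^{\mathrm{acc}}(u^\ast,dx),
\]
and adding back the common holding mass $R_\theta$ gives detailed balance for $S_\theta$ against $T(\theta,\cdot)$. Stationarity of $T(\theta,\cdot)$ then follows by integrating detailed balance over the first argument. The only delicate points are bookkeeping for the reference measure $\nu$ on $K(\theta)$ (for which the support condition $r(u|\cdot)>0$ whenever $\pi(u)>0$ ensures the weights and normalizing sums are finite and positive), and the selection-index symmetry; both are standard and I do not expect a substantive obstacle.
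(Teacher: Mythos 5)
Your proof is correct: it is the standard detailed-balance argument for Multiple-Try Metropolis (augment with the forward/reverse candidate sets, use $w(u;x)\,r(u\mid x)=\pi_\theta(u)$ and the identity $\min\{1,a/b\}/b=\min\{a,b\}/(ab)$ to obtain an expression symmetric under swapping $x\leftrightarrow u^\ast$ together with the two auxiliary tuples, then marginalize), correctly adapted to the fiber $K(\theta)$ with reference measure $\nu$ and with the observation that the normalization $Z_\theta$ cancels in the weight ratios. The paper itself does not write this argument out but simply defers to Theorem 1 of \citet{liu2000multiple}; your write-up is precisely the proof of that cited result, so the two are in substance the same.
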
 Proof. This is the detailed–balance result of Multiple–Try Metropolis in \citet[Theorem 1]{liu2000multiple}.

\noindent \textbf{Proposition \ref{prop:stationary}} (Stationary Distribution of the Composite Kernel).
\begin{proof}
By Equation\eqref{eqn:teleport_kernel},
\[
T(\theta,A)
= \frac{\displaystyle\int_{A\cap K(\theta)} \pi(u) \nu(du)}{\displaystyle\int_{K(\theta)} \pi(u) \nu(du)},
\]
where $\pi(\cdot)$ denotes the $\nu$–density of the target and $\nu$ is the reference (Lebesgue/Hausdorff/counting) measure.

For any measurable $A\subset\Theta$,
\begin{align*}
\int T(\theta,A) \pi(d\theta)
&= \int \frac{\int_{A\cap K(\theta)} \pi(u) \nu(du)}{\int_{K(\theta)} \pi(v) \nu(dv)}  \pi(d\theta)
= \int_{u\in A} \pi(u) \left[\int \frac{\mathbf 1_{K(\theta)}(u)}{\int_{K(\theta)} \pi(v) \nu(dv)} \pi(d\theta)\right]\nu(du).
\end{align*}
Fix $u$. If $\mathbf 1_{K(\theta)}(u)=1$ then $K(\theta)=K(u)$, so the denominator equals $\int_{K(u)} \pi(v) \nu(dv)$, which is also $\int_{K(u)} \pi(d\theta)$ since $\pi(d\theta)=\pi(\theta) \nu(d\theta)$ on $K(u)$. Hence
\[
\int \frac{\mathbf 1_{K(\theta)}(u)}{\int_{K(\theta)} \pi(v) \nu(dv)} \pi(d\theta)
= \frac{\int_{K(u)} \pi(d\theta)}{\int_{K(u)} \pi(v) \nu(dv)}  =  1,
\]
and therefore
\[
\int T(\theta,A) \pi(d\theta)
= \int_{u\in A} \pi(u) \nu(du)
= \pi(A).
\]

 By Fubini and the previous identity,
\[
\int \widetilde P(\theta,A) \pi(d\theta)
= \int  \left(\int P(z,A) T(\theta,dz)\right)\pi(d\theta)
= \int P(z,A) \Big(\int T(\theta,dz) \pi(d\theta)\Big)
= \int P(z,A) \pi(dz)
= \pi(A),
\]
because $P$ is $\pi$–invariant. This proves $\widetilde P$ is $\pi$–invariant.
\end{proof}

\begin{proposition}[Reversibility]
\label{prop:reversibility}
Let $T$ be the teleport kernel in Equation \eqref{eqn:teleport_kernel} with
$Z(\theta):=\int_{K(\theta)}\pi(u) \nu(du)\in(0,\infty)$ for $\pi$–a.e.\ $\theta$.
Let $P$ be any $\pi$–reversible Markov kernel on $\Theta$.
Define
\[
\bar P  :=  \tfrac12 (PT ) + \tfrac12 (TP),
\qquad
\widetilde P_H  :=  (1-\varepsilon) P + \varepsilon T,\quad \varepsilon\in[0,1].
\]
Then $T$, $\bar P$, and $\widetilde P_H$ are $\pi$–reversible.

Now assume in addition that $P$ is a Metropolis–Hastings kernel with a symmetric local proposal $q(\cdot | \cdot)$, and define the fiber–averaged proposal
\[
\tilde q(\theta'' | \theta)\ :=\ \int T(\theta,d\theta') q(\theta'' | \theta').
\]
Let $\widetilde P_P$ be the one–step MH kernel with proposal $\tilde q$ and
\[
\widetilde P_P(\theta,d\theta'')\ :=\ \tilde q(\theta'' | \theta)\Big[\tilde\alpha(\theta,\theta'') d\theta''+\bigl(1-\tilde\alpha(\theta,\theta'')\bigr) \delta_\theta(d\theta'')\Big],
\quad
\tilde\alpha(\theta,\theta'')=\min \Bigl\{1,\ \frac{\pi(\theta'') \tilde q(\theta | \theta'')}{\pi(\theta) \tilde q(\theta'' | \theta)}\Bigr\}.
\]
Then $\widetilde P_P$ is $\pi$–reversible.

In general $\bar P$, $\widetilde P_H$, and $\widetilde P_P$ need not coincide. If, however, $\pi$ is constant on each set $K(\theta)$ and
\begin{equation}
\label{eq:normalized-fiber-sym}
\tilde q(\theta'' | \theta)\ =\ \tilde q(\theta | \theta'')\qquad\text{for all }\theta,\theta'',
\end{equation}
then
\[
PT \ =\ TP\ =\ \widetilde P_P,
\qquad\text{and hence}\qquad
\bar P\ =\ PT \ =\ \widetilde P_P.
\]
\end{proposition}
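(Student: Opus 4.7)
The plan is to dispatch the four reversibility claims with short direct or operator-theoretic arguments, then handle the non-coincidence and the final equality separately. The main technical obstacle lies in the very last claim.

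For the $\pi$-reversibility of $T$, I would compute the joint measure directly. Since $K(\cdot)$ induces a partition on $\Theta$, we have $K(\theta)=K(\theta')$ whenever $\theta'\in K(\theta)$, so $Z(\theta)=Z(\theta')$ on each class. Writing
\[
\pi(d\theta) T(\theta,d\theta')
= \frac{\pi(\theta) \pi(\theta')}{Z(\theta)} \mathbf{1}_{\theta'\in K(\theta)} \nu(d\theta) \nu(d\theta'),
\]
the right-hand side is manifestly symmetric in $(\theta,\theta')$, giving detailed balance. For $\bar P$, the cleanest route is to show that $TP$ is the \emph{reversal} of $PT$: applying $\pi$-reversibility of $P$, then of $T$, inside the composition,
\[
\pi(d\theta) (PT)(\theta,d\theta'')
= \int \pi(d\theta') P(\theta',d\theta) T(\theta',d\theta'')
= \pi(d\theta'') (TP)(\theta'',d\theta),
\]
so $\bar P=\tfrac{1}{2}(PT+TP)$ coincides with its own reversal and is therefore $\pi$-reversible. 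Reversibility of $\widetilde P_H$ follows immediately from its convex structure, and $\widetilde P_P$ is an honest MH kernel built from $\tilde q$ and the Hastings ratio $\tilde\alpha$, so the one-line MH detailed-balance verification applies.

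To see why $PT$, $TP$, and $\widetilde P_P$ generally differ, I would note that $PT$ and $TP$ are products of non-commuting self-adjoint operators on $L_2(\pi)$ and therefore not themselves symmetric, whereas $\widetilde P_P$ is a single MH step: its rejection mass sits at $\theta$, while the rejection mass of $TP$ is spread over $K(\theta)$ by the initial teleport, and the rejection mass of $PT$ is further pushed through a final teleport. A small example with two observationally equivalent points connected by a non-trivial $P$-proposal suffices to separate the three kernels.

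The main obstacle is the last equality, $PT=TP=\widetilde P_P$, under the hypotheses that $\pi$ is constant on each $K(\theta)$ and $\tilde q$ is symmetric. My approach would be: (i) under flatness, $T(\theta,\cdot)$ reduces to the normalized reference measure on $K(\theta)$, and the inner MH acceptance $\alpha(\theta',\theta'')=\min\{1,\pi(\theta'')/\pi(\theta')\}$ becomes invariant in the representative $\theta'\in K(\theta)$; (ii) $\tilde q$-symmetry collapses $\tilde\alpha$ to the same expression, so the absolutely continuous parts of $PT(\theta,\cdot)$, $TP(\theta,\cdot)$, and $\widetilde P_P(\theta,\cdot)$ all reduce to $\tilde q(\theta''|\theta) \alpha(\theta,\theta'') d\theta''$ via the mixture identity $\tilde q(\theta''|\theta)=\int T(\theta,d\theta') q(\theta''|\theta')$; (iii) the delicate step is the rejection parts. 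Here I would use the flat-class hypothesis to rewrite the per-point rejection probability $r(\theta')=\int q(\theta''|\theta')\alpha(\theta',\theta'') d\theta''$ in a class-invariant form, and then invoke $\pi$-invariance of $T$ along the class to redistribute each rejection measure as a common multiple of $T(\theta,\cdot)$. This bookkeeping of the Dirac parts is where the argument is most intricate and will consume most of the effort, since superficially the three kernels place their rejection mass at different base points; flatness plus $\tilde q$-symmetry is precisely what is needed to force a common class-uniform rejection law.
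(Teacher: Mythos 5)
Your treatment of the four reversibility claims is correct and essentially the paper's own argument: the joint-measure symmetry $\pi(d\theta)\,T(\theta,d\theta')=\frac{\pi(\theta)\pi(\theta')}{Z(\theta)}\,\mathbf 1_{\{\theta'\in K(\theta)\}}\,\nu(d\theta)\,\nu(d\theta')$ (using $Z(\theta)=Z(\theta')$ on each class) is the compact form of the paper's Fubini computation for $T$; the adjoint identity $(PT)^{*}=TP$ is exactly how the paper reversibilizes $\bar P$; and the $\widetilde P_H$ and $\widetilde P_P$ cases are the same one-line verifications.

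The gap is in the final equality $PT=TP=\widetilde P_P$, precisely at your step (iii). Under flatness, the rejected mass of $PT(\theta,\cdot)$ is the measure $(1-r(\theta'))\,T(\theta,d\theta')$ supported on $K(\theta)$, where $r(\theta')=\int q(z\mid\theta')\min\{1,\pi(z)/\pi(\theta)\}\,dz$. Your plan to ``redistribute each rejection measure as a common multiple of $T(\theta,\cdot)$'' requires $r(\cdot)$ to be constant along the class, which is not implied by flatness of $\pi$ nor by \eqref{eq:normalized-fiber-sym}: the latter constrains only the class average of $q$, not its pointwise behaviour along the fiber. More fundamentally, even granting that constancy, a rejection law proportional to $T(\theta,\cdot)$ is spread over $K(\theta)$, whereas $\widetilde P_P$ by construction parks its rejection mass at $\delta_\theta$; these singular parts cannot coincide unless $K(\theta)=\{\theta\}$. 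So the bookkeeping you flag as ``intricate'' does not merely consume effort — as sketched, it cannot close, and its natural endpoint (rejection mass $\propto T(\theta,\cdot)$) contradicts rather than establishes $PT=\widetilde P_P$. For comparison, the paper's own derivation meets the same wall and steps over it: it silently replaces $\int T(\theta,d\theta')\,\mathbf 1_A(\theta')[\cdots]$ by $\mathbf 1_A(\theta)[\cdots]$, i.e.\ relocates the teleported rejection mass back to $\theta$ without justification. The only part of the identity that genuinely goes through — and which both you and the paper handle correctly — is the matching of the absolutely continuous (accepted) parts via $\tilde q(\theta''\mid\theta)=\int T(\theta,d\theta')\,q(\theta''\mid\theta')$ together with the class-invariance of $\min\{1,\pi(\theta'')/\pi(\theta')\}$. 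Any honest proof of the last claim must either add a hypothesis forcing the three rejection laws to agree (e.g.\ $r$ constant on fibers \emph{and} an interpretation of equality only up to the action of $T$ on the rejected state) or weaken the conclusion to equality of the accepted parts.
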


\begin{proof}
\textit{Reversibility of $T$.}
For any $\theta$, write $Z(\theta):=\int_{K(\theta)} \pi(u) \nu(du)\in(0,\infty)$.
By \eqref{eqn:teleport_kernel}, if $B\subset\Theta$ is measurable then
\[
T(\theta,B)=\frac{\pi\big(B\cap K(\theta)\big)}{Z(\theta)}.
\]
Hence, for measurable $A,B\subset\Theta$,
\[
\begin{aligned}
\int_A T(\theta,B) \pi(d\theta)
&= \int_A \frac{\pi\big(B\cap K(\theta)\big)}{Z(\theta)} \pi(d\theta)
  =  \int_A \frac{\displaystyle \int_{K(\theta)} \mathbf 1_B(u) \pi(u) \nu(du)}{Z(\theta)} \pi(d\theta)\\[3pt]
&= \int_\Theta  \int_\Theta \mathbf 1_A(\theta) \mathbf 1_B(u) \mathbf 1_{\{u\in K(\theta)\}} 
      \frac{\pi(u)}{Z(\theta)} \nu(du) \pi(d\theta).
\end{aligned}
\]
If $u\in K(\theta)$ then $K(u)=K(\theta)$ and $Z(u)=Z(\theta)$. Using this, switch the order of integration and integrate first in $\theta$:
\[
\begin{aligned}
\int_A T(\theta,B) \pi(d\theta)
&= \int_\Theta \mathbf 1_B(u) \frac{\pi(u)}{Z(u)}
   \Bigg[\int_\Theta \mathbf 1_A(\theta) \mathbf 1_{\{\theta\in K(u)\}} \pi(d\theta)\Bigg]\nu(du)\\[3pt]
&= \int_\Theta \mathbf 1_B(u) \frac{\pi(u)}{Z(u)} \pi\big(A\cap K(u)\big) \nu(du).
\end{aligned}
\]
Now group this outer integral by the partition $\{K(u)\}$: for any fixed class $K(u)$ the factor
$\pi\big(A\cap K(u)\big)/Z(u)$ is constant over that class, so integrating $\mathbf 1_B(u) \pi(u)$ over
$u\in K(u)$ yields $\pi\big(B\cap K(u)\big)$. Therefore
\[
\int_A T(\theta,B) \pi(d\theta)
= \int_\Theta \frac{\pi\big(A\cap K(u)\big) \pi\big(B\cap K(u)\big)}{Z(u)} \nu(du),
\]
and the right–hand side is symmetric in $A$ and $B$. Hence it also equals
$\int_B T(\theta,A) \pi(d\theta)$, proving that $T$ is $\pi$–reversible.

\medskip
\textit{Reversibility of $\bar P$.}
For bounded measurable $f,g\ge0$,
\[
\int f(\theta) (PT )g(\theta) \pi(d\theta)
= \int (Tf)(\theta) Pg(\theta) \pi(d\theta)
= \int (PTf)(\theta) g(\theta) \pi(d\theta),
\]
using detailed balance for $P$ and $T$.
With $f=\mathbf 1_A$, $g=\mathbf 1_B$ this yields
$\int_A (PT )(\theta,B) \pi(d\theta) = \int_B (TP)(\theta,A) \pi(d\theta)$.
Therefore
\[
\int_A \bar P(\theta,B) \pi(d\theta)
=\tfrac12\int_A(PT )(\theta,B) \pi(d\theta)+\tfrac12\int_A(TP)(\theta,B) \pi(d\theta)
=\int_B \bar P(\theta,A) \pi(d\theta),
\]
so $\bar P$ is $\pi$–reversible.

\medskip
\textit{Reversibility of $\widetilde P_H$.}
A convex combination of $\pi$–reversible kernels is $\pi$–reversible:
\[
\int_A \widetilde P_H(\theta,B) \pi(d\theta)
=(1-\varepsilon)\int_A P(\theta,B) \pi(d\theta)
+\varepsilon\int_A T(\theta,B) \pi(d\theta)
=\int_B \widetilde P_H(\theta,A) \pi(d\theta).
\]

\medskip
\textit{Reversibility of $\widetilde P_P$.}
This is the standard Metropolis–Hastings detailed balance identity with proposal $\tilde q$ and acceptance $\tilde\alpha$.

\medskip
\textit{Equality}
Assume $\pi$ is constant on each $K(\theta)$. Write the MH acceptance in $P$ as
$\alpha(\theta',\theta'')=\min\{1,\pi(\theta'')/\pi(\theta')\}$ with symmetric $q$.
If $\theta'\in K(\theta)$, then $\pi(\theta')=\pi(\theta)$, so
$\alpha(\theta',\theta'')=\min\{1,\pi(\theta'')/\pi(\theta)\}$ is constant in $\theta'$.
Hence, for measurable $A$,
\[
\begin{aligned}
(PT )(\theta,A)
&=\int_{K(\theta)} T(\theta,d\theta')\Big[\int_A q(\theta'' | \theta') \alpha(\theta',\theta'') d\theta''
+\mathbf 1_A(\theta') \int \bigl(1-\alpha(\theta',z)\bigr) q(z | \theta') dz\Big] \\
&=\int_A \min \Big\{1,\frac{\pi(\theta'')}{\pi(\theta)}\Big\} \underbrace{\Big(\int_{K(\theta)}q(\theta'' | \theta') T(\theta,d\theta')\Big)}_{=\ \tilde q(\theta'' | \theta)} d\theta'' \\
&\quad\ +\ \mathbf 1_A(\theta) \int \Big(1-\min \Big\{1,\frac{\pi(z)}{\pi(\theta)}\Big\}\Big) \tilde q(z | \theta) dz.
\end{aligned}
\]
If, in addition, \eqref{eq:normalized-fiber-sym} holds, then $\tilde q(\theta'' | \theta)=\tilde q(\theta | \theta'')$, so the one–step MH acceptance reduces to
$\tilde\alpha(\theta,\theta'')=\min\{1,\pi(\theta'')/\pi(\theta)\}$ and the last two displays coincide with
\[
\widetilde P_P(\theta,A)\ =\ \int_A \tilde q(\theta'' | \theta) \tilde\alpha(\theta,\theta'') d\theta''
+\mathbf 1_A(\theta)\int \bigl(1-\tilde\alpha(\theta,z)\bigr) \tilde q(z | \theta) dz.
\]
Therefore $PT =\widetilde P_P$. By the same symmetry, $TP$ yields the same transition law, so $TP=PT =\widetilde P_P$, and consequently $\bar P=PT $.
\end{proof}

\begin{proposition}[Spectral Bounds]
\label{prop:IA_master_bound}
Let \(P\) and \(T\) be \(\pi\)-reversible Markov kernels and define the reversibilized two–step kernel
\[
\bar P = \tfrac12(PT+TP).
\]
For \(\varepsilon\in[0,1]\) set the hybrid kernel \(\widetilde P_H=(1-\varepsilon)P+\varepsilon T\).
Let \(q\) be a proposal density and define the teleported proposal
\[
\tilde q(y\mid x) = \int T(x,d\eta) q(y\mid \eta),
\]
and let \(\widetilde P_P\) be the Metropolis–Hastings kernel with target \(\pi\) and proposal \(\tilde q\).
Assume there exists a version \(\bar p(x,y)\) of the density of \(\bar P(x,dy)\) w.r.t.\ a reference measure such that, for some \(m\in(0,1]\),
\[
\tilde q(y\mid x)\ \ge\ m \bar p(x,y)\qquad\text{for \(\pi\)-a.e.\ }(x,y).
\]
Then:
\[
\gamma(\widetilde P_H)\ \ge\ \varepsilon(1-\varepsilon) \gamma(\bar P),
\qquad
\gamma(\widetilde P_P)\ \ge\ m \gamma(\bar P).
\]
In particular, with \(\varepsilon=\tfrac12\) one has \(\gamma(\tfrac12 P+\tfrac12 T)\ge \tfrac14 \gamma(\bar P)\).
\end{proposition}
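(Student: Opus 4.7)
The plan is to derive both inequalities by lower-bounding the Dirichlet forms $\mathcal{E}_{\widetilde P_H}$ and $\mathcal{E}_{\widetilde P_P}$ by a multiple of $\mathcal{E}_{\bar P}$ and invoking the variational characterization of the absolute spectral gap, $\gamma(Q)=\inf_{\mathrm{Var}_\pi(f)>0}\mathcal{E}_Q(f,f)/\mathrm{Var}_\pi(f)$. Proposition~\ref{prop:reversibility} ensures that $\pi$--reversibility of $P$ and $T$ makes $P^{2}$, $T^{2}$, $\bar P$, $\widetilde P_H$, and $\widetilde P_P$ all self-adjoint contractions on $L^{2}(\pi)$, so the variational formula applies to each.

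For Part 1 I would first expand $\widetilde P_H^{2} = (1-\varepsilon)^{2}P^{2} + 2\varepsilon(1-\varepsilon)\bar P + \varepsilon^{2}T^{2}$, and then use the identity $1 = (1-\varepsilon)^{2} + 2\varepsilon(1-\varepsilon) + \varepsilon^{2}$ to rewrite
\[
I - \widetilde P_H^{2} \;=\; (1-\varepsilon)^{2}(I - P^{2}) + 2\varepsilon(1-\varepsilon)(I - \bar P) + \varepsilon^{2}(I - T^{2}).
\]
Taking $\langle f,\cdot\, f\rangle_{L^{2}(\pi)}$ yields the Dirichlet-form identity
$\mathcal{E}_{\widetilde P_H^{2}}(f,f) = (1-\varepsilon)^{2}\mathcal{E}_{P^{2}}(f,f) + 2\varepsilon(1-\varepsilon)\mathcal{E}_{\bar P}(f,f) + \varepsilon^{2}\mathcal{E}_{T^{2}}(f,f)$.
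Since $P$ and $T$ are Markov contractions, $\mathcal{E}_{P^{2}}(f,f)\ge 0$ and $\mathcal{E}_{T^{2}}(f,f)\ge 0$, so dropping the outer two terms gives $\mathcal{E}_{\widetilde P_H^{2}}(f,f) \ge 2\varepsilon(1-\varepsilon)\mathcal{E}_{\bar P}(f,f)$, hence $\gamma(\widetilde P_H^{2}) \ge 2\varepsilon(1-\varepsilon)\gamma(\bar P)$. Because $\widetilde P_H$ is self-adjoint with $\|\widetilde P_H\|_{L^{2}_{0}}=:\lambda\in[0,1]$, we have $\gamma(\widetilde P_H^{2}) = 1-\lambda^{2} = (1-\lambda)(1+\lambda) \le 2\gamma(\widetilde P_H)$, so $\gamma(\widetilde P_H) \ge \tfrac{1}{2}\gamma(\widetilde P_H^{2}) \ge \varepsilon(1-\varepsilon)\gamma(\bar P)$, which at $\varepsilon=\tfrac12$ specializes to $\tfrac14\gamma(\bar P)$.

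For Part 2 I would write the MH Dirichlet form in its symmetric $\min$-form, using $\pi(x)\tilde q(y\mid x)\tilde\alpha(x,y) = \min\{\pi(x)\tilde q(y\mid x),\pi(y)\tilde q(x\mid y)\}$:
\[
\mathcal{E}_{\widetilde P_P}(f,f) \;=\; \tfrac{1}{2}\iint (f(x)-f(y))^{2}\min\{\pi(x)\tilde q(y\mid x),\pi(y)\tilde q(x\mid y)\}\, dx\, dy.
\]
The dominance hypothesis $\tilde q(y\mid x)\ge m\,\bar p(x,y)$ gives $\pi(x)\tilde q(y\mid x)\ge m\pi(x)\bar p(x,y)$, and by $\pi$--reversibility of $\bar P$ (so $\pi(y)\bar p(y,x)=\pi(x)\bar p(x,y)$) also $\pi(y)\tilde q(x\mid y)\ge m\pi(y)\bar p(y,x) = m\pi(x)\bar p(x,y)$. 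Hence the integrand dominates $m\,\pi(x)\bar p(x,y)(f(x)-f(y))^{2}$, which integrated gives $2m\mathcal{E}_{\bar P}(f,f)/2 = m\mathcal{E}_{\bar P}(f,f)$. Therefore $\mathcal{E}_{\widetilde P_P}(f,f) \ge m\,\mathcal{E}_{\bar P}(f,f)$ and $\gamma(\widetilde P_P)\ge m\,\gamma(\bar P)$.

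The main obstacle is bookkeeping rather than a substantive technical difficulty: in Part 2, one must make sure the symmetric $\min$-representation of the MH Dirichlet form and the pointwise comparison $\tilde q(y\mid x)\ge m\bar p(x,y)$ are valid with respect to the same reference measure (explicitly assumed), and that the reject-at-$x$ part of $\widetilde P_P$ contributes nothing because $(f(x)-f(x))^{2}=0$; in Part 1, one must justify the Hilbert-space operator identity in a general (not merely finite) state space, which reduces to the standard fact that a $\pi$--reversible Markov kernel is a bounded self-adjoint operator on $L^{2}(\pi)$ with spectrum in $[-1,1]$.
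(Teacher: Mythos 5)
Your Part 2 is essentially the paper's own argument: write the MH Dirichlet form in its symmetric $\min$-representation, use the dominance hypothesis together with $\pi$-reversibility of $\bar P$ to bound the integrand below by $m\,\pi(x)\bar p(x,y)(f(x)-f(y))^2$, and conclude. Your Part 1, however, takes a genuinely different route. The paper fixes a unit $f\in L_2^0(\pi)$, sets $a=1-\langle Pf,f\rangle$, $b=1-\langle Tf,f\rangle$, computes $\langle\bar Pf,f\rangle=\langle Tf,Pf\rangle$ directly via Cauchy--Schwarz and the bounds $\|(I-P)f\|^2\le 2a$, $\|(I-T)f\|^2\le 2b$, and then invokes the scalar inequality $(1-\varepsilon)a+\varepsilon b\ge \varepsilon(1-\varepsilon)(\sqrt a\pm\sqrt b)^2$. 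You instead expand the operator square $\widetilde P_H^2=(1-\varepsilon)^2P^2+2\varepsilon(1-\varepsilon)\bar P+\varepsilon^2T^2$, drop the two positive-semidefinite terms $I-P^2$ and $I-T^2$, and convert back from $\gamma(\widetilde P_H^2)=1-\lambda^2$ to $\gamma(\widetilde P_H)$ via $1-\lambda^2\le 2(1-\lambda)$. Both yield the same constant $\varepsilon(1-\varepsilon)$; your version is cleaner in that the only inputs are that reversible Markov kernels are self-adjoint contractions, and the passage through the squared kernel handles the \emph{absolute} spectral gap of $\widetilde P_H$ automatically (since $\widetilde P_H^2$ is positive semidefinite, $|\langle\widetilde P_H^2f,f\rangle|=\|\widetilde P_Hf\|^2$), whereas the paper's write-up has to argue separately about $1-|\langle\cdot f,f\rangle|$ versus $1-\langle\cdot f,f\rangle$ and, as printed, states the intermediate bound on $\langle\bar Pf,f\rangle$ in a direction that does not immediately chain with the final display. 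One small caveat applies to your Part 2 exactly as it does to the paper's: the Dirichlet-form comparison $\mathcal E_{\widetilde P_P}\ge m\,\mathcal E_{\bar P}$ controls $1-\sup_f\langle\widetilde P_Pf,f\rangle$ (the right spectral gap), and passing to the absolute gap $1-\sup_f|\langle\widetilde P_Pf,f\rangle|$ requires either the convention that $\gamma$ denotes the right gap or an extra word about the negative end of the spectrum; since the paper makes the identical leap, this is a shared convention rather than a gap in your argument.
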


\begin{proof}
 
Fix \(f\in L_2^0(\pi)\) with \(\|f\|=1\) and put
\(a:=1-\langle Pf,f\rangle\ge0\), \(b:=1-\langle Tf,f\rangle\ge0\).
Then
\[
1-|\langle \widetilde P_H f,f\rangle|
\ \ge\ 1-\langle \widetilde P_H f,f\rangle
\ =\ (1-\varepsilon)a+\varepsilon b.
\]
Further, using \(Tf=f-(I-T)f\) and \(Pf=f-(I-P)f\),
\[
\langle \bar P f,f\rangle
=\langle Tf,Pf\rangle
=1-a-b+\langle (I-T)f,(I-P)f\rangle
\ \le\ 1-a-b+\|(I-T)f\| \|(I-P)f\|.
\]
Since \(\|(I-P)f\|^2\le 2a\) and \(\|(I-T)f\|^2\le 2b\), we get
\[
\langle \bar P f,f\rangle\ \le\ 1-a-b+2\sqrt{ab},
\qquad\text{hence}\qquad
1-|\langle \bar P f,f\rangle|\ \ge\ (\sqrt a-\sqrt b)^2.
\]
By the elementary inequality
\[
(1-\varepsilon)a+\varepsilon b\ \ge\ \varepsilon(1-\varepsilon) (\sqrt a-\sqrt b)^2
\qquad(\varepsilon\in[0,1],\ a,b\ge0),
\]
we conclude
\[
1-|\langle \widetilde P_H f,f\rangle|
\ \ge\ \varepsilon(1-\varepsilon) \big(1-|\langle \bar P f,f\rangle|\big).
\]
Taking the supremum over unit \(f\in L_2^0(\pi)\) yields
\(\gamma(\widetilde P_H)\ge \varepsilon(1-\varepsilon)\gamma(\bar P)\).

For MH with proposal \(\tilde q\),
\[
1-\langle \widetilde P_P f,f\rangle
=\frac12\iint (f(y)-f(x))^2
\min\{\pi(x)\tilde q(y\mid x),\pi(y)\tilde q(x\mid y)\} dx dy.
\]
By the coverage assumption and detailed balance of \(\bar P\),
\[
\min\{\pi(x)\tilde q(y\mid x),\pi(y)\tilde q(x\mid y)\}
\ \ge\ m \pi(x)\bar p(x,y),
\]
hence
\[
1-\langle \widetilde P_P f,f\rangle
\ \ge\ m \frac12\iint (f(y)-f(x))^2 \pi(x)\bar p(x,y) dx dy
= m \big(1-\langle \bar P f,f\rangle\big).
\]
Thus \(1-|\langle \widetilde P_P f,f\rangle|
\ge m (1-|\langle \bar P f,f\rangle|)\) for all unit \(f\in L_2^0(\pi)\),
and taking suprema gives \(\gamma(\widetilde P_P)\ge m \gamma(\bar P)\).
\end{proof}

\begin{proposition}[Two parameters, finite states]
\label{prop:gibbs_gap}
Let $\Omega=\{1,\dots,m\}^2$ and define
\[
\pi(\theta_1,\theta_2)=
\begin{cases}
a, & \text{if } \theta_1=\theta_2,\\
b, & \text{if } \theta_1\neq\theta_2,
\end{cases}
\qquad 0<b<a,\qquad m a + m(m-1) b = 1.
\]
Let $D:=\{(\theta,\theta):\theta=1,\dots,m\}$ (the diagonal) and $O:=\Omega\setminus D$.
Let $P$ be the random–scan single–site Gibbs kernel.
Define a teleport kernel $T$ by
\[
T(x,y)=
\begin{cases}
\frac{1}{m}, & x\in D,\ y\in D,\\
\mathbf 1_{\{y=x\}}, & x\in O,\\
0, & \text{otherwise}.
\end{cases}
\]
Then $T$ is $\pi$–reversible, and with
\[
\bar P\ :=\ \tfrac12 (P T)+\tfrac12 (T P),
\]
we have, in the limit $b\to 0$ (equivalently $a\to 1/m$),
\[
\gamma(P)\ \to\ 0
\qquad\text{and}\qquad
\gamma(\bar P)\ \to\ 1.
\]
\end{proposition}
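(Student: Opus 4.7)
Reversibility of $T$ is immediate from detailed balance: for $x,y\in D$ both sides of $\pi(x)T(x,y)=\pi(y)T(y,x)$ equal $a/m$; for $x,y\in O$ the only non-trivial case is $x=y$, which is symmetric; mixed pairs give zero on both sides since $T$ preserves the partition $\{D,O\}$. For the two spectral-gap claims, the plan is to use opposite tools: a Cheeger upper bound to drive $\gamma(P)\to 0$, and a bounded-step Doeblin minorization to push $\gamma(\bar P)\to 1$.

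For $\gamma(P)\to 0$, I would apply Cheeger's inequality (Lemma~\ref{lemma: cheeger}) with the singleton $S=\{d_1\}$, where $d_1=(1,1)$. The random-scan Gibbs conditional $\pi(\theta_1=k\mid\theta_2=1)$ equals $a/(a+(m-1)b)$ at $k=1$ and $b/(a+(m-1)b)$ otherwise, so $P(d_1,d_1)=a/(a+(m-1)b)$ and $P(d_1,S^c)=(m-1)b/(a+(m-1)b)\to 0$ as $b\to 0$. Since $\pi(S)=a\le 1/2$ for $m\ge 2$ and small $b$, the conductance obeys $h_P\le (m-1)b/(a+(m-1)b)$, so $\gamma(P)\le 2h_P\to 0$.

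For $\gamma(\bar P)\to 1$, the key algebraic fact is that single-site Gibbs cannot flip both coordinates at once, so $P(d_i,d_k)=0$ for $k\ne i$. Writing $c_b:=a/(a+(m-1)b)$, a direct expansion gives $(PT)(d_i,d_j)=(TP)(d_i,d_j)=c_b/m$, so on $D\times D$ the kernel $\bar P$ has constant rows equal to $c_b/m$; moreover, for any $x=(i,j)\in O$ one checks $\bar P(x,D)=c_b$ directly from $(TP)(x,\cdot)=P(x,\cdot)$ and $(PT)(x,D)=c_b$. Composing two steps and lower-bounding by the $D\to D$ contribution alone yields
\[
\bar P^{2}(x,d_l) \;\ge\; \sum_{y\in D} \bar P(x,y)\,\bar P(y,d_l) \;=\; \tfrac{c_b}{m}\,\bar P(x,D) \;=\; \tfrac{c_b^{2}}{m}
\]
uniformly in $x\in\Omega$, i.e.\ $\bar P^{2}(x,\cdot)\ge c_b^{2}\,\mathrm{Unif}(D)$. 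The standard spectral-gap consequence of a two-step Doeblin minorization then gives $\gamma(\bar P)\ge 1-(1-c_b^{2})^{1/2}\to 1$ as $b\to 0$. The main obstacle here is the off-diagonal block: although $\pi(O)\to 0$, $\bar P$ still visits $O$ with probability $O(b)$, so a naive rank-one reduction on $D$ alone would allow a spurious second eigenvalue near $1$; the two-step minorization above avoids having to run a delicate perturbation argument in the $L_2(\pi_b)$ inner product whose weights degenerate on $O$.
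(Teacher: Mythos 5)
Your proof is correct, and while Part 1 mirrors the paper's argument, Part 2 takes a genuinely different route. For $\gamma(P)\to 0$ you and the paper both apply Cheeger's inequality and exploit the fact that single-site Gibbs cannot hop between diagonal states, arriving at the same escape probability $(m-1)b/(a+(m-1)b)$; your singleton cut $S=\{(1,1)\}$ is a harmless variant of the paper's choice of a half-mass subset of $D$. For $\gamma(\bar P)\to 1$ the paper computes the same entries $\bar P(d_i,d_j)=c_b/m$ but then argues qualitatively that the restriction of $\bar P$ to $D$ converges to a rank-one kernel and that, since $\pi(D)\to1$ and $\bar P(x,O)\to0$, the second eigenvalue must vanish — a perturbation step it does not spell out, and one that is delicate precisely because the $L_2(\pi)$ weights on $O$ degenerate. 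Your two-step Doeblin minorization $\bar P^{2}(x,\cdot)\ge c_b^{2}\,\mathrm{Unif}(D)$, valid uniformly over all of $\Omega$ including the off-diagonal block, sidesteps that issue entirely and delivers the quantitative bound $\gamma(\bar P)\ge 1-(1-c_b^{2})^{1/2}$; this is cleaner and matches the minorization-to-gap device the paper itself uses in its Proposition on manifolds. One small slip: under the paper's composition convention $(PT)(x,A)=\int P(y,A)\,T(x,dy)$, it is $(PT)(x,\cdot)$, not $(TP)(x,\cdot)$, that reduces to $P(x,\cdot)$ for $x\in O$ (since $T(x,\cdot)=\delta_x$ there); the quantity you actually use, $\bar P(x,D)=c_b$, is unaffected because $T$ maps $D$ into $D$ and fixes $O$ pointwise.
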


\begin{proof}

Define $U_D$ to be the uniform distribution on $D$, i.e.
\[
U_D(y)=\begin{cases}
1/m, & y\in D,\\
0, & y\in O.
\end{cases}
\]

\smallskip
\noindent
(i) \textit{Standard Gibbs $P$.}
Let the conductance of $P$ be
\[
\mathbf h_P \ :=\ \min_{\substack{S\subset\Omega\\ \pi(S)\le 1/2}}
\frac{\sum_{x\in S}\pi(x)P(x,S^c)}{\pi(S)}.
\]
By Cheeger’s inequality, $\gamma(P)\le 2\mathbf h_P$.
Fix $S\subset D$ with $\pi(S)\approx 1/2$ (this is possible since $\pi(D)\to 1$ as $b\to 0$).
For any $x\in D$, the random-scan single-site Gibbs update changes exactly one coordinate,
so it cannot jump from $(i,i)$ to $(j,j)$ with $j\ne i$ in one step. Hence from $x\in D$,
\[
P(x,S^c)\ =\ P(x,O).
\]
By symmetry of the model,
\[
P(x,O)\ =\ \frac{(m-1)b}{ a+(m-1)b }\qquad\text{for every }x\in D.
\]
Therefore
\[
\sum_{x\in S}\pi(x)P(x,S^c)
=\sum_{x\in S}\pi(x) P(x,O)
=\frac{(m-1)b}{a+(m-1)b} \sum_{x\in S}\pi(x)
=\frac{(m-1)b}{a+(m-1)b} \pi(S),
\]
so
\[
\Phi_P(S) = \frac{\sum_{x\in S}\pi(x)P(x,S^c)}{\pi(S)}
=\frac{(m-1)b}{a+(m-1)b}\ \xrightarrow[b\to 0]{}\ 0.
\]
Hence $\mathbf h_P\to 0$ and thus $\gamma(P)\to 0$.

(ii) \textit{Identification–aware $\bar P$.}

Fix $x\in D$ and $y\in D$. A direct one-step computation gives
\[
(PT)(x,y)\ =\ \frac{a}{a+(m-1)b}\cdot \frac1m,\qquad
(TP)(x,y)\ =\ \frac{a}{a+(m-1)b}\cdot \frac1m,
\]
because from any diagonal state a single Gibbs update stays on the same diagonal
with probability $a/(a+(m-1)b)$, and $T$ uniformizes on $D$ (or, in the $TP$ term,
$T$ first puts $U_D$ and then $P$ can only remain on the same diagonal). Therefore,
for all $x\in D$ and $y\in D$,
\[
\bar P(x,y)\ =\ \tfrac12(PT)(x,y)+\tfrac12(TP)(x,y)
\ =\ \frac{a}{a+(m-1)b}\cdot \frac1m.
\]
Summing over $y\in D$ yields
\[
\bar P(x,D)\ =\ \frac{a}{a+(m-1)b}\ \xrightarrow[b\to 0]{}\ 1,
\qquad
\bar P(x,O)\ =\ 1-\bar P(x,D)\ \xrightarrow[b\to 0]{}\ 0.
\]
Thus, entrywise for $x\in D$ and $y\in D$,
\[
\bar P(x,y)\ \xrightarrow[b\to 0]{}\ U_D(y),
\]
and the restriction of $\bar P$ to $D$ converges to the rank-one kernel with all rows
equal to $U_D$. Since $\pi(D)\to 1$ and $\bar P(x,O)\to 0$ for $x\in D$, the second-largest
eigenvalue of $\bar P$ (in $L_2(\pi)$) satisfies $\lambda_2(\bar P)\to 0$, hence
$\gamma(\bar P)=1-\lambda_2(\bar P)\to 1$.
\end{proof}

\begin{proposition}[1-D two modes]
\label{prop:1ddeltaball}
Let the state space be a one-dimensional circle $\Omega$ of circumference $4L$, represented by $[-2L,2L]$ with endpoints connected. The target $\pi$ is bimodal:
\[
\pi(\theta)\propto
\begin{cases}
e^{-\nu|\theta|}, & \theta\in[-L,L],\\
e^{-\nu(2L-|\theta|)}, & \theta\in[-2L,-L)\cup(L,2L],
\end{cases}
\]
with  $\nu>0$.

\begin{enumerate}
\item \textit{Standard sampler.} Let $P$ be Metropolis–Hastings with a symmetric $\delta$-ball random walk. Then there is $C_1>0$ such that
\[
\gamma(P)\ \le\ C_1 e^{-\nu(L-\delta)}.
\]

\item \textit{Identifcation-aware RWM.} Let \[
s(\theta)=
\begin{cases}
\theta+2L, & \theta\in[-2L,0),\\[2pt]
\theta-2L, & \theta\in[0,2L).
\end{cases}
\]
and define the teleport kernel
\[
T(\theta,\cdot)=\tfrac12 \delta_\theta(\cdot)+\tfrac12 \delta_{s(\theta)}(\cdot), \quad \delta_\theta(\cdot)\text{ denotes the Dirac measure at } \theta
\]
Let $\widetilde P:=PT $. There exist constants $c_0,C_2>0$ such that if $\delta\le c_0/\nu$, then
\[
\gamma(\widetilde P)\ \ge\ C_2,
\]
uniformly in $L$ and $\nu$.
\end{enumerate}
\end{proposition}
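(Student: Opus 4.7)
The plan for Part 1 is to invoke Cheeger's inequality (Lemma~\ref{lemma: cheeger}) with the test set $S := [-L, L]$. By the symmetry $\pi \circ s = \pi$, one has $\pi(S) = 1/2$, so it suffices to upper bound $\int_S P(\theta, S^c)\pi(d\theta)$. Because the $\delta$-ball proposal has range $\delta$, the move $P(\theta, S^c)$ is nonzero only when $\operatorname{dist}(\theta, \{-L, L\})\le \delta$. Using the explicit formula $\pi(\theta)\propto e^{-\nu|\theta|}$ on $S$, direct integration of $\int_{L-\delta}^{L} e^{-\nu\theta}\,d\theta$ together with its mirror image yields $\pi(\{\theta\in S:\operatorname{dist}(\theta,\partial S)\le \delta\})\le \tfrac12 e^{-\nu(L-\delta)}$, where the $\nu^{-1}$ from the integration cancels against the normalizer. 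Bounding $P(\theta,S^c)\le 1$ and dividing by $\pi(S)=1/2$ gives $\mathbf h_P\le e^{-\nu(L-\delta)}$, hence the claimed bound with $C_1=2$.

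For Part 2, I would apply the state decomposition theorem (Lemma~\ref{lemma: state_decomp}) to the partition $A_1:=[-L,L]$, $A_2:=[-2L,-L)\cup(L,2L]$. Since $P$ is translation-invariant and $s$ is an antipodal shift, $T$ and $P$ commute, so $\widetilde P=PT=TP$ is $\pi$-reversible and Lemma~\ref{lemma: state_decomp} gives $\gamma(\widetilde P)\ge \tfrac12 \gamma(P_H)\min_i \gamma(\widetilde P_{A_i})$. For the within-mode gap, I would note that from $\theta\in A_i$ the teleport sends the state to $s(\theta)\in A_j$ with probability $1/2$, after which the subsequent $\delta$-ball MH step remains in $A_j$ almost surely. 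Thus $\widetilde P_{A_i}$ dominates, in the Dirichlet-form sense, a $1/2$-lazy version of the plain $\delta$-ball MH on $A_i$ targeting the log-concave density $\pi|_{A_i}\propto e^{-\nu|\theta|}$. Lemma~\ref{lemma: lb_deltaball} then gives $\gamma(P|_{A_i})\ge C(\nu\delta)^2$, so the hypothesis $\delta\le c_0/\nu$ produces $\gamma(\widetilde P_{A_i})\ge \tfrac12 C c_0^2$ uniformly in $L,\nu$; the symmetry $s$ identifies $\gamma(\widetilde P_{A_2})=\gamma(\widetilde P_{A_1})$, so only one side requires analysis. For the two-state component chain, $\widetilde P(\theta, A_2)\ge \tfrac12 P(s(\theta), A_2)\ge \tfrac14$ for every $\theta\in A_1$ outside a boundary layer of $\pi$-mass $O(e^{-\nu L})$, so $P_H(1,2)$ is bounded below by an absolute constant and so is $\gamma(P_H)$.

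The main obstacle will be formalizing the ``dominates a lazy chain'' step rigorously: the extra term $\tfrac12 P(s(\theta),\cdot)$ in $\widetilde P$ can return a small amount of mass from $A_j$ back into $A_i$ across the shared boundary, so I must verify that (i) $\pi|_{A_i}$ is invariant for $\widetilde P_{A_i}$, which follows from the symmetric action of $s$ on the partition together with $\pi$-invariance of $\widetilde P$, and (ii) this boundary return contribution can be dropped in a Dirichlet-form comparison because it is nonnegative. Once these are handled, combining $\gamma(P_H)\ge c'$ with $\min_i\gamma(\widetilde P_{A_i})\ge C''$ through the state decomposition inequality produces the uniform constant $C_2=\tfrac12 c' C''$.
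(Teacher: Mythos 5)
Your proof follows essentially the same route as the paper's: Cheeger's inequality applied to the $\delta$-boundary strip of $[-L,L]$ for Part 1, and for Part 2 the state-decomposition theorem (Lemma~\ref{lemma: state_decomp}) combined with Lemma~\ref{lemma: lb_deltaball} for the within-mode gap and the cross-mode conductance of the teleported chain; your Dirichlet-form domination argument for the restricted kernel is, if anything, more careful than the paper's assertion that $(\widetilde P)_A=\tfrac12 I_A+\tfrac12 P_A$ exactly. The one slip is the claim that the hypothesis $\delta\le c_0/\nu$ yields $\gamma(\widetilde P_{A_i})\ge\tfrac12 C c_0^2$ — the resulting inequality $(\nu\delta)^2\le c_0^2$ points the wrong way, and a uniform constant requires $\nu\delta$ bounded \emph{below}, i.e. $\delta\asymp 1/\nu$; the paper's own proof shares this wrinkle and resolves it by explicitly choosing $\delta\propto 1/\nu$.
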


\begin{proof}
\textit{(1) $\delta-$ball RWM.} Write $A=[-L,L]$ and $A^c=\Omega\setminus A$. By symmetry, $\pi(A)=\pi(A^c)=1/2$. A move from $A$ to $A^c$ can occur only from the boundary strips $[L-\delta,L]$ and $[-L,-L+\delta]$. Hence
\[
\mathbf h_P(A)
=\frac{1}{\pi(A)}\int_A P(\theta,A^c) \pi(d\theta)
\ \le\ \frac{1}{\pi(A)}\int_{A\cap\partial_\delta A}\pi(\theta) d\theta
= \frac{2}{\pi(A)}\int_{L-\delta}^{L} c e^{-\nu\theta} d\theta
= \frac{4c}{\nu}\big(e^{-\nu(L-\delta)}-e^{-\nu L}\big),
\]
where $c$ is the normalizing constant inside $A$ and $\partial_\delta A$ is the $\delta$-neighborhood of the boundary. Thus $\mathbf h_P(A)\le C e^{-\nu(L-\delta)}$, and Cheeger’s inequality gives $\gamma(P)\le 2\mathbf h_P\le C_1 e^{-\nu(L-\delta)}$ for a possibly different constant $C_1$.

\medskip
\textit{(2)Identification-aware RWM.}  
 
For any $\theta\in A$,
\[
\widetilde P(\theta,A^c)
=\tfrac12 P(\theta,A^c)+\tfrac12 P\big(s(\theta),A^c\big).
\]
By translation invariance of the $\delta$–ball proposal and the symmetry $\pi(s(\cdot))=\pi(\cdot)$,
\[
P\big(s(\theta),A\big)=P(\theta,A^c),
\qquad
P\big(s(\theta),A^c\big)=1-P(\theta,A^c),
\]
hence for every $\theta\in A$,
\[
\widetilde P(\theta,A^c)=\tfrac12.
\]
Averaging over $\theta\in A$ gives the across-component conductance exactly:
\[
\mathbf h_{\widetilde P}(A)
=\frac{1}{\pi(A)}\int_A \widetilde P(\theta,A^c) \pi(d\theta)
=\tfrac12.
\]
The induced two–state chain on $\{A,A^c\}$ has off–diagonal probability $1/2$, so
$\gamma(\widetilde P_H)=1.$

Let $P_A$ be the MH kernel on $A$ targeting $\pi_A(\theta)\propto e^{-\nu|\theta|}$ with the same symmetric $\delta$–ball proposal (and similarly $P_{A^c}$ on $A^c$). From Lemma \ref{lemma: lb_deltaball}, there exist constants $C'>0$ and $c_0>0$ such that
\[
\gamma(P_A)\ \ge\ C'(\nu\delta)^2,\qquad
\gamma(P_{A^c})\ \ge\ C'(\nu\delta)^2,
\qquad\text{whenever }\nu\delta\le c_0.
\]
The restriction of $\widetilde P=PT $ to $A$ is
\[
(\widetilde P)_A  =  \tfrac12 I_A  +  \tfrac12 P_A,
\]
because the branch $T(\theta,\cdot)=\delta_{s(\theta)}$ starts outside $A$ and contributes a self–loop under restriction, while the branch $T(\theta,\cdot)=\delta_\theta$ followed by $P$ gives $P_A$.
For reversible kernels, $\gamma\big(\tfrac12 I+\tfrac12 K\big)=\tfrac12 \gamma(K)$, hence
\[
\gamma\big((\widetilde P)_A\big)\ \ge\ \tfrac12 \gamma(P_A)\ \ge\ \tfrac{C'}{2}(\nu\delta)^2,
\qquad
\gamma\big((\widetilde P)_{A^c}\big)\ \ge\ \tfrac{C'}{2}(\nu\delta)^2.
\]

By the bound in Lemma \ref{lemma: state_decomp},
\[
\gamma(\widetilde P)
\ \ge\ \tfrac12 \gamma(\widetilde P_H) 
\min \Big\{\gamma\big((\widetilde P)_A\big),\ \gamma\big((\widetilde P)_{A^c}\big)\Big\}
\ \ge\ \tfrac12\cdot 1 \cdot \tfrac{C'}{2}(\nu\delta)^2
 =  C_2 (\nu\delta)^2,
\]
for $C_2=C'/4$ and any $\nu\delta\le c_0$.
Choosing $\delta\propto 1/\nu$ makes the lower bound a positive constant independent of $L$, as claimed.

\end{proof}

\begin{proposition}[Gaussian RWM]
\label{prop:gaussian_rw}
Let $\pi$ be a probability density on a connected $\Omega\subset\mathbb{R}^n$. 
Suppose the support decomposes into $m$ disjoint open connected components $A_1,\dots,A_m$ with $\pi(\cup_i A_i)=1$. 
For each $i$, set $w_i=\pi(A_i)$, define $\pi_i(B)=\pi(B\cap A_i)/w_i$, pick a mode $\mu_i\in A_i$ of $\pi$ restricted to $A_i$, and write $d_i=\operatorname{dist}(\mu_i,\partial A_i)$ and $d_*=\min_i d_i$.

Assume:
\begin{enumerate}
\item[(SLC)] \textit{Strong log–concavity:} For each $i$, $\pi_i(\theta)\propto e^{-U_i(\theta)}$ on $A_i$ with
\[
\nabla^2 U_i(\theta)\ \succeq\ m_i I_n\quad\text{for all }\theta\in A_i,
\qquad m_{\min}:=\min_i m_i>0.
\]
Moreover, there exists $\delta>0$ such that $A_i^{\mathrm{int}}:=\{\theta\in A_i:\operatorname{dist}(\theta,\partial A_i)\ge\delta\}\neq\varnothing$, and on $A_i^{\mathrm{int}}$ the Hessian is bounded above:
\[
\nabla^2 U_i(\theta)\ \preceq\ L_i I_n\quad\text{for all }\theta\in A_i^{\mathrm{int}},
\qquad L_{\max}:=\max_i L_i<\infty.
\]
In particular, for all $r\in(0,d_i]$ there exist $c_i\ge 1$ and $\nu_i>0$ with
\[
\pi_i \left(\{\theta\in A_i:\ \|\theta-\mu_i\|\ge r\}\right)\ \le\ c_i e^{-\nu_i r}.
\]
Write $c_{\max}=\max_i c_i$, $\nu_{\min}=\min_i \nu_i$, and $\nu_{\max}=\max_i \nu_i$.

\item[(TEL)] \textit{Teleport kernel:} 
$T$ is $\pi$–reversible and there exist $\varepsilon_1\in(0,1]$, $\varepsilon_2>0$, $\varepsilon_0>0$, and measurable cores $A_i^\circ\subset A_i$ with $\pi_i(A_i^\circ)\ge \varepsilon_1$ such that for all $i\neq j$ and all $\theta\in A_i^\circ$,
\[
T(\theta,A_j^{\mathrm{int}})\ \ge\ \varepsilon_2
\qquad\text{and}\qquad
T(\theta,A_i^{\mathrm{int}})\ \ge\ \varepsilon_0.
\]
\end{enumerate}

Let $P$ be the Metropolis–Hastings kernel with Gaussian random–walk proposal $q(\theta,\cdot)=\mathcal N(\theta,\sigma^2 I_n)$ $(\sigma>0)$ and let 
\[
\bar P\ :=\ \tfrac12 (PT ) + \tfrac12 (TP).
\]
Then:
\begin{enumerate}
\item For any $\sigma>0$,
\[
\gamma(P)\ \le\ 2 c_{\max} \exp \Big(-\tfrac12 \nu_{\min} d_*\Big).
\]
\item There exists $c_0>0$ (independent of $\{d_i\}$ and all inter–mode separations) such that
\[
\gamma(\bar P)\ \ge\ c_0.
\]
 
\end{enumerate}
\end{proposition}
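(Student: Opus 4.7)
The plan is to mirror the two–part strategy used for Proposition~\ref{prop:deltaMH}: an upper bound via Cheeger's inequality on the worst–separated component, and a lower bound via the state decomposition theorem in Lemma~\ref{lemma: state_decomp}, with (TEL) supplying both inter–component communication and an interior minorization, and (SLC) controlling within–component mixing.

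\textbf{Part (1): upper bound on $\gamma(P)$.} Pick $i_*$ with $d_{i_*}=d_*$ and take the test set $S=A_{i_*}$ (after possibly re–grouping components so that $\pi(S)\le1/2$). Cheeger's inequality (Lemma~\ref{lemma: cheeger}) gives
\[
\gamma(P)\ \le\ 2 \mathbf h_P\ \le\ \frac{2}{\pi(S)}\int_S P(\theta,S^c) \pi(d\theta).
\]
Split $S=S_{\mathrm{in}}\cup S_{\mathrm{out}}$ at radius $d_*/2$ around $\mu_{i_*}$. On $S_{\mathrm{in}}$, a Gaussian proposal exiting $A_{i_*}$ must travel at least $d_*/2$, so $P(\theta,S^c)\le \mathbb P(\|Z\|\ge d_*/2)\le \exp(-d_*^2/(8\sigma^2))$ with $Z\sim\mathcal N(0,\sigma^2 I_n)$. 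On $S_{\mathrm{out}}$, bound $P(\theta,S^c)\le 1$ and use the sub–exponential mode concentration in (SLC), $\pi_{i_*}(\|\theta-\mu_{i_*}\|\ge d_*/2)\le c_{\max}\exp(-\nu_{\min} d_*/2)$. The exponential term dominates the Gaussian one once $d_*$ is large, and absorbing constants yields the stated bound $\gamma(P)\le 2c_{\max}\exp(-\tfrac12\nu_{\min} d_*)$.

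\textbf{Part (2): lower bound on $\gamma(\bar P)$.} Apply Lemma~\ref{lemma: state_decomp} to the partition $\{A_1,\dots,A_m\}$:
\[
\gamma(\bar P)\ \ge\ \tfrac12 \gamma(\bar P_H) \min_{i} \gamma(\bar P_{A_i}).
\]
For the component chain $\bar P_H$, (TEL) gives $T(\theta,A_j^{\mathrm{int}})\ge \varepsilon_2$ for any $\theta\in A_i^\circ$ and $j\ne i$. A Gaussian step launched from a point in $A_j^{\mathrm{int}}$ exits $A_j$ with probability at most $\exp(-\delta^2/(2\sigma^2))$, so both $(PT)(\theta,A_j)$ and $(TP)(\theta,A_j)$ are bounded below by a constant depending only on $(\varepsilon_2,\sigma,\delta)$. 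Averaging over the core $A_i^\circ$ of mass $\ge\varepsilon_1$ yields a uniform off–diagonal lower bound for $\bar P_H$, hence $\gamma(\bar P_H)\ge c_H>0$ depending only on $(\varepsilon_1,\varepsilon_2,m,\sigma,\delta)$.

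For the within–component chain $\bar P_{A_i}$, the second clause of (TEL) provides the minorization $T(\theta,A_i^{\mathrm{int}})\ge \varepsilon_0$, which places the state in the safe interior where (SLC) gives $m_{\min} I_n\preceq \nabla^2 U_i\preceq L_{\max} I_n$. On a strongly log–concave target, Gaussian RWM admits a spectral gap bounded below by a constant $c(\sigma,m_{\min},L_{\max})>0$ (the analog for Gaussian proposals of Lemma~\ref{lemma: lb_deltaball}, via the Bobkov log–concave isoperimetric inequality and standard conductance arguments). Composing the interior–minorization from $T$ with this within–component gap gives $\gamma(\bar P_{A_i})\ge c_W>0$ uniformly in $i$ and independent of all $d_j$. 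Setting $c_0=\tfrac12 c_H c_W$ completes the bound.

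The main obstacle is the within–component step. Standard Gaussian RWM mixing results are stated on $\mathbb R^n$ or on a full convex body, whereas here $A_i$ is only an open component whose boundary is implicit. The way around this is to lean on the $T$–minorization into $A_i^{\mathrm{int}}$: once the chain is in the interior, the Gaussian step leaves $A_i$ only with sub–Gaussian probability, so a conductance/coupling argument on the log–concave target restricted to $A_i$ (truncated by rejection at $\partial A_i$) transfers the usual strongly log–concave mixing rate to $\bar P_{A_i}$ with constants depending only on $(m_{\min},L_{\max},\sigma,\varepsilon_0)$, never on the separations $\{d_i\}$.
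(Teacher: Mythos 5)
Your Part (1) and the skeleton of Part (2) (Cheeger upper bound on the worst-separated component; state decomposition via Lemma~\ref{lemma: state_decomp} with (TEL) driving the aggregated chain) coincide with the paper's proof, and the inter-component bound is fine --- though note that only the teleport-first ordering $PT$ needs to be bounded below for $\theta\in A_i^\circ$, since after a local move the chain may exit the core $A_i^\circ$ on which $T(\cdot,A_j^{\mathrm{int}})\ge\varepsilon_2$ is guaranteed; the factor $\tfrac12$ in $\bar P$ absorbs the other term.

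The genuine soft spot is your within-component step. You invoke a spectral-gap lower bound for Gaussian RWM on a strongly log-concave target ``via the Bobkov isoperimetric inequality,'' but those results are stated for $\mathbb{R}^n$ or convex bodies, whereas $A_i$ is merely an open connected component --- (SLC) does not make it convex, and the restricted chain $\bar P_{A_i}$ of Lemma~\ref{lemma: state_decomp} folds boundary rejections back in. Your proposed fix (the chain rarely exits from $A_i^{\mathrm{int}}$, so the truncation is harmless) does not address the non-convexity of $A_i$ itself, so the key constant $c_W$ is asserted rather than established. The paper avoids this machinery entirely: from any $\theta\in A_i^\circ$, the move $TP$ first lands in $A_i^{\mathrm{int}}$ with probability $\ge\varepsilon_0$ by (TEL), and from any $z\in A_i^{\mathrm{int}}$ one Gaussian step reaches a \emph{fixed} interior ball $B_i\subset A_i^{\mathrm{int}}$ of radius $\rho\le\delta$ with probability at least $a_0:=\Pr(\|Z\|\le\rho)\,e^{-\frac12 L_{\max}\rho^2}\cdot\lambda(B_i)/\lambda(B(0,\rho))>0$, using only the smoothness bound $L_{\max}$ to control acceptance. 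This yields the one-step Doeblin minorization $\bar P(\theta,\cdot)\ge\tfrac12\varepsilon_0 a_0\,\mathrm{Unif}(B_i)(\cdot)$ and hence $\gamma_i\ge\tfrac12\varepsilon_0 a_0$, with no appeal to log-concave mixing theory and no dependence on the geometry of $A_i$ beyond the interior margin $\delta$. In short: once you have the interior minorization $T(\theta,A_i^{\mathrm{int}})\ge\varepsilon_0$, the strongly log-concave RWM result is unnecessary baggage --- drop it and replace your within-component argument with the direct minorization into a fixed ball.
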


\begin{proof}
\textit{Part (1): Gaussian RWM.}
Fix $i$ with $\pi(A_i)\le 1/2$ and set 
$C_i=\{\theta\in A_i:\ \|\theta-\mu_i\|\le d_i/2\}$ and $S_i=A_i\setminus C_i$.
By strong log–concavity concentration,
$\pi(S_i)=\pi(A_i)\pi_i(S_i)\le \pi(A_i) c_i e^{-\nu_i d_i/2}$.
If $\theta\in C_i$, a one–step exit requires a jump of length at least $d_i/2$, so with $Z\sim\mathcal N(0,\sigma^2 I_n)$,
\[
P(\theta,A_i^c)\ \le\  \Pr(\|Z\|\ge d_i/2)\ \le\ C_n \exp \Big(-\frac{d_i^2}{8\sigma^2}\Big),
\]
for a constant $C_n$ depending only on $n$. Therefore
\[
\begin{aligned}
Q(A_i,A_i^c)
&=\int_{A_i} P(\theta,A_i^c) \pi(d\theta)\\
&\le \int_{C_i} P(\theta,A_i^c) \pi(d\theta)+\int_{S_i} \pi(d\theta)
\ \le\ \pi(A_i)\Big[C_n e^{-d_i^2/(8\sigma^2)}+c_i e^{-\nu_i d_i/2}\Big].
\end{aligned}
\]
Thus 
$\mathbf h_P(A_i):=Q(A_i,A_i^c)/\pi(A_i)\le C_n e^{-d_i^2/(8\sigma^2)}+c_i e^{-\nu_i d_i/2}$.
Pick $i$ with $d_i=d_*$. Then 
$\mathbf h_P\le \mathbf h_P(A_i)$ and Cheeger’s inequality  (Lemma \ref{lemma: cheeger}) gives
\[
\gamma(P)\ \le\ 2 \mathbf h_P
\ \le\ 2 C_n e^{-d_*^2/(8\sigma^2)} + 2 c_{\max} e^{-\nu_{\min} d_*/2},
\]
which implies the stated bound.

\medskip
\textit{Part (2): IA–RWM $\bar P=\tfrac12(PT)+\tfrac12(TP)$.}
Apply the state–decomposition bound (Lemma \ref{lemma: state_decomp}) to $\bar P$:
\[
\gamma(\bar P)\ \ge\ \tfrac12 \gamma_H \min_{1\le i\le m}\gamma_i,
\]
where $\gamma_H$ is the spectral gap of the aggregated chain $\bar P_H$ on $\{1,\dots,m\}$ with stationary weights $w_i=\pi(A_i)$, and $\gamma_i$ is the spectral gap of the restriction of $\bar P$ to $A_i$.

Fix $i\ne j$ and $\theta\in A_i^\circ$. By (TEL),
$T(\theta,A_j^{\mathrm{int}})\ge \varepsilon_2$.
For $z\in A_j^{\mathrm{int}}$, choose any $\rho\in(0,\delta]$ such that $B(z,\rho)\subset A_j$.
The Gaussian proposal places mass $\Pr(\|Z\|\le \rho)>0$ there, and on $A_j^{\mathrm{int}}$
the $L_{\max}$–smoothness implies, for $u\in B(z,\rho)$,
\[
\frac{\pi(u)}{\pi(z)}\ =\ e^{-(U(u)-U(z))}\ \ge\ e^{-\frac12 L_{\max} \rho^2}.
\]
Hence
\[
\inf_{z\in A_j^{\mathrm{int}}} P(z,A_j)\ \ge\ p_s\ :=\   \Pr(\|Z\|\le \rho) e^{-\frac12 L_{\max}\rho^2}\ >0.
\]
Therefore $(PT)(\theta,A_j)\ge \varepsilon_2 p_s$ and, since $\bar P=\tfrac12(PT)+\tfrac12(TP)$,
\[
\bar P(\theta,A_j)\ \ge\ \tfrac12 \varepsilon_2 p_s.
\]
Integrating over $\theta\in A_i$ and using $\pi_i(A_i^\circ)\ge \varepsilon_1$,
\[
\bar P_H(i,j)
=\frac{1}{\pi(A_i)}\int_{A_i}\bar P(\theta,A_j) \pi(d\theta)
\ \ge\ \tfrac12 \varepsilon_1\varepsilon_2 p_s,
\]
and a standard comparison with the complete–graph random walk yields
\[
\gamma_H\ \ge\ \tfrac12 \min \Big\{ 1,\ \tfrac12 m \varepsilon_1\varepsilon_2 p_s \Big\}.
\]

Consider the restriction of $\bar P$ to a fixed $A_i$. Using (TEL), for any $\theta\in A_i^\circ$,
\[
\bar P(\theta,\cdot)\ \ge\ \tfrac12  (TP)(\theta,\cdot)\ \ge\ \tfrac12 \int_{A_i^{\mathrm{int}}} T(\theta,dz) P(z,\cdot).
\]
Fix $\rho\in(0,\delta]$ and choose an arbitrary ball $B_i\subset A_i^{\mathrm{int}}$ of radius $\rho$.  
As above, for any $z\in A_i^{\mathrm{int}}$ the Gaussian RWM step has
\[
P(z,B_i)\ \ge\  \Pr(\|Z\|\le \rho) e^{-\frac12 L_{\max}\rho^2}\ \cdot\ \frac{\lambda(B_i)}{\lambda(B(0,\rho))} \ =:\ a_0\ >0,
\]
where $\lambda$ is Lebesgue measure (the last factor is the conditional probability that a proposal inside $B(z,\rho)$ lands in the fixed $B_i$, which is positive by overlap of balls of the same radius in a connected interior). Consequently, for all $\theta\in A_i^\circ$,
\[
\bar P(\theta,\cdot)\ \ge\ \tfrac12 \varepsilon_0 a_0\ \cdot\ \mathrm{Unif}(B_i)(\cdot)
\ =:\ \eta_*  m_i(\cdot),
\]
with $\eta_*=\tfrac12 \varepsilon_0 a_0>0$ and $m_i$ the uniform law on $B_i$. 
This one–step Doeblin minorization on a set of positive $\pi_i$–mass implies a strictly positive spectral gap for the restriction (e.g., by standard minorization $\Rightarrow$ uniform ergodicity on the small set and aperiodicity), and the elementary bound
\[
\gamma_i\ \ge\ 1-\sqrt{1-\eta_*}\ \ge\ \tfrac12 \eta_*,
\]
using $1-\sqrt{1-x}\ge x/2$ for $x\in[0,1]$. Taking the minimum over $i$ gives the displayed lower bound for $\min_i\gamma_i$.

Combining the inter– and within–mode bounds in the decomposition inequality yields the stated $c_0>0$, with dependence only on $(\varepsilon_1,\varepsilon_2,\varepsilon_0,\delta,n,\sigma,L_{\max},m)$ and, in particular, independent of $\{d_i\}$.
\end{proof}

\begin{proposition}
\label{prop:local_unid_flatY_bar}
Let the parameter space be
\[
\Theta_D = X \times Y_D \subset \mathbb{R}^{m+r},\qquad d=m+r,
\]
where \(X\subset\mathbb{R}^m\) is bounded and convex with nonempty interior, and
\[
Y_D = D\cdot G_0,\qquad G_0\subset\mathbb{R}^r,
\]
with \(G_0\) bounded and satisfying the tube condition: there exists \(C_G<\infty\) such that for all \(\eta\in(0,1]\),
\(|\{y\in G_0:\operatorname{dist}(y,\partial G_0)\le \eta\}|\le C_G \eta\).
Then \(|Y_D|=|G_0| D^r\).

Let the target density factor as
\(\pi_D(\theta_x,\theta_y) = p(\theta_x)  u_D(\theta_y)\),
where \(u_D\) is the uniform density on \(Y_D\) and \(p\) is continuous on \(X\) with
\[
0<p_{\min}\le p(\theta_x)\le p_{\max}<\infty\qquad \forall \theta_x\in X.
\]

Let the Random Walk Metropolis proposal be translation–invariant and symmetric,
\[
q(\theta,\theta') = g(\theta'-\theta),
\]
where \(g\) satisfies: (i) there exist \(\delta>0\) and \(c_g>0\) such that \(g(z)\ge c_g\) for all \(\|z\|\le \delta\); and (ii) with \(\overline G(t):=\int_{\|z\|>t} g(z) dz\) one has
\[
\int_0^{\infty}\overline G(t) dt \le T_g < \infty.\footnote{These conditions hold, for example, for Gaussian proposals and for proposals with exponentially decaying tails, and also for compact-support proposals.}
\]

Consider two Markov kernels. First, \(P\) denotes the local RWM on \(\Theta_D\) using proposal \(g\) and standard Metropolis–Hastings acceptance. Second, define the teleport kernel \(T\) by
\[
T\big((\theta_x,\theta_y), (\theta_x',\theta_y')\big)\ =\ \mathbf 1_{\{\theta_x'=\theta_x\}} \frac{\mathbf 1_{\{\theta_y'\in Y_D\}}}{|Y_D|},
\]
and set the IA–RWM kernel
\[
\bar P\ :=\ \tfrac12 (PT ) + \tfrac12 (TP).
\]

Then:
\begin{enumerate}
\item There exists a constant \(C_1<\infty\), independent of \(D\), such that the spectral gap of \(P\) satisfies
\[
\gamma(P)\ \le\ \frac{C_1}{D}.
\]
In particular, \(\gamma(P)\to 0\) at least at rate \(D^{-1}\) as \(D\to\infty\).
\item There exist \(n\in\mathbb{N}\), \(\varepsilon_0>0\), and a probability measure \(\nu\) on \(\Theta_D\) (all independent of \(D\)) such that, for all sufficiently large \(D\) and all \(\theta\in\Theta_D\),
\[
\bar P^{ n}(\theta,\cdot)\ \ge\ \varepsilon_0  \nu(\cdot).
\]
Hence \(\bar P\) is uniformly ergodic for all large \(D\), and its spectral gap is bounded below by a positive constant independent of \(D\).
\end{enumerate}
\end{proposition}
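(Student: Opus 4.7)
The plan is to prove Part 1 via a Cheeger (conductance) bound with an equatorial cut in the $\theta_y$ coordinates, and Part 2 via a Doeblin minorization of $\bar P^{N+1}$ whose constants do not depend on $D$. Throughout I will use that $\pi_D$ factors as $p\otimes u_D$ and that $T$ resamples $\theta_y$ uniformly on $Y_D$, leaving $\theta_x$ fixed.

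For Part 1, take $S_c := X \times (Y_D \cap \{y_1 \le c\})$ and choose $c$ so $\pi_D(S_c) = 1/2$. Transitions from $S_c$ to $S_c^c$ require $z_1 > c - \theta_{y,1}$, so with $\overline F_{z_1}(s) := \Pr(z_1>s)$,
\[
Q_P(S_c, S_c^c) \le \int_{S_c} \overline F_{z_1}(c-\theta_{y,1})\,\pi_D(d\theta) = \frac{1}{|Y_D|}\int_0^\infty \overline F_{z_1}(s)\, A(c-s)\,ds,
\]
where $A(\cdot)$ denotes the cross-sectional area of $Y_D$ at height $\cdot$. Since $G_0$ is bounded, $A(\cdot)\le K D^{r-1}$; since $\int_0^\infty \overline F_{z_1}(s)\,ds \le \mathbb{E}\|z\| \le \int_0^\infty \overline G(t)\,dt \le T_g$, the right-hand side is at most $K T_g/(|G_0|D)$. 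Cheeger's inequality (Lemma \ref{lemma: cheeger}) then gives $\gamma(P) \le 2 h_P \le C_1/D$.

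For Part 2, the key structural fact is that $TP$ has a product form: because $T$ resamples $\theta_y$ uniformly and independently of everything,
\[
(TP)^n(\theta, A_x\times A_y) = \bigl[P_x^{(\theta_y)}\,\tilde P_x^{\,n-1}\bigr](\theta_x, A_x)\,\frac{|A_y\cap Y_D|}{|Y_D|},\qquad n\ge 1,
\]
where $P_x^{(\theta_y)}(\theta_x,A_x):=P((\theta_x,\theta_y),A_x\times Y_D)$ and $\tilde P_x(\theta_x,A_x):=\int_{Y_D}P_x^{(\theta_y)}(\theta_x,A_x)\,d\theta_y/|Y_D|$. The kernel $\tilde P_x$ is $p$-invariant on $X$ (marginalizing the $\pi_D$-invariant $P$ against the $\theta_y$-conditional of $\pi_D$). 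Its only discrepancy from the standard RWM kernel $P_x^{(0)}$ on $X$ with proposal $g_x(z_x):=\int g(z_x,z_y)\,dz_y$ and target $p$ is the boundary factor $\beta(z_y):=|Y_D\cap(Y_D-z_y)|/|Y_D|$; the tube hypothesis on $G_0$ together with $\mathbb{E}\|z_y\|\le T_g$ gives $1-\mathbb{E}_g\beta(z_y) \le C_G T_g/(|G_0|D)$, and hence $\|\tilde P_x(\theta_x,\cdot)-P_x^{(0)}(\theta_x,\cdot)\|_{\mathrm{TV}} \le C/D$ uniformly in $\theta_x$.

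Since $X$ is bounded convex, $p$ is bounded above and below, and $g_x$ is bounded below on a neighborhood of zero (inherited from $g\ge c_g$ on $B(0,\delta)$), the classical theory of RWM on compact domains yields $N\in\mathbb N$ and $\varepsilon_1>0$ with $(P_x^{(0)})^N(\theta_x,\cdot)\ge \varepsilon_1\,p(\cdot)$ uniformly in $\theta_x$ (e.g., after enough steps the $N$-step density is uniformly bounded below on $X$). Telescoping the perturbation, $\|\tilde P_x^{\,N}-(P_x^{(0)})^N\|_{\mathrm{TV}}\le CN/D$, so for $D$ large enough $\tilde P_x^{\,N}(\theta_x,\cdot)\ge (\varepsilon_1/2)\,p(\cdot)$. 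Substituting into the product form and using $P_x^{(\theta_y)}(\theta_x, X)=1$ gives $(TP)^{N+1}(\theta,A)\ge (\varepsilon_1/2)\,\pi_D(A)$ for all $\theta$ and measurable $A\subset\Theta_D$; since $\bar P^{N+1}\ge 2^{-(N+1)}(TP)^{N+1}$, I obtain $\bar P^{N+1}(\theta,\cdot)\ge \varepsilon_0\,\pi_D(\cdot)$ with $\varepsilon_0:=\varepsilon_1 2^{-(N+2)}$ independent of $D$, hence uniform ergodicity and the $L_2(\pi_D)$ spectral-gap bound $\gamma(\bar P)\ge 1-(1-\varepsilon_0)^{1/(N+1)}>0$ uniform in $D$. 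The main technical obstacle is securing a Doeblin minorization $(P_x^{(0)})^N\ge \varepsilon_1\,p$ \emph{by the invariant measure itself} (rather than by a generic small-set measure) and controlling the $O(1/D)$ TV perturbation over $N$ iterations so that $\varepsilon_0$ remains strictly positive; both are standard for RWM on bounded convex domains but require that all constants be tracked uniformly in $D$, which in turn hinges on the $D$-independent $O(1/D)$ estimate for $1-\beta$ from the tube condition.
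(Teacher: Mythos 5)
Your Part 1 is correct and is essentially the paper's argument: an equatorial cut in the first $\theta_y$-coordinate, a cross-sectional/tail estimate giving flux of order $D^{-1}$, and Cheeger's inequality. The paper splits the flux into a $\delta$-tube term and a long-jump term rather than writing a single convolution integral, but the two computations are interchangeable. Your Part 2 also has an attractive architecture that differs from the paper's: you exploit the exact product form of $(TP)^n$ to reduce everything to a marginal chain $\tilde P_x$ on the compact convex set $X$, whereas the paper chains balls directly in $\Theta_D$ toward a fixed small set $R_x\times Y_{\mathrm{int}}$ and minorizes by $\pi_D$ restricted to that set.

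However, there is a genuine gap in how you establish the Doeblin bound for $\tilde P_x^{\,N}$. You derive $\sup_{\theta_x}\|\tilde P_x(\theta_x,\cdot)-P_x^{(0)}(\theta_x,\cdot)\|_{\mathrm{TV}}\le C/D$, telescope to get $\|\tilde P_x^{\,N}-(P_x^{(0)})^N\|_{\mathrm{TV}}\le CN/D$, and then conclude $\tilde P_x^{\,N}(\theta_x,\cdot)\ge(\varepsilon_1/2)\,p(\cdot)$ for large $D$. That last inference is invalid: total-variation closeness does not transfer a minorization by a measure. From $(P_x^{(0)})^N(\theta_x,A)\ge\varepsilon_1 p(A)$ and the TV bound you only get $\tilde P_x^{\,N}(\theta_x,A)\ge\varepsilon_1 p(A)-CN/D$, which is vacuous for sets $A$ with $p(A)\le CN/(\varepsilon_1 D)$; a kernel can be TV-close to $(P_x^{(0)})^N$ and still assign zero mass to a small set, destroying the Doeblin constant entirely. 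The repair is local but necessary: run the ball-chaining minorization \emph{directly} on $\tilde P_x$ rather than on $P_x^{(0)}$. The off-diagonal density of $\tilde P_x$ at displacement $z_x$ is bounded below by $\alpha_0\int_{\|z_y\|\le\delta/2}\beta(z_y)\,g(z_x,z_y)\,dz_y$, and the tube condition gives $\beta(z_y)\ge 1-C_G\|z_y\|/(|G_0|D)\ge 1/2$ pointwise once $D\ge C_G\delta/|G_0|$; hence each chaining step retains a $D$-independent lower bound $\tfrac12 c_g\,\mathrm{vol}(B_r(0,\delta/2))\,\mathrm{vol}(B_m(0,\delta/8))\,\alpha_0$, and the $N$-step minorization by $\mathrm{Unif}_X$ (hence by $p$, up to $p_{\max}$) follows exactly as in the unperturbed case. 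With that substitution your product-form argument closes correctly and yields the same uniform-in-$D$ spectral-gap bound as the paper.
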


\noindent \textbf{Proposition \ref{prop:local_unid_flatY_bar}} (Cylinder Space)
\begin{proof}
We prove (1) and (2) in turn.

\medskip
(1) \textit{Upper bound for \(P\).}
Fix a hyperplane cut along the growing block: choose \(t\in\mathbb{R}\) so that
\[
\big|\{y\in G_0:\ y_1\le t\}\big|  =  \tfrac12 |G_0|
\]
and define
\[
A \ :=\ X\times \{y\in Y_D:\ y_1\le Dt\}.
\]
Then \(|A|=\tfrac12 |X| |Y_D|\) and, because \(\pi_D(\theta)=p(\theta_x)/|Y_D|\), we have \(\pi_D(A)=1/2\).

For \(\eta>0\), define the \(\eta\)-tube around the cut in \(Y_D\):
\[
S_{D,\eta} \ :=\ \{y\in Y_D:\ |y_1-Dt|\le \eta\}.
\]
By the tube condition on \(G_0\) and scaling, the tube volume scales linearly in \(\eta\):
\[
|S_{D,\eta}| \ \le\  C_G  D^{r-1} \eta
\]
for some \(C_G\) depending only on \(G_0\). Hence
\[
\frac{|X| |S_{D,\eta}|}{|X| |Y_D|} \ =\ \frac{|S_{D,\eta}|}{|Y_D|} \ \le\ \frac{C_G}{|G_0|} \frac{\eta}{D}.
\]

Write \(Z\sim g\) and \(\overline G(t)=\Pr(\|Z\|>t)\). For \(\theta\in A\), decompose
\[
P(\theta,A^c)\ \le\ \Pr \Big(\text{cross the cut with }\|Z\|\le \delta\Big)\ +\ \Pr \Big(\|Z\|>\mathrm{dist}(y,\text{cut})\Big).
\]
Integrating over \(\theta\in A\) against \(\pi_D\), the first term is supported on \(X\times S_{D,\delta}\) and is bounded by
\[
\int_A \pi_D(\theta)  \mathbf{1}_{\{y\in S_{D,\delta}\}}  d\theta
 \ \le\  p_{\max}  \frac{|S_{D,\delta}|}{|Y_D|}
 \ \le\  p_{\max}  \frac{C_G}{|G_0|}  \frac{\delta}{D}.
\]
For the long-jump term, the distance \(S:=|y_1 - Dt|\) along the growing
coordinate has density under \(u_D\) bounded by \(C_a/D\), where \(C_a\) depends
only on \(G_0\). Hence, by Fubini and monotonicity of \(\overline G\),
\[
\int_A \pi_D(\theta)  \Pr \big(\|Z\|>S\big)  d\theta
 \ \le\  p_{\max}  \frac{C'}{D}\int_0^\infty \overline G(s)  ds
 \ \le\  p_{\max}  \frac{C' T_g}{D}.
\]
Combining these and using \(\pi_D(A)=1/2\), the conductance of \(A\) satisfies
\[
\Phi(A) \ =\ \frac{\int_A \pi_D(\theta)P(\theta,A^c)  d\theta}{\pi_D(A)}
 \ \le\  \frac{2p_{\max}}{D}\Big(\tfrac{C_G}{|G_0|}  \delta \ +\ C' T_g\Big)
 \ =\ \frac{C_\Phi}{D}.
\]
For reversible \(P\), Cheeger’s inequality yields \(\gamma(P)\le 2 \mathbf h(P)\le 2\Phi(A)\le 2C_\Phi/D\). This proves (1).

\medskip
(2) \textit{Doeblin minorization for \(\bar P^{ n}\), uniform in \(D\).}
Let
\[
Y_{\mathrm{int}}:=\{y\in Y_D:\ \mathrm{dist}(y,\partial Y_D)\ge \delta/2\}.
\]
By the tube estimate, there is a constant \(C_Y\) (depending only on \(G_0\)) such that
\(u_D(Y_{\mathrm{int}})\ge 1-C_Y \delta/D\).
Because \(X\) is bounded and convex with nonempty interior, pick a closed ball
\[
R_x:=\overline B(x^\ast,\rho)\subset \mathrm{int}(X),\qquad 0<\rho\le \delta/4.
\]
Define the product set and reference measure
\[
T\ :=\ R_x\times Y_{\mathrm{int}}\ \subset\ \Theta_D,
\qquad 
\nu(\cdot):=\frac{\pi_D(\cdot\cap T)}{\pi_D(T)}.
\]
Note that \(\pi_D(T)\ge \mu_x(R_x)  p_{\min} \big(1-C_Y \delta/D\big)\), so \(\pi_D(T)\)
is bounded below uniformly in \(D\) for all large \(D\).

Consider the kernel \((PT)\) (teleport in \(y\) to \(u_D\), then one local RWM step).
Fix any starting state \(\theta=(x,y)\in\Theta_D\). After applying \(T\) we are at
\((x,\widetilde y)\) with \(\widetilde y\sim u_D\). Conditional on \(\widetilde y\in Y_{\mathrm{int}}\),
connect \(x\) to \(x^\ast\) by a polygonal chain inside \(X\) with steps of length at most
\(\delta/4\) and cover it by closed \(d\)-balls of radius \(\delta/8\).
The number of steps is bounded by \(n_0\le C_{\mathrm{path}}\), where \(C_{\mathrm{path}}\) depends only on
\(\mathrm{diam}(X)\) and \(\delta\) (hence independent of \(D\)).
At each local RWM step, the proposal density satisfies \(g\ge c_g\) on \(B(0,\delta)\),
so the probability to land in the next \(\delta/8\)-ball is at least
\[
\underline q:=c_g  \mathrm{vol}\big(B_d(0,\delta/8)\big)>0,
\]
and the Metropolis acceptance probability is at least
\[
\alpha_0:=p_{\min}/p_{\max}>0,
\]
since the \(y\)-marginal is uniform and only \(p(\cdot)\) changes in the \(x\)-block.
Because \(\|\Delta y\|\le \delta/8\) and \(\widetilde y\in Y_{\mathrm{int}}\) (clearance \(\ge \delta/2\)),
the proposed state remains in \(Y_D\) at each step.
Therefore, in \(n:=C_{\mathrm{path}}\) applications of \(PT\),
\[
(PT)^{n}(\theta, T)\ \ge\ \big(1-C_Y \delta/D\big)\ \eta,
\qquad
\eta:=(\underline q \alpha_0)^{C_{\mathrm{path}}}>0,
\]
uniformly over \(\theta\) and \(D\).

Since \(\bar P=\tfrac12(PT)+\tfrac12(TP)\) is a mixture, for any \(n\in\mathbb N\)
\[
\bar P^{ n}\ \ge\ 2^{-n} (PT)^{n}\quad\text{(entrywise as kernels)}.
\]
Hence, for all large \(D\) and all \(\theta\in\Theta_D\),
\[
\bar P^{ n}(\theta, T)\ \ge\ 2^{-n} (PT)^{n}(\theta,T)
\ \ge\ 2^{-n} \big(1-C_Y \delta/D\big) \eta.
\]
Finally, for any measurable \(A\subset \Theta_D\),
\[
\bar P^{ n}(\theta,A)\ \ge\ \bar P^{ n}(\theta,T)\ \frac{\pi_D(A\cap T)}{\pi_D(T)}
\ \ge\ 2^{-n} \Big(1-C_Y \frac{\delta}{D}\Big) \eta\ \nu(A).
\]
Choose \(D_0\) so that \(1-C_Y \delta/D\ge 1/2\) for all \(D\ge D_0\), and set
\[
\varepsilon_0\ :=\ 2^{-n} \tfrac12 \eta\ >0.
\]
Then, for all \(D\ge D_0\) and all \(\theta\in\Theta_D\),
\[
\bar P^{ n}(\theta,\cdot)\ \ge\ \varepsilon_0  \nu(\cdot),
\]
which is a Doeblin minorization uniform in \(D\). Uniform ergodicity and a spectral gap bounded away from \(0\) (uniformly in \(D\)) follow from standard results for reversible Markov chains with a small set minorization (e.g. \citet[Chapter 16.2]{meyn2009markov}, \cite{roberts2004general}).
\end{proof}

\subsection{Proofs of Main Results}
\label{append:proof_main}

\textbf{Proposition \ref{prop:deltaMH}}
\begin{proof}
\label{append:proof_deltaMH}
Write $Q_P(S,S^c):=\int_S P(\theta,S^c) \pi(d\theta)$,  
$\Phi_P(S):=Q_P(S,S^c)/\pi(S)$, and $\mathbf h_P:= \inf_{\pi(S)\le 1/2}\Phi_P(S)$.  
By Cheeger’s inequality (Lemma~\ref{lemma: cheeger}),
\[
\gamma(P)\ \le\ 2 \mathbf h_P.
\]
Fix $i$ with $\pi(A_i)\le 1/2$ and let $S:=A_i$. If $\theta\in A_i$ satisfies $\operatorname{dist}(\theta,\partial A_i)>\delta$, then $B(\theta,\delta)\subset A_i$ and the uniform $\delta$--ball proposal never leaves $A_i$ in one step, hence $P(\theta,A_i^c)=0$. Therefore
\[
Q_P(A_i,A_i^c)
 = \int_{A_i} P(\theta,A_i^c) \pi(d\theta)
 \le \pi\bigl(\{\theta\in A_i:\  \operatorname{dist}(\theta,\partial A_i)\le \delta\}\bigr).
\]
Every $\theta\in A_i$ with $\operatorname{dist}(\theta,\partial A_i)\le \delta$ satisfies 
$\|\theta-\mu_i\|\ge d_i-\delta$, so by 
Assumption~\ref{assump:multimodality}\ref{assump:multimodality1},
\[
\pi\bigl(A_i\cap\{ \operatorname{dist}(\cdot,\partial A_i)\le \delta\}\bigr)
 = w_i \pi_i\bigl(\|\theta-\mu_i\|\ge d_i-\delta\bigr)
 \le w_i c_i e^{-\nu_i(d_i-\delta)}.
\]
Hence $\Phi_P(A_i)\le c_i e^{-\nu_i(d_i-\delta)}$. Minimizing over $i$ with 
$\pi(A_i)\le 1/2$ gives $\mathbf h_P \le  c_{\max} e^{-\nu_{\min}(d_*-\delta)}$, and Cheeger’s inequality yields
\[
\gamma(P)\ \le\ 2 c_{\max}\exp\{-\nu_{\min}(d_*-\delta)\}.
\]

Now consider the reversible kernel $\bar P=\tfrac12(PT )+\tfrac12(TP)$.  
By the state-space decomposition lemma for reversible chains (Lemma~\ref{lemma: state_decomp}),
\[
\gamma(\bar P)\ \ge\ \tfrac12 \gamma_H \min_{1\le i\le m}\gamma_i,
\]
where $\gamma_i$ is the spectral gap of the restriction of $\bar P$ to $A_i$, and $\gamma_H$ is the spectral gap of the aggregated chain $\bar P_H$ on $\{1,\dots,m\}$ with stationary vector $w_i=\pi(A_i)$.
 
Assumption~\ref{assump:multimodality}\ref{assump:multimodality3} gives a Doeblin minorization for $P$ on $A_i$: there exist $n_0\in\mathbb N$, $\eta_0>0$, and a probability $m_i$ supported on $C_i^{\mathrm{int}}\subset A_i$ such that
\[
\inf_{\theta\in A_i} P^{ n_0}(\theta,\cdot)\ \ge\ \eta_0 m_i(\cdot).
\]
Because $\bar P=\tfrac12(PT)+\tfrac12(TP)$ includes the $PT$ move with probability $1/2$ at each step, the $n_0$–step restricted chain of $\bar P$ inherits a minorization with constant reduced by at most $2^{-n_0}$. Consequently,
\[
\gamma_i\ \ge\ 1-\bigl(1-2^{-n_0}\eta_0\bigr)^{1/n_0}\ \ge\ \frac{\eta_0}{2 n_0}\qquad(1\le i\le m),
\]
where we used $1-(1-x)^{1/n}\ge x/(2n)$ for $x\in(0,1]$.

For $i\neq j$ and $\theta\in A_i^\circ$,
Assumption~\ref{assump:multimodality}\ref{assump:multimodality2} gives $T(\theta,A_j^{\mathrm{int}})\ge \varepsilon_2$.
If $z\in A_j^{\mathrm{int}}$, then $B(z,\delta)\subset A_j$, so the subsequent $\delta$–ball MH step satisfies $P(z,A_j)=1$. Hence
\[
(PT )(\theta,A_j)\ \ge\ \varepsilon_2
\quad\Rightarrow\quad
\bar P(\theta,A_j)\ =\ \tfrac12(PT )(\theta,A_j)+\tfrac12(TP)(\theta,A_j)\ \ge\ \tfrac12 \varepsilon_2.
\]
Integrating over $\theta\in A_i$ and using $\pi_i(A_i^\circ)\ge \varepsilon_1$,
\[
\bar P_H(i,j)
=\frac{1}{\pi(A_i)}\int_{A_i} \bar P(\theta,A_j) \pi(d\theta)
\ \ge\ \frac{1}{\pi(A_i)}\int_{A_i^\circ}\tfrac12\varepsilon_2 \pi(d\theta)
=\tfrac12 \varepsilon_1\varepsilon_2.
\]
Thus all off–diagonal entries of $\bar P_H$ are bounded below by $\tfrac12 \varepsilon_1\varepsilon_2$, which implies
\[
\gamma_H\ \ge\ \tfrac12 \min \Big\{ 1,\ m \cdot \tfrac12 \varepsilon_1\varepsilon_2 \Big\}
\ =:\ c_H \big(\varepsilon_1,\tfrac{\varepsilon_2}{2},m\big)\ >0.
\]

Combining the bounds,
\[
\gamma(\bar P) \ \ge\ \tfrac12 \gamma_H \min_i \gamma_i
 \ \ge\ \tfrac12 c_H \big(\varepsilon_1,\tfrac{\varepsilon_2}{2},m\big)\cdot \frac{\eta_0}{2 n_0}
 \ =:\ c_0>0,
\]
where $c_0$ depends only on $(\varepsilon_1,\varepsilon_2,n_0,\eta_0,m)$ and is independent of the separations $d_i$.
\end{proof}

\textbf{Proposition \ref{prop:HMC}}
\begin{proof} 
By Cheeger’s inequality, $\gamma(P)\le 2 \mathbf h_P$, where $\mathbf h_P:=\inf_{S: \pi(S)\le 1/2}\frac{\int_S P(\theta,S^c)\pi(d\theta)}{\pi(S)}$.
Fix an index $i$ with $\pi(A_i)\le 1/2$. Write $d_i=\operatorname{dist}(\mu_i,\partial A_i)$ and, for some $R\in(0,d_i)$ to be chosen later, split
$$
C_R:=\{\theta\in A_i:\|\theta-\mu_i\|<R\},\qquad S_R:=A_i\setminus C_R.
$$

Then
$$
\int_{A_i}P(\theta,A_i^c) \pi(d\theta)
=\int_{C_R}P(\theta,A_i^c) \pi(d\theta)\ +\ \int_{S_R}P(\theta,A_i^c) \pi(d\theta)
=: I_{\text{core}}+I_{\text{shell}}.
$$

By the exponential tail decay,
$$
I_{\text{shell}}  \le  \pi(S_R)  \le  w_i c_i e^{-\nu_i R}
  \le  c_{\max} e^{-\nu_{\min} R} \pi(A_i).
$$

For $\theta\in C_R$, any move into $A_i^c$ must be at least $\Delta_i:=d_i-R$ in Euclidean norm to reach the boundary. Let $(\theta_k,p_k)$ denote the leapfrog path with step size $\eta$, and let $\theta_\ell$ be the proposal after $\ell$ steps. Standard leapfrog stability on a region with $L_s$–Lipschitz gradient implies there exists a constant $\kappa=\kappa(L_s,\ell,\eta)\ge 1$ such that

$$
\|\theta_\ell-\theta_0\|  \le \kappa \eta \sum_{k=0}^{\ell-1}\|p_{k+1/2}\|.
$$

Moreover, $p_{k+1/2}$ remains within a $K$-factor of $p_0$ in norm (depending on $L_s,\ell,\eta$). Consequently, there exists $a=a(L_s,\ell,\eta)\in(0,\infty)$ such that
$$
\|\theta_\ell-\theta_0\|  \le  a \ell\eta \|p_0\|.
$$

Therefore, to achieve $\|\theta_\ell-\theta_0\|\ge \Delta_i$ it is necessary that $\|p_0\|\ge \Delta_i/(a \ell\eta)$. Since $p_0\sim\mathcal N(0,\sigma^2 I_n)$,
$$
 P\Big(\|p_0\|\ge \frac{\Delta_i}{a \ell\eta}\Big)
  \le  C_1 \exp \Big(- \frac{c_1 \Delta_i^2}{\sigma^2(\ell\eta)^2}\Big),
$$

for some $C_1,c_1>0$ depending only on $n,a$. The Metropolis acceptance is $\le 1$, so
$$
P(\theta,A_i^c)  \le  C_1 \exp \Big(- \frac{c_1 (d_i-R)^2}{\sigma^2(\ell\eta)^2}\Big),\qquad \theta\in C_R.
$$

Integrating over $C_R$ gives
$$
I_{\text{core}}  \le  C_1 \exp \Big(- \frac{c_1 (d_i-R)^2}{\sigma^2(\ell\eta)^2}\Big) \pi(A_i).
$$
Combining the two bounds,
$$
\frac{\int_{A_i}P(\theta,A_i^c) \pi(d\theta)}{\pi(A_i)}
  \le 
c_{\max} e^{-\nu_{\min} R}  +  C_1 \exp \Big(- \frac{c_1 (d_i-R)^2}{\sigma^2(\ell\eta)^2}\Big).
$$

Choose $R=d_i/2$. Then

$$
\mathbf h_P(A_i)
  \le 
c_{\max} \exp \Big(-\tfrac12 \nu_{\min} d_i\Big)
\ +\
C_1 \exp \Big(- \frac{c_1 d_i^2}{4 \sigma^2(\ell\eta)^2}\Big).
$$

Since $\mathbf h_P\le \mathbf h_P(A_i)$ for the minimizing $i$ and $d_i\ge d_*$,

$$
\mathbf h_P
  \le 
C \exp \Big(- \min\Big\{\tfrac12 \nu_{\min} d_*,\ \frac{c d_*^2}{\sigma^2(\ell\eta)^2}\Big\}\Big),
$$

for suitable $C,c>0$. Cheeger’s inequality yields the stated spectral gap bound.
\end{proof} 

\noindent \textbf{Proposition \ref{prop:manifold}}
\begin{proof}
(1) Fix $\delta\in(0,\varepsilon_{\min}]$. 
For each $u$, let $\Sigma(u)\subset\mathcal F(u)$ be the separator from Assumption \ref{assump:local_nonid}\ref{assump:local_nonid3}, and write $\mathcal F(u)^\pm$ for the two sides. 
Set $S:=\bigcup_{u\in\Phi}\mathcal F(u)^-$. 
By the balance in Assumption \ref{assump:local_nonid}\ref{assump:local_nonid3} and the factorization \ref{assump:local_nonid4}, $0<\pi(S)<1$.

For $\theta\in\Theta$, let $s(\theta):=d_{\mathcal F}(\theta,\Sigma(\phi(\theta)))$ be the intrinsic distance to the cut along the fiber. 
By the uniform coordinates in Assumption \ref{assump:local_nonid}\ref{assump:local_nonid5}, there exists $a\in(0,\infty)$ such that if $\|z\|\le a s(\theta)$ then the local move $\theta\mapsto\theta+z$ cannot cross $\Sigma(\phi(\theta))$ inside the fiber; equivalently,
\[
P(\theta,S^c)\ \le\ \overline G \big(a s(\theta)\big),
\qquad \theta\in\mathcal F(\phi(\theta))^-,
\]
where $\overline G(t)=\int_{\|z\|>t}g(z) dz$. 
Average over the minus side with respect to $w_u \mu_u$ and split into the intrinsic $\delta$–collar $\mathcal N_\delta(\Sigma(u)):=\{\theta:\ s(\theta)<\delta\}$ and its complement:
\[
\int_{\mathcal F(u)^-}  P(\theta,S^c)  w_u(\theta)  \mu_u(d\theta)
\ \le\ w_{\max} \mu_u \big(\mathcal N_\delta(\Sigma(u))\cap\mathcal F(u)^-\big)
 + \overline G(a \delta)  w_{\max}  \mu_u \big(\mathcal F(u)^-\setminus\mathcal N_\delta(\Sigma(u))\big).
\]
By the tubular neighborhood in Assumption \ref{assump:local_nonid}\ref{assump:local_nonid3} and the uniform charts/Jacobian bounds in \ref{assump:local_nonid5}, there exists $C_1<\infty$ such that
\[
\operatorname{Vol}_{\mathcal F} \big(\mathcal N_\delta(\Sigma(u))\big)\ \le\ C_1 \delta  \operatorname{Vol}_{\mathcal F} \big(\Sigma(u)\big),
\qquad 0<\delta\le\varepsilon_{\min}.
\]
Since $\mu_u$ is the normalized surface measure on $\mathcal F(u)$ (Assumption \ref{assump:local_nonid}\ref{assump:local_nonid4}),
\[
\mu_u \big(\mathcal N_\delta(\Sigma(u))\cap\mathcal F(u)^-\big)
\ \le\ C_1 \delta\ \frac{\operatorname{Vol}_{\mathcal F}(\Sigma(u))}{\operatorname{Vol}_{\mathcal F}(\mathcal F(u))}.
\]
Again by Assumption \ref{assump:local_nonid}\ref{assump:local_nonid5}, the normal coordinate spans length comparable to $D(u)$ across the fiber; a standard tube/coarea estimate yields a constant $C_2<\infty$ with
\[
\frac{\operatorname{Vol}_{\mathcal F}(\Sigma(u))}{\operatorname{Vol}_{\mathcal F}(\mathcal F(u))}\ \le\ \frac{C_2}{D(u)}.
\]
Combining,
\[
\mu_u \big(\mathcal N_\delta(\Sigma(u))\cap\mathcal F(u)^-\big)\ \le\ \frac{C_1C_2}{D(u)}  \delta.
\]
For the complement, $\mu_u(\mathcal F(u)^-\setminus \mathcal N_\delta)\le 1$. 
Hence there exists $C_3<\infty$ (absorbing $\overline G(a\delta)$ and $w_{\max}$) such that
\[
\int_{\mathcal F(u)^-}  P(\theta,S^c)  w_u(\theta)  \mu_u(d\theta)\ \le\ \frac{C_3}{D(u)}.
\]
Averaging over $u$ with respect to $\pi$ and recalling $\pi(S)\in(0,1)$,
\[
h_P\ :=\ \inf_{A:0<\pi(A)\le 1/2}\ \frac{1}{\pi(A)}\int_A P(\theta,A^c) \pi(d\theta)
\ \le\ \frac{2}{\pi(S)}\int_S P(\theta,S^c) \pi(d\theta)
\ \le\ \frac{C_4}{D_{\max}},
\]
for a constant $C_4<\infty$ independent of $D_{\max}$. 
By Cheeger’s inequality for reversible $P$, $\gamma(P)\le 2h_P\le C/D_{\max}$, proving (1).

\medskip
(2) Fix a measurable $U_0\subset\Phi$ with positive Lebesgue measure contained in a compact subset of $\Phi$, and pick $\rho\in(0,\delta/4]$ from Assumption \ref{assump:local_nonid}\ref{assump:local_nonid5}. 
Choose $\theta^\dagger\in\Theta$ with $\phi(\theta^\dagger)\in U_0$, and define
\[
S:=\overline{B}\big(\theta^\dagger,\rho/8\big)\subset \phi^{-1}(U_0),
\qquad 
\nu(\cdot):=\frac{\operatorname{vol}(\cdot\cap S)}{\operatorname{vol}(S)}.
\]
By the small–ball condition in \ref{assump:local_nonid6} and the acceptance bound 
\[
\alpha_0:=\frac{\underline f  w_{\min}}{\overline f  w_{\max}}
\]
from \ref{assump:local_nonid4}, there is
\[
\kappa_S:=c_g  \operatorname{vol} \big(B_d(0,\rho/8)\big)  \alpha_0>0
\]
such that, for any $\zeta\in S$ and measurable $A\subset\Theta$,
\begin{equation}
\label{eqn:minorization_manifold}
P(\zeta,A)\ \ge\ \kappa_S  \nu(A).
\end{equation}

From any starting $\theta$, connect $\phi(\theta)$ to $\phi(\theta^\dagger)$ by a straight segment in $\Phi$ and cover it by coordinate balls (from \ref{assump:local_nonid5}) with center spacing at most $\rho/4$. Let $n_0$ be the number of balls (depending only on $\rho$ and a compact diameter for $U_0$). At each local RWM step, the probability to reach the next ball is at least 
\[
q_0:=c_g  \operatorname{vol} \big(B_d(0,\rho/4)\big),
\]
and the Metropolis acceptance is at least $\alpha_0$. Hence after $n_0$ applications of the pair $(T,P)$, i.e., for $(PT )^{n_0}$, we have
\[
(PT )^{n_0}\big(\theta,\ \phi^{-1}(U_0)\big)\ \ge\ \eta_0,
\qquad \eta_0:=(q_0 \alpha_0)^{n_0}>0,
\]
since each $T$ refresh leaves $u=\phi(\cdot)$ unchanged.

Apply one more pair $(T,P)$. The refresh $T$ redistributes along the current fiber $\mathcal F(u)$ according to $w_u$; by the bounds in \ref{assump:local_nonid4} and the uniform chart radius $\rho$ in \ref{assump:local_nonid5}, there exists $\kappa_T>0$ such that, whenever $u\in U_0$,
\[
T \left(\cdot,\ B_{\mathcal F}(\theta^\dagger,\rho/8)\cap \mathcal F(u)\right)
\ \ge\ \frac{w_{\min}}{w_{\max}}\ \mu_u  \left(B_{\mathcal F}(\theta^\dagger,\rho/8)\cap \mathcal F(u)\right)
\ \ge\ \kappa_T.
\]
From any point in $S$, the subsequent $P$–step satisfies \eqref{eqn:minorization_manifold}. Therefore, with $n:=n_0+1$,
\[
(PT )^{n}(\theta,A)\ \ge\ \eta_0  \kappa_T  \kappa_S  \nu(A).
\]

Finally, pass from $(PT )$ to the mixture $\bar P$: since $\bar P=\tfrac12(PT )+\tfrac12(TP)$ is a mixture, for any $n\in\mathbb N$,
\[
\bar P^{ n}\ \ge\ 2^{-n} (PT )^{n}\quad\text{(entrywise as kernels)}.
\]
Hence,
\[
\bar P^{ n}(\theta,A)\ \ge\ 2^{-n} (PT )^{n}(\theta,A)
\ \ge\ \underbrace{2^{-n} \eta_0 \kappa_T \kappa_S}_{\displaystyle \varepsilon_0}  \nu(A).
\]
Since $\nu$ is supported on the fixed set $S\subset T_0:=\phi^{-1}(U_0)$ and $\pi(T_0)>0$, this gives the claimed Doeblin minorization. Uniform ergodicity follows, and for the reversible $\bar P$ we obtain
\[
\gamma(\bar P)\ \ge\ 1-\big(1-\varepsilon_0\big)^{1/n},
\]
which is strictly positive and independent of $D_{\max}$.
\end{proof}

\section{Example 1 Details}

Consider the state space $\{(0,0),(0,1),(1,0),(1,1)\}$ with target
\[
\pi(0,0)=\pi(1,1)=a,\qquad
\pi(0,1)=\pi(1,0)=\tfrac{1-2a}{2},\qquad a\in(0,\tfrac12).
\]
Let $P_1$ update $\theta_1\mid\theta_2$ and $P_2$ update $\theta_2\mid\theta_1$:
\[
P_1=
\begin{pmatrix}
2a&0&1-2a&0\\
0&1-2a&0&2a\\
2a&0&1-2a&0\\
0&1-2a&0&2a
\end{pmatrix},
\qquad
P_2=
\begin{pmatrix}
2a&1-2a&0&0\\
2a&1-2a&0&0\\
0&0&1-2a&2a\\
0&0&1-2a&2a
\end{pmatrix}.
\]

The systematic Gibbs kernel, as illustrated in the main text, updates both coordinates each iteration is
\[
P_{\text{sys}}=P_1P_2.
\]
This kernel is not $\pi$–reversible, so the standard Dirichlet–form characterization of the spectral gap does not apply.  

The random–scan Gibbs kernel that updates a single, uniformly chosen coordinate is
\[
P_{\text{RS}}=\tfrac12 P_1+\tfrac12 P_2,
\]
which is $\pi$–reversible as a convex combination of $\pi$–reversible single–site updates.

Introduce the teleport kernel
\[
T=
\begin{pmatrix}
\frac12&0&0&\frac12\\
0&1&0&0\\
0&0&1&0\\
\frac12&0&0&\frac12
\end{pmatrix},
\]
which swaps $(0,0)$ and $(1,1)$ and leaves $(0,1)$ and $(1,0)$ fixed. $T$ is $\pi$–reversible.  
Given any base $P$, define the randomized–order composition
\[
\bar P=\tfrac12 (PT )+\tfrac12 (TP).
\]
We consider two variants: $\bar P_{\text{RS}}$ with $P=P_{\text{RS}}$  and $\bar P_{\text{sys}}$ with $P=P_{\text{sys}}$.

Figure~\ref{fig:toy_ex1_spectral_gap} plots the spectral gap $\gamma$ as a function of $a$ for the three kernels. The random–scan Gibbs gap (black) shrinks toward zero as $a\to\tfrac12$, reflecting poor movement between the two high–mass corners. Both randomized–order variants (blue for $\bar P_{\text{RS}}$, red dashed for $\bar P_{\text{sys}}$) substantially enlarge the gap, because the teleport step explicitly bridges the two modes. The reversible $\bar P_{\text{RS}}$ provides a clean, principled improvement, while $\bar P_{\text{sys}}$ also accelerates mixing in this toy example.

\begin{figure}[h!]
  \centering
  \includegraphics[width=0.66\textwidth]{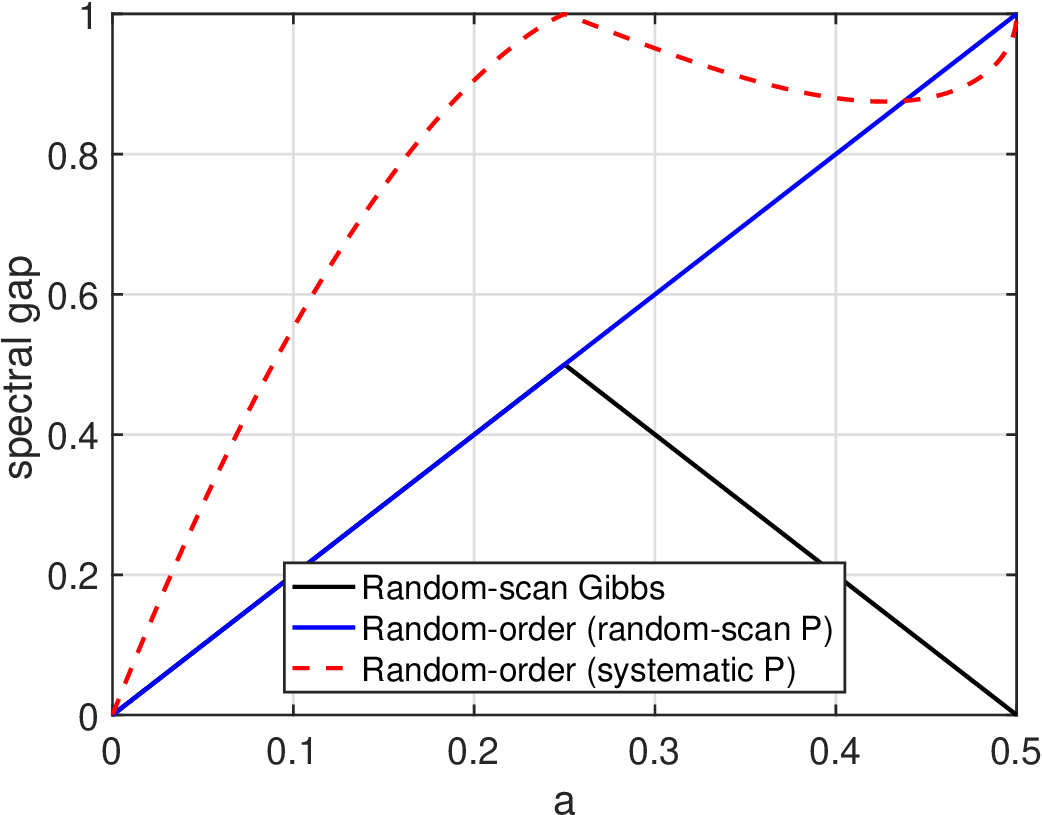}
  \caption{Spectral gap for Random–scan Gibbs, randomized–order with random–scan base, and randomized–order with systematic base.}
  \label{fig:toy_ex1_spectral_gap}
\end{figure}
 The proof can be found in the Appendix \ref{append:proof}.

\section{Empirical Implementation Details}
\label{append:empirical}

In our application, we estimate a structural VMA with unit–impact normalization on lag~1. The parameter vector stacks 
\(\Theta=\{\Theta_\ell\}_{\ell=0}^q\in\mathbb R^{n\times n\times(q+1)}\) and \(\log\sigma\in\mathbb R^n\), where \(n = 3\) and \(q = 16\). 
This yields a total of 153 free parameters under the unit–impact restriction.

Identification is most transparent in the frequency domain. Let 
\(H(z)=\sum_{\ell=0}^q \Theta_\ell z^\ell\), set \(\Psi(z)=H(z)\,\mathrm{diag}(\sigma)\), and write the spectral density as 
\[
  f_{yy}(\omega)=\Psi(e^{-i\omega}) \Psi(e^{-i\omega})^{ *}.
\]
Two parameterizations \(\theta=(\Theta,\sigma)\) and \(\theta'=(\Theta',\sigma')\) are observationally equivalent if and only if 
\(f_{yy}(\omega)= f'_{yy}(\omega)\) for all \(\omega\in[-\pi,\pi]\). A constructive characterization is:
\((\Theta',\sigma')\) lies in the identified set of \((\Theta,\sigma)\) if and only if there exist an orthogonal matrix 
\(Q\in\mathcal O(n)\) and a finite sequence of matrix Blaschke factors \(R(\gamma_k,z)\) such that
\[
\Psi'(z) = \Psi(z)\, Q\, R(\gamma_1,z)^{-1}\cdots R(\gamma_b,z)^{-1},
\qquad
R(\gamma,e^{-i\omega}) R(\gamma,e^{-i\omega})^{ *}=I_n\ \ \forall\ \omega,
\]
which preserves \(f_{yy}\) pointwise in frequency. Conversely, any two observationally equivalent parameterizations can be connected by such a finite sequence of orthogonal rotations and Blaschke flips; see \citet[Prop.~2]{plagborg2019bayesian}, building on \citet{lippi1994var}.

Initialization follows \citet{plagborg2019bayesian}. We compute the sample ACF up to lag~\(q\), obtain \((B,V)\) by the innovations algorithm, form \(\Psi(z)=B(z)V^{1/2}\), and map to \((\Theta,\sigma)\) under the unit–impact normalization. Among observationally equivalent representatives we select the one maximizing the prior density; when the prior is flat along \(K(\theta)\) this reduces to choosing an arbitrary representative. With an informative prior we optionally perform a short convex–combination sweep toward the prior mean and keep the posterior maximizer.

Sampling proceeds with a mixture identification–aware kernel as discussed in Appendix \ref{alg:mixtureeps}:
\[
\widetilde P_H \;=\; \varepsilon\, T\ +\ (1-\varepsilon)\, P,
\]
where \(P\) is a baseline local kernel (RWM or NUTS) and \(T(\theta,\cdot)\) is the conditional teleport on \(K(\theta)\) as in \eqref{eqn:teleport_kernel}. Since both \(P\) and \(T\) are \(\pi\)–reversible, the mixture \(\widetilde P_H\) is also \(\pi\)–reversible.

Direct draws from \(T\) are infeasible when the prior is not flat on \(K(\theta)\), so we implement \(T\) via a multiple–try Metropolis (MTM, see Appendix~\ref{alg:twostep_priorweighted}) move restricted to the observationally equivalent set. With probability \(\varepsilon=\tfrac{1}{501}\)\footnote{This corresponds to an average frequency of one teleportation for every 500 local updates, balancing the cost of global moves with mixing efficiency.} a teleport is attempted. Forward candidates are generated by Haar orthogonal rotations (including signed permutations), weighted by their prior density (the likelihood cancels on \(K(\theta)\)). One candidate is selected proportional to its weight. A reverse candidate set is drawn around the selected point, and the move is accepted using the standard MTM ratio (sum of backward weights over sum of forward weights). This yields an exact \(T\)–invariant refresh and hence a valid \(\pi\)–reversible mixture \(\widetilde P_H\).

As discussed in Section~\ref{subsec: identification_driven}, our teleportation kernel \(T\) only needs to be supported on a \(\pi\)–preserving partition of the state space, i.e., teleport moves can be restricted to any nontrivial subset $K'(\theta)$ of the identified set $K(\theta)$ for a valid identification–aware kernel. In our SVMA application, we restrict teleportation to Haar orthogonal rotations and do not implement Blaschke flips in the empirical sampler. While Blaschke flips are part of the full characterization of \(K(\theta)\), in high-dimensional SVMA they require repeated root–finding and re–normalization across lags and tend to push parameters toward the unit circle, which degrades numerical conditioning of the objective and its gradients. By contrast, Haar rotations alone already generate a rich family of observationally equivalent parameterizations and satisfy the theoretical conditions for a valid \(T\) on a subset \(K'(\theta)\subseteq K(\theta)\). Local moves in the baseline kernel \(P\) may still cross invertibility boundaries when supported by the posterior, but the identification–aware refresh focuses computational effort on numerically stable regions of the identified set.

For the baseline NUTS we use a consistent diagonal mass matrix \(M\) in both the leapfrog integrator and the no–U–turn stopping rule: we draw \(p\sim\mathcal N(0,M)\), set the kinetic energy to \(\tfrac12 p^\top M^{-1}p\), update \(\theta\leftarrow\theta+\varepsilon M^{-1}p\), and optionally adapt \(M\) using running variances with mild shrinkage and upper/lower caps. IA–RWM uses blocked Gaussian proposals for \((\Theta,\sigma)\) with Robbins–Monro scale adaptation. IA–NUTS augments NUTS with periodic teleports: after accepted refreshes we briefly re–tune the step size and blend a locally estimated diagonal mass back into \(M\).

\bibliographystyle{aer}
\bibliography{ref}

\end{document}